\documentclass{sig-alternate-2013}
\usepackage{url,xspace,cite,subfigure}
\usepackage{comment}
\usepackage{subfigure}

\def\full{1}        
\def\shownotes{0}   
\def\anon{0}        

\ifnum\full=1 \usepackage{mdwlist} \fi

\newtheorem{theorem}{Theorem}[section]
\newtheorem{definition}[theorem]{Definition}
\newtheorem{lemma}[theorem]{Lemma}

\newtheorem{obs}[theorem]{Observation}

\newtheorem{cor}[theorem]{Corollary}
%

%
\ifnum\shownotes=1
\newcommand{\authnote}[2]{{ $\ll$\textsf{\footnotesize #1 notes: #2}$\gg$}}
\else
\newcommand{\authnote}[2]{}
\fi


\providecommand{\first}{$1^{st}$\xspace}
\providecommand{\third}{$3^{rd}$\xspace}
\providecommand{\second}{$2^{nd}$\xspace}

\providecommand{\vs}{vs. }
\providecommand{\ie}{\emph{i.e.,} }
\providecommand{\eg}{\emph{e.g.,} }
\providecommand{\cf}{\emph{cf.,} }
\providecommand{\resp}{\emph{resp.,} }
\providecommand{\etal}{\emph{et al.}}   
\providecommand{\etc}{\emph{etc.}}      
\providecommand{\mypara}[1]{\smallskip\noindent\emph{#1} }
\providecommand{\myparab}[1]{\smallskip\noindent\textbf{#1} }

\providecommand{\LP}{\textbf{LP}\xspace}
\providecommand{\LP}[1]{\textbf{LP}\textbf{#1}\xspace}

\providecommand{\SP}{\textbf{SP}\xspace}
\providecommand{\SecP}{\textbf{SecP}\xspace}

\providecommand{\TB}{\textbf{TB}\xspace}
\providecommand{\Ex}{\textbf{Ex}\xspace}
\providecommand{\legit}{legitimate\xspace}
\providecommand{\attacked}{attacked\xspace}

\providecommand{\PR}{\textsf{PR}}
\providecommand{\BPR}{\textsf{BPR}}

\providecommand{\Nxt}{\textsf{Nxt}}
\providecommand{\BR}{\textsf{BR}}

\newfont{\mycrnotice}{ptmr8t at 7pt}
\newfont{\myconfname}{ptmri8t at 7pt}

\permission{Permission to make digital or hard copies of all or part of this work for personal or classroom use is granted without fee provided that copies are not made or distributed for profit or commercial advantage and that copies bear this notice and the full citation on the first page. Copyrights for components of this work owned by others than the author(s) must be honored. Abstracting with credit is permitted. To copy otherwise, or republish, to post on servers or to redistribute to lists, requires prior specific permission and/or a fee. Request permissions from permissions@acm.org.}
\conferenceinfo{SIGCOMM'13,}{August 12--16, 2013, Hong Kong, China. \\
{\mycrnotice{Copyright is held by the owner/author(s). Publication rights licensed to ACM.}}}
\CopyrightYear{2013}
\crdata{978-1-4503-2056-6/13/08}
\clubpenalty=10000
\widowpenalty = 10000

\title{BGP Security in Partial Deployment}
\subtitle{Is the Juice Worth the Squeeze?
\ifnum\full=1
\begin{large}  \\ Full version \ifnum\anon=0 from \today \fi \end{large}
\fi
}
\ifnum\anon=0
\numberofauthors{3}
\author{
\alignauthor Robert Lychev*\\
\affaddr{Georgia Tech \\ Altanta, GA, USA \\ \url{rlychev@cc.gatech.edu}} \\
\alignauthor Sharon Goldberg \\
\affaddr{Boston University \\ Boston, MA, USA \\ \url{goldbe@cs.bu.edu}} \\
\alignauthor Michael Schapira \\
\affaddr{Hebrew University \\ Jerusalem, Israel \\ \url{schapiram@huji.ac.il}} \\
}

\else
\author{[Paper \#68]}
\fi

\begin{document}

\maketitle

\begin{abstract}
As the rollout of secure route origin authentication with the RPKI slowly gains traction among network operators, there is a push to standardize secure path validation for BGP (\ie S*BGP: S-BGP, soBGP, BGPSEC, etc.). Origin authentication already does much to improve routing security.  Moreover, the transition to S*BGP is expected to be long and slow, with S*BGP coexisting in ``partial deployment'' alongside BGP for a long time.   We therefore use theoretical and experimental approach to study the security benefits provided by partially-deployed S*BGP, vis-a-vis those already provided by origin authentication.
Because routing policies have a profound impact on routing security, we use a survey of 100 network operators to find the policies that are likely to be most popular during partial S*BGP deployment. We find that S*BGP provides only meagre benefits over origin authentication when these popular policies are used.  We also study the security benefits of other routing policies, provide prescriptive guidelines for partially-deployed S*BGP, and show how interactions between S*BGP and BGP can introduce new vulnerabilities into the routing system.

\end{abstract}

\noindent \textbf{Categories and Subject Descriptors:}  C.2.2 [Computer-
Communication Networks]: Network Protocols

\noindent \textbf{Keywords:} security; routing; BGP;



\section{Introduction}\label{sec:intro}

Recent high-profile routing failures~\cite{pakistan,china,moratel,DEFCON} have highlighted major vulnerabilities in BGP, the Internet's interdomain routing protocol.  To remedy this, secure  origin authentication~\cite{pfx-validate-09,OrAuth,BGPsurvey} using the RPKI~\cite{rpki} is gaining traction among network operators, and there is now a push to standardize a path validation protocol (\ie S*BGP~\cite{SBGP,soBGP,BGPSEC}).  Origin authentication is relatively lightweight,  requiring neither changes to the BGP message structure nor online cryptographic computations. Meanwhile, path validation with S*BGP could require both~\cite{BGPSEC}.  The deployment of origin authentication is already a significant challenge~\cite{fccReport}; here we ask, is the deployment of S*BGP path validation worth the extra effort? (That is, is the juice worth the squeeze?)



To answer this question, we must contend with the fact that any deployment of S*BGP is likely to coexist with legacy insecure BGP for a long time. (IPv6 and DNSSEC, for example, have been in deployment since at least 1999 and 2007 respectively.) In a realistic \emph{partial deployment} scenario, an autonomous system (AS) that has deployed S*BGP will sometimes need to accept insecure routes sent via legacy BGP; otherwise, it would lose connectivity to the parts of the Internet that have not yet deployed S*BGP~\cite{BGPSEC}.
Most prior research has ignored this issue, either by assuming that ASes will never accept insecure routes~\cite{JenYannis,adoptability}, by studying only the \emph{full deployment} scenario where every AS has already deployed S*BGP~\cite{BGPattack,BGPsurvey}, or by focusing on creating incentives for ASes to adopt S*BGP in the first place~\cite{adopt,adoptability}.

We consider the security benefits provided by partially-deployed S*BGP vis-a-vis those already provided by origin authentication.
Fully-deployed origin authentication is lightweight and already does much to improve security, even against attacks it was not designed to prevent (\eg propagation of bogus AS-level paths)~\cite{BGPattack}.
We find that, given the routing policies that are likely to be most popular during partial deployment,  S*BGP can provide only meagre improvements to security over what is already possible with origin authentication; we find that other, less popular policies can sometimes provide tangible security improvements. (``Popular'' routing policies were found using a survey of 100 network operators~\cite{surveyEmail}.)   However, we also show that security improvements can come at a risk; complex interactions between BGP and S*BGP can introduce new instabilities and  vulnerabilities into the routing system.

\subsection{Security with partially-deployed S*BGP.}\label{sec:intro:sbgp}

With BGP, an AS learns AS-level paths to destination ASes (and their IP prefixes) via routing announcements from neighboring ASes; it then selects one path per destination by applying its local \emph{routing policies}.  Origin authentication ensures that the destination AS that announces a given IP prefix is really authorized to do so. S*BGP ensures that the AS-level paths learned actually exist in the network.

In S*BGP partial deployment, security will be profoundly affected by the routing policies used by individual ASes, the AS-level topology, and the set of ASes that are \emph{secure} (\ie have deployed S*BGP). Suppose a secure AS has a choice between a \emph{secure route} (learned via S*BGP) and an \emph{insecure route} (learned via legacy BGP) to the same destination.  While it seems natural that the AS should always prefer the secure route over the insecure route, a network operator must balance security against economic and performance concerns.  As such, a \emph{long} secure route through a \emph{costly} provider might be less desirable than a \emph{short} insecure route through a \emph{revenue-generating} customer. Indeed, the BGPSEC standard is careful to provide maximum flexibility, stating the relationship between an AS's routing policies and the security of a route ``is a matter of local policy''~\cite{BGPSEC}.

%

While this flexibility
is a prerequisite for assuring operators that S*BGP will not disrupt existing traffic engineering or network management
\ifnum\full=1
polices
\footnote{Practitioners commonly resist deployment of a new protocol because it ``breaks'' their networks; witness the zone enumeration issue in DNSSEC~\cite{NSECrfc} or the fact that IPv6 is sometimes disabled because it degrades DNS performance~\cite{happyEyeballs}.},
\else
policies,
\fi
 it can have dire consequences on security.
%
Attackers can exploit routing policies that prioritize economic and/or length considerations above security.  In a \emph{protocol downgrade attack}, for example, an attacker convinces a secure AS with a secure route to downgrade to a \emph{bogus} route sent via legacy BGP, simply because the bogus route is shorter, or less costly (Section~\ref{sec:pda}).

\subsection{Methodology \& paper roadmap.}\label{sec:intro:method}

\ifnum\full=0
This paper summarizes our results.  Extended analysis, robustness tests, and proofs are in the full version~\cite{full}.
\else
\fi

\myparab{Three routing models.} In Section~\ref{sec:model} we develop models for routing with partially-deployed S*BGP, based on classic models of AS business relationships and BGP~\cite{gr,ggr,gsw,hustonModel1,hustonModel12}.  Our \emph{security \first model} supposes that secure ASes \emph{always} prefer secure routes over insecure ones; while this is most natural from a security perspective, a survey of 100 network operators~\cite{surveyEmail} suggests that it is least popular in partial deployment.  In our \emph{security \second model}, a secure route is preferred only if no \emph{less-costly} insecure route is available. The survey confirms that our \emph{security \third model} is most popular in partial deployment~\cite{surveyEmail}; here a secure route is preferred only if there is no \emph{shorter or less-costly} insecure route.
\ifnum\full=0
This paper works within these models; the full version assesses robustness to assumptions made in these models.
\else
In Appendix~\ref{apx:robust:policies} we analyze the robustness of our results to assumptions made in these models.
\fi

\myparab{Threat model \& metric.} Sections~\ref{sec:threat}-\ref{sec:metric} introduce our threat model, and a metric to quantify security within this threat model; our metric measures the \emph{average} fraction of ASes using a legitimate route when a destination is attacked.

\myparab{Deployment invariants.} The vast number of choices for the set $S$ of ASes that adopt S*BGP makes evaluating security challenging. Section~\ref{sec:partitions} therefore presents our (arguably) most novel methodological contribution; a framework that bounds the \emph{maximum} improvements in security possible for each routing model, \emph{for any} deployment scenario $S$.

\myparab{Deployment scenarios.} How close do real S*BGP deployments $S$ come to these bounds? While a natural objective would be to determine the ``optimal'' deployment $S$, we prove that this is NP-hard.  Instead, Sections~\ref{sec:results:metric}-\ref{sec:wrapup} use simulations on empirical AS-level graphs to quantify security in scenarios suggested in the literature~\cite{FCCcommit,JenYannis,adopt,adoptability}, and determine root causes for security improvements (or lack thereof).

\myparab{Algorithms \& experimental robustness. } We designed parallel simulation algorithms to deal with the large space of parameters that we explore, \ie attackers, destinations, deployment scenarios $S$, and routing policies,
\ifnum\full=0
 (full version).
\else
 (Appendix~\ref{apx:algos}~and~\ref{apx:sim}).
\fi
We also controlled for empirical pitfalls, including
(a) variations in routing policies
\ifnum\full=1
(Appendix~\ref{apx:robust:policies})
\else
(full version)
\fi
(b) the fact that empirical AS-level graphs tend to miss many peering links at Internet eXchange Points (IXPs)~\cite{10lessons,ixp,ixpSIGCOMM12},
\ifnum\full=1
(Section~\ref{sec:policies}, Appendix~\ref{apx:ixp})
\else
(Section~\ref{sec:policies})
\fi
(c) a large fraction of the Internet's traffic originates at a few ASes~\cite{LabovitzSIGCOMM} (Sections~\ref{sec:policies},~\ref{sec:partitions:robust},~\ref{sec:metric:CPs},~\ref{sec:T1suck:metric}).   
While our analysis cannot predict exactly how individual ASes would react to routing attacks, we do report on strong aggregate trends.

\ifnum\full=1
\myparab{Proofs.} Proofs of our theorems are in Appendix~\ref{apx:algos}-\ref{apx:hardness_results}.
\fi

\subsection{Results.}\label{sec:intro:results}

\ifnum\full=1
Our simulations, empirically-validated examples, and theoretical analyses indicate the following:
\else
Our theoretical and experimental analyses indicate that:
\fi

\myparab{Downgrades are a harsh reality.}  We find that protocol downgrade attacks (Sections~\ref{sec:intro:sbgp},~\ref{sec:pda}) can be extremely effective; so effective, in fact, that they render deployments of S*BGP at large Tier 1 ISPs almost useless in the face of attacks (Sections~\ref{sec:T1suck}~and~\ref{sec:T1suck:metric}).

\myparab{New vulnerabilities. } We find that the interplay between topology and routing policies can cause some ASes to fall victim to attacks they would have avoided if S*BGP had \emph{not} been deployed.  Fortunately, these troubling phenomena occur less frequently than phenomena that protect ASes from attacks during partial deployment (Section~\ref{sec:wrapup}).

\myparab{New instabilities.}   We show that undesirable phenomena (BGP Wedgies~\cite{wedgies}) can occur if ASes prioritize security inconsistently (Section~\ref{sec:stability}).

\myparab{Prescriptive deployment guidelines.}  Other than suggesting that (1) ASes should prioritize security in the same way in order to avoid routing instabilities,  our results (2) confirm that deploying  lightweight \emph{simplex} S*BGP~\cite{BGPSEC,adopt} (instead of full-fledged S*BGP) at stub ASes at the edge of the Internet does not harm security (Section~\ref{sec:simplex}).  Moreover, while~\cite{adopt,JenYannis,adoptability} suggest that Tier 1s should be  early adopters of S*BGP, our results do not support this; instead, we suggest that (3) Tier 2 ISPs should be among the earliest adopters of S*BGP (Section~\ref{sec:T1suck},~\ref{sec:islands},~\ref{sec:T1suck:metric}).

\myparab{Is the juice worth the squeeze?} We use our metric to compare S*BGP in a partial deployment $S$ to the baseline scenario where no AS is secure (\ie $S=\emptyset$ and only origin authentication is in place).
We find that large partial deployments of S*BGP provide excellent protection against attacks when ASes use routing policies that prioritize security \first (Section~\ref{sec:islands}); however, \cite{surveyEmail} suggests that network operators are less likely to use these routing policies.  Meanwhile, the policies that operators most favor (\ie security~\third) provide only meagre improvements over origin authentication (Section~\ref{sec:upperLower}).
This is not very surprising, since S*BGP is designed to prevent path-shortening attacks and when security is \third, ASes prefer (possibly-bogus) short insecure routes over longer secure routes.

However, it is less clear what happens in security is \second, where route security is prioritized over route length.
Unfortunately, even when S*BGP is deployed at 50\% of ASes, 
the benefits obtained in the security \second model lag significantly behind those available when security is \first.  While some destinations can obtain tangible benefits when security is \second, for others (especially Tier 1s) the security \second model behaves much like the security \third model (Section~\ref{sec:bigDeps}). We could only find clear-cut evidence of strong overall improvement in security when ASes prioritize security \first.

\ifnum\full=1 \newpage \fi
\section{Security \& Routing Policies}\label{sec:model}

S*BGP allows an AS to validate the correctness of the AS-level path information it learns from its neighbors~\cite{BGPsurvey}. (S-BGP~\cite{SBGP} and BGPSEC~\cite{BGPSEC} validate that every AS on a path sent a routing announcement for that path; soBGP~\cite{soBGP} validates that all the edges in a path announcement physically exist in the AS-level topology.  As we shall see in Section~\ref{sec:threat}, our analysis applies to all these protocols.) However, for S*BGP to prevent routing attacks, validation of paths alone is not sufficient. ASes also need to use information from path validation to make their routing decisions.  We consider three alternatives for incorporating path validation into routing decisions, and analyze the security of each.


\subsection{Dilemma: Where to place security?}\label{sec:bgp-decision}

An AS that adopts S*BGP must be able to process and react to insecure routing information, so that it can still route to destination ASes that have not yet adopted S*BGP. The BGPSEC standard is such that a router only learns a path via BGPSEC if every AS on that path has adopted BGPSEC; otherwise, the path is learnt via legacy BGP. (The reasoning for this is in \cite{sriramBGPSECchoices} and Appendix A of~\cite{adopt}):

\myparab{Secure routes.} We call an AS that has adopted S*BGP a \emph{secure AS}, and a path learned via S*BGP (\ie a path where every AS is secure) a \emph{secure path} or \emph{secure route}; all other paths are called \emph{insecure}.

\smallskip\noindent
If a secure AS can learn both secure and insecure routes, what role should security play in route selection?
\ifnum\full=1
To blunt routing attacks, secure routes should be preferred over insecure routes. But how should \emph{expensive} or \emph{long} secure routes be ranked relative to \emph{revenue-generating} or \emph{short} insecure routes?
\fi

\subsection{S*BGP routing models.}\label{sec:policies}

\begin{table}
\begin{center}
\begin{scriptsize}
\begin{tabular}{|c|p{2.6in}|}
  \hline
  Tier 1 & 13 ASes with high customer degree \& no providers\\\hline
  Tier 2 & 100 top ASes by customer degree \& with providers \\\hline
  Tier 3 & Next 100 ASes by customer degree \& with providers\\\hline
  \ifnum\full=1
  CPs &  17 Content provider ASes listed in Figure~\ref{fig:t1_cp_sec_r_breakdown} \\\hline
  \else
  CPs & 17 Content providers: AS 15169, 8075, 20940, 22822, 32934, 15133, 16265, 16509, 2906, 23286, 40428, 714, 10310, 38365, 14907 13414, and 4837.\\\hline
  \fi
  Small CPs & Top 300 ASes by peering degree \\
            & (other than Tier 1, 2, 3, and CP) \\\hline
  Stubs-x &  ASes with peers but no customers\\
  \hline
  Stubs & ASes with no customers \& no peers \\
  \hline
  SMDG  & Remaining non-stub ASes \\
  \hline
\end{tabular}
\vspace{-3mm}\caption{Tiers.} 
\label{tab:tiers}
\vspace{-5mm}
\end{scriptsize}
\end{center}
\end{table}

While it is well known that BGP routing policies differ between ASes and are often kept private, we need a concrete model of ASes' routing policies so as to analyze and simulate their behaviors during attacks. The following models of routing with S*BGP are variations of the well-studied models from~\cite{gr,ggr,gsw,adopt,PaulF,hustonModel1,hustonModel12}.

\myparab{AS-level topology.} The AS-level topology is represented by an undirected graph $G=(V,E)$; the set of vertices $V$ represents ASes and the set of links (edges) $E$ represents direct BGP links between neighboring ASes. We will sometimes also refer to the ``tiers'' of ASes~\cite{amogh} in Table~\ref{tab:tiers}; the list of 17 content providers (CPs) in Table~\ref{tab:tiers}
\ifnum\full=1
(or see Table~\ref{tab:CP} and Figure~\ref{fig:t1_cp_sec_r_breakdown})
\fi
was culled from recent empirical work on interdomain traffic volumes~\cite{labovitzHyperGiant,labovitzPreso,LabovitzSIGCOMM,sandvine,alexa}.

\myparab{ASes' business relationships.} Each edge in $E$ is annotated with a business relationship: either (1) \emph{customer-to-provider}, where the customer purchases connectivity from its provider (our figures depict this with an arrow from customer to provider), or (2) \emph{peer-to-peer}, where two ASes transit each other's customer traffic for free (an undirected edge).

\myparab{Empirical AS topologies.} All simulations and examples described in this paper were run on the UCLA AS-level topology from 24 September 2012~\cite{cyclops}.
\ifnum\full=1
We preprocessed the graph by (1) renaming all 4-byte ASNs in more convenient way, and (2) recursively removing all ASes that had no providers that had low degree (and were not Tier 1 ISPS).  The resulting graph had 39056 ASes, 73442 customer-provider links and 62129 peer-to-peer links.
\fi
Because empirical AS graphs often miss many of peer-to-peer links in Internet eXchange Points (IXP)~\cite{10lessons,ixp,ixpSIGCOMM12}, we constructed a second graph where we augmented the UCLA graph with over 550K peer-to-peer edges between ASes listed as members of the same IXP
\ifnum\full=1
(on September 24, 2012)
\fi
on voluntary online sources (IXPs websites, EuroIX, Peering DB, Packet Clearing House, \etc).
\ifnum\full=1
Our list contained 332 IXPs and 10,835 mappings of member ASes to IXPs; after connecting \emph{every} pair of ASes that are present in the same IXP (and were not already connected in our original UCLA AS graph) with a peer-to-peer edge, our graph was augmented with 552933 extra peering links.
\fi
Because \emph{not} all ASes at an IXP peer with each other~\cite{ixpSIGCOMM12}, our augmented graph is an upper bound on the number of missing links in the AS graph.  When we repeated our simulations on this second graph, we found that all the aggregate trends we discuss in subsequent sections still hold, which suggests they are robust to missing IXP edges.
\ifnum\full=1
(Results in Appendix~\ref{apx:ixp}.)
\else
(Results in full version.)
\fi

\myparab{S*BGP routing.}  ASes running BGP compute routes to each \emph{destination} AS $d \in V$ independently.  For every destination AS $d \in V$, each \emph{source} AS $s \in V \backslash \{d\}$ repeatedly uses its local \emph{BGP decision process} to select a single ``best'' route to $d$ from routes it learns from neighboring ASes.  $s$ then announces this route to a subset of its neighbors according to its \emph{local export policy}. An AS $s$ \emph{learns a route} or has an \emph{available route} $R$ if $R$ was announced to $s$ by one of its neighbors;  AS $s$ \emph{has} or \emph{uses a route} $R$ if it chooses $R$ from its set of available routes. AS $s$ has customer (\resp peer, provider) route if its neighbor on that route is a customer (\resp peer, provider); see \eg AS 29518 in Figure~\ref{fig:wedgies} left.

\subsubsection{Insecure routing policy model.} \label{sec:insec_policies}

When choosing between many routes to a destination $d$, each \emph{insecure} AS executes the following (in order):

\myparab{Local pref (\LP):} Prefer customer routes over peer routes. Prefer peer routes over provider routes.

\myparab{AS paths (\SP):} Prefer shorter routes over longer routes.

\myparab{Tiebreak (\TB):} Use intradomain criteria (\eg geographic location, device ID) to break ties among remaining routes.

\smallskip\noindent
After selecting a single route as above, an AS announces that route to a subset of its neighbors:

\myparab{Export policy (\Ex):}  In the event that the route is via a customer, the route is exported to all neighbors. Otherwise, the route is exported to customers only.

\smallskip\noindent
The relative ranking of the \LP, \SP, and \TB are standard in most router implementations~\cite{bestpath}.
The \LP and \Ex steps are based on the classical economic model of BGP routing~\cite{gr,ggr,hustonModel1,hustonModel12}.  \LP captures ASes' incentives to send traffic along revenue-generating customer routes, as opposed to routing through peers (which does not increase revenue), or routing through providers (which comes at a monetary cost). \Ex captures ASes's willingness to transit traffic only when paid to do so by a customer.

\myparab{Robustness to \LP model. } While this paper reports results for the above \LP model, we also test their robustness to other models for \LP; results are in
\ifnum\full=0
the full version.
\else
Appendix~\ref{apx:robust:policies}.
\fi

\subsubsection{Secure routing policy models.}

\noindent
Every \emph{secure} AS also adds this step to its routing policy.

\myparab{Secure paths (\SecP):} Prefer a secure route over an insecure route.

\noindent
We consider three models for incorporating the \SecP step:

\myparab{Security \first.}  The \SecP is placed before the \LP step; this model supposes security is an AS's highest priority.

\myparab{Security \second.} The \SecP step comes between the \LP and \SP steps; this model supposes that an AS places economic considerations above security concerns.

\myparab{Security \third. } The \SecP step comes between \SP and \TB steps; this model, also used in \cite{adopt}, supposes security is prioritized below business considerations and AS-path length.

\subsubsection{The security \first model is unpopular.}\label{sec:survey}

While the security \first model is the most ``idealistic'' from the security perspective, it is likely the least realistic. During incremental deployment, network operators are expected to cautiously incorporate S*BGP into routing policies, placing security \second or \third, to avoid disruptions due to (1) changes to traffic engineering, and (2) revenue lost when expensive secure routes are chosen instead of revenue-generating customer routes.  The security \first model might be used only once these disruptions are absent (\eg when most ASes have transitioned to S*BGP), or to protect specific, highly-sensitive IP prefixes.  Indeed, a survey of 100 network operators \cite{surveyEmail}
found that 10\% would rank security \first, 20\% would rank security \second and 41\% would rank security \third. (The remaining operators opted not to answer this question.)

\begin{figure}
\begin{center}
  \includegraphics[width=1.7in]{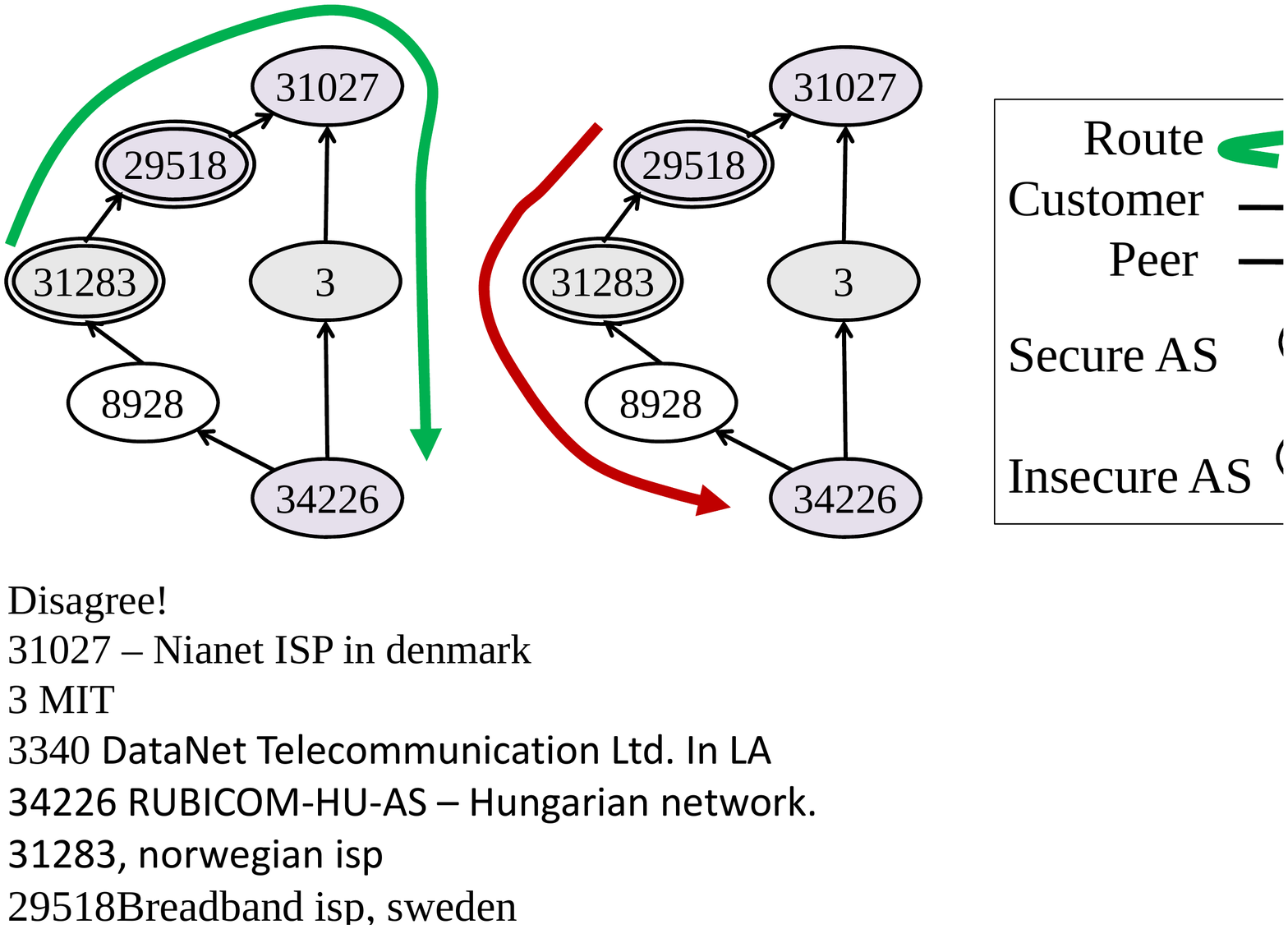}
  \hspace{2mm}
  \includegraphics[width=.8in]{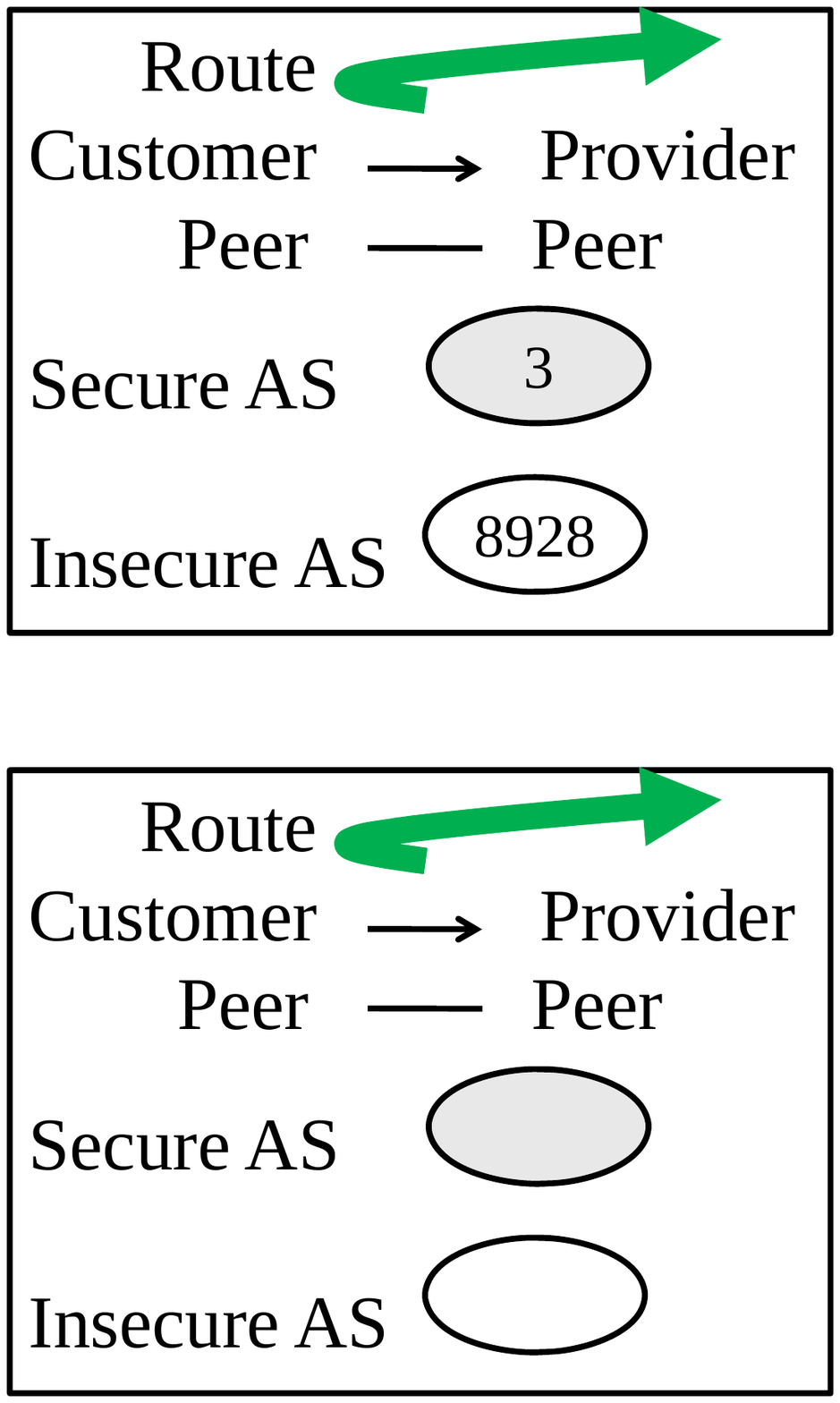}
  \vspace{-3mm}\caption{S*BGP Wedgie.}
  \label{fig:wedgies}
  \vspace{-8mm}
  \end{center}
\end{figure}

\subsection{Mixing the models?}\label{sec:stability}

It is important to note that in each of our S*BGP routing models, the prioritization of the \SecP step in the route selection process is consistent across ASes. The alternative---lack of consensus amongst network operators as to where to place security in the route selection process---can lead to more than just confusion; it can result in a number of undesirable phenomena that we discuss next.

\subsubsection{Disagreements can lead to BGP Wedgies.}

\myparab{Figure~\ref{fig:wedgies}.} Suppose that all ASes in the network, except AS 8928, have deployed S*BGP. The Swedish ISP AS 29518 places security below \LP in its route selection process, while the Norwegian ISP AS 31283 prioritizes security above all else (including \LP). Thus, while AS 29518 prefers the customer path through  AS 31283,  AS 31283 prefers the secure path through its provider AS 29518.
The following undesirable scenario, called a ``BGP Wedgie''~\cite{wedgies} can occur. Initially, the network is in an intended \emph{stable routing state}\footnote{A routing state, \ie the route chosen by each AS $s \in V \backslash \{d\}$ to destination $d$, is \emph{stable} if any AS $s$ that re-runs its route selection algorithm does not change its route~\cite{gsw}.}, in which AS 31283 uses the secure path through its provider AS 29518 (left).
Now suppose the link between AS 31027 and AS 3 fails. Routing now converges to a \emph{different} stable state, where AS 29518 prefers the customer path through AS 31283 (right). When the link comes back up, BGP does not revert to the original stable state, and the system is stuck in an unintended routing outcome.

``BGP Wedgies''~\cite{wedgies} cause unpredictable network behavior that is difficult to debug.  (Sami~\etal \cite{sss} also showed that the existence of two stable states, as in Figure~\ref{fig:wedgies}, implies that persistent routing oscillations are possible.)

\subsubsection{Agreements imply convergence.}

In
\ifnum\full=1 Appendix~\ref{apx:converge} \else the full version \fi
we prove that when all ASes prioritize secure routes the same way, 
convergence to a single stable state is guaranteed, \emph{regardless} of which ASes adopt S*BGP:

\begin{theorem} \label{thm:convergence}
S*BGP convergence to a unique stable routing state is guaranteed in all three S*BGP routing models even under partial S*BGP deployment.
\end{theorem}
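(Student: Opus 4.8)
The plan is to invoke the \emph{no dispute wheel} criterion of Griffin, Shepherd and Wilfong~\cite{gsw}: an instance of the Stable Paths Problem that contains no dispute wheel has a \emph{unique} stable routing state to which BGP is guaranteed to converge under arbitrary (asynchronous) activation. It therefore suffices to show that each of the three models induces no dispute wheel, for every adoption set $S$. Recall that the classical Gao--Rexford conditions~\cite{gr,ggr} --- the customer--provider graph is acyclic, customer routes are strictly preferred to peer and provider routes (\LP), and routes are exported valley-free (\Ex) --- already guarantee the absence of a dispute wheel. The security \second and security \third models are the easy cases: inserting \SecP \emph{below} \LP leaves \LP as the top-priority preference, so the Gao--Rexford preference and export conditions are untouched and \SecP (like \SP and \TB) acts only as a refinement within each business-relationship class; the standard argument then rules out dispute wheels unchanged (this is exactly why the model of~\cite{adopt} converges).

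The security \first model is the main obstacle, because placing \SecP \emph{above} \LP lets a secure AS prefer a secure \emph{provider} route over an insecure \emph{customer} route, directly violating the Gao--Rexford preference condition. To handle it I would first record the key structural fact that secure routing is \emph{self-contained}: a secure route consists solely of secure ASes and terminates at a secure destination $d$, and under security \first a secure AS discards every insecure route whenever any secure route is available. Consequently the secure routes available to the secure ASes, together with their \LP, \SP, and \TB ranking and \Ex export rule, are precisely those of the Gao--Rexford instance on the induced subgraph $G[S]$ (toward $d$); since $G[S]$ is a subgraph it inherits customer--provider acyclicity, so this sub-instance is itself Gao--Rexford-consistent and hence dispute-wheel-free.

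The crux is then a homogeneity argument that prevents a dispute wheel from ``mixing'' secure and insecure routes. Suppose for contradiction a dispute wheel with pivots $u_0,\dots,u_{k-1}$, spoke paths $Q_i$ (from $u_i$ to $d$) and rim segments $R_i$ (from $u_i$ to $u_{i+1}$), where each $u_i$ weakly prefers $R_i Q_{i+1}$ to $Q_i$ (indices mod $k$). If a spoke $Q_i$ is secure, then since under security \first nothing insecure can be preferred to a secure route, the preferred route $R_i Q_{i+1}$ must also be secure, and hence so is its suffix $Q_{i+1}$; following this implication around the cycle forces \emph{either} every spoke to be secure \emph{or} every spoke to be insecure. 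In the all-secure case every pivot and every rim route lies in $G[S]$ and the preferences collapse to the Gao--Rexford ranking there, yielding a dispute wheel in the dispute-wheel-free instance $G[S]$ --- a contradiction. In the all-insecure case no rim route can be secure (a secure $R_iQ_{i+1}$ would make $Q_{i+1}$ secure), so every preference in the wheel reduces to the plain \LP, \SP, and \TB ranking, giving a dispute wheel in the Gao--Rexford-consistent full graph $G$ --- again a contradiction.

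Having ruled out dispute wheels in all three models and for every $S$, the Griffin--Shepherd--Wilfong theorem~\cite{gsw} immediately yields both uniqueness of the stable state and guaranteed convergence, proving \theoremref{thm:convergence}. I expect the self-containment of secure routing and the ``secure propagates around the wheel'' step of the homogeneity argument to be the delicate parts; the security \second and \third cases, and the appeals to~\cite{gr,gsw}, are routine.
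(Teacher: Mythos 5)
Your proof is correct, but it takes a genuinely different route from the paper's. The paper proves \theoremref{thm:convergence} constructively: it designs multi-stage BFS algorithms (the \textbf{Fix-Routes} subroutines FCR/FPeeR/FPrvR and their secure variants, sequenced differently in each security model, see \appendixref{apx:algos}) and then shows by induction (Lemmas~\ref{lem_get_stable_provider}--\ref{lem_get_stable_customer_1st} in \appendixref{apx:alg:corr}) that under asynchronous S*BGP dynamics every AS's route eventually locks to the route the algorithm computes; uniqueness and convergence follow because the algorithm's output is a single well-defined outcome. You instead verify the Griffin--Shepherd--Wilfong \emph{no dispute wheel} condition~\cite{gsw} for each model and every deployment $S$: the security \second and \third cases are immediate because \SecP sits below \LP, so preferences remain class-monotone and the standard Gao--Rexford argument~\cite{gr,ggr} applies unchanged, while for security \first your homogeneity lemma --- a secure spoke forces a secure preferred rim route, whose suffix is a secure spoke, so chasing the implication around the wheel makes every spoke secure or every spoke insecure, reducing to a dispute wheel either in the Gao--Rexford-consistent induced instance on $G[S]$ (which inherits customer--provider acyclicity) or in the full graph $G$ --- is sound; the key facts you rely on (a secure route consists entirely of secure ASes, hence its suffixes are secure, and under security \first a secure pivot never weakly prefers an insecure route to a secure one) match the paper's model. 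The trade-off: your argument is shorter, more modular, and buys extra properties of no-dispute-wheel instances for free (robustness under link failures, safety under arbitrary activation sequences), whereas the paper's constructive proof additionally yields the exact stable outcome and doubles as the simulation machinery used throughout (computing partitions, bounds, and protocol downgrades), and it natively handles the attacker by treating $m$ as a second BFS root. To match the paper's remark that convergence holds even during the attack of Section~\ref{sec:attack:dets}, you should add one line noting that $m$'s stubborn bogus announcement ``$m,d$'' can be modeled as a single fixed, always-insecure permitted path at $m$, which cannot participate in or create a dispute wheel, so your argument goes through verbatim.
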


\noindent
This holds even in the presence of the attack of Section~\ref{sec:attack:dets}, \cf \cite{LGSPODC12}.   This suggests a prescriptive guideline for S*BGP deployment: ASes should all prioritize security in the same way.  (See Section~\ref{sec:guidelines} for more guidelines.)  The reminder of this paper supposes that ASes follow this guideline.


\section{Threat Model}\label{sec:threat}

To quantify ``security'' in each of our three models, we first need to discuss what constitutes a routing attack.  We focus on a future scenario where RPKI and origin authentication are deployed, and the challenge is engineering global S*BGP adoption. We therefore disregard attacks that are prevented by origin authentication, \eg prefix- and subprefix-hijacks~\cite{BGPsurvey,PaulF,pakistan,china,AS7007} (when an attacker originates a prefix, or more specific subprefix, when not authorized to do so).
Instead, we focus on attacks that are effective even in the presence of origin authentication, as these are precisely the attacks that S*BGP is designed to prevent.

Previous studies on S*BGP security~\cite{BGPattack,JenYannis,adoptability} focused on the endgame scenario, where S*BGP is fully deployed, making the crucial assumption that \emph{any secure AS that learns an insecure route from one of its neighbors can safely ignore that route}. This assumption is invalid in the context of a partial deployment of S*BGP, where S*BGP coexists alongside BGP. In this setting, some destinations may only be reachable via insecure routes. Moreover, even a secure AS may prefer to use an insecure route for economic or performance reasons (as in our security \second or \third models).  Therefore, propagating a bogus AS path using legacy insecure BGP~\cite{DEFCON,BGPattack} (an attack that is effective against fully-deployed origin authentication) can \emph{also} work against some \emph{secure} ASes when S*BGP is partially deployed.

\subsection{The attack.}\label{sec:attack:dets}

We focus on the scenario where a single attacker AS $m$ attacks a single destination AS $d$; all ASes except $m$ use the policies in Section~\ref{sec:policies}.  The attacker $m$'s objective is to maximize the number of source ASes that send traffic to $m$, rather than $d$. This commonly-used objective function~\cite{PaulF,BGPattack,sigBGP} reflects $m$'s incentive to attract (and therefore tamper~/~eavesdrop~/~drop) traffic from as many source ASes as possible. (We deal with the fact that ASes can source different amounts of traffic~\cite{LabovitzSIGCOMM} in Sections~\ref{sec:partitions:robust},~\ref{sec:metric:CPs},~\ref{sec:T1suck:metric}.)

\myparab{Attacker's strategy.  } The attacker $m$ wants to convince ASes to route to $m$, instead of the legitimate destination AS $d$ that is authorized to originate the prefix under attack.  It will do this by sending bogus AS-path information using legacy BGP.
What AS path information should $m$ propagate? A straightforward extension of the results in~\cite{BGPattack} to our models shows it is NP-hard for $m$ to determine a bogus route to export to each neighbor that maximizes the number of source ASes it attracts. As such, we consider the arguably simplest, yet very disruptive~\cite{PaulF,BGPattack}, attack: the attacker, which is not actually a neighbor of the destination $d$, pretends to be directly connected to $d$.   Since there is no need to explicitly include IP prefixes in our models, this translates to a single attacker AS $m$ announcing the bogus AS-level path ``$m,d$'' using legacy BGP to \emph{all} its neighbor ASes. 
 Since the path is announced via legacy BGP, recipient ASes will not validate it with S*BGP, and thus will not learn that it is bogus.
(This attack is equally effective against partially-deployed soBGP, S-BGP and BGPSEC.  With soBGP, the attacker claims to have an edge to $d$ that does not exist in the graph. With S-BGP or BGPSEC the attacker claims to have learned a path ``$m,d$'' that $d$ never announced.)

\subsection{Are secure ASes subject to attacks?}\label{sec:pda}

\begin{figure}
\begin{center}
  \includegraphics[width=1.1in]{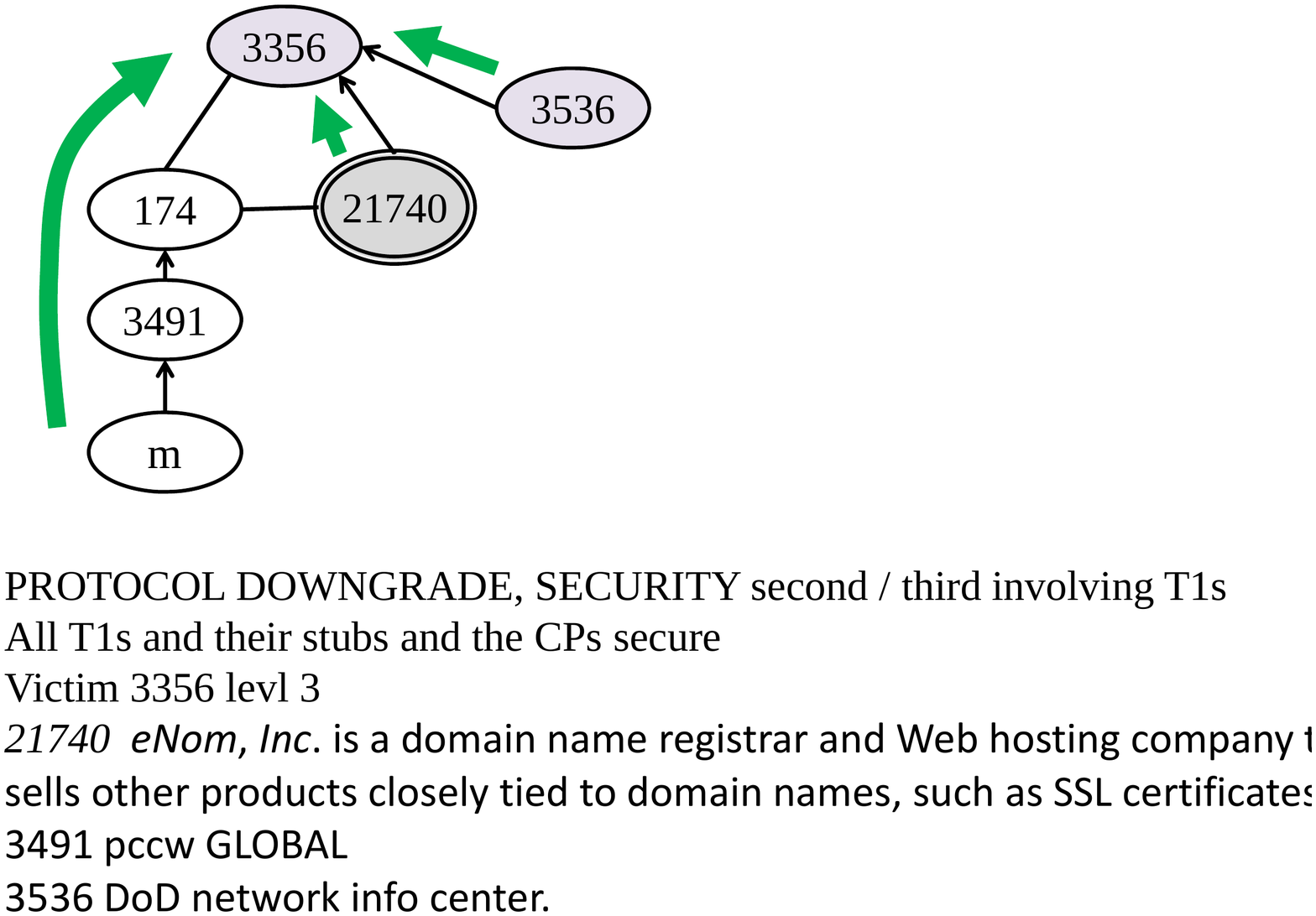}
  \includegraphics[width=1.1in]{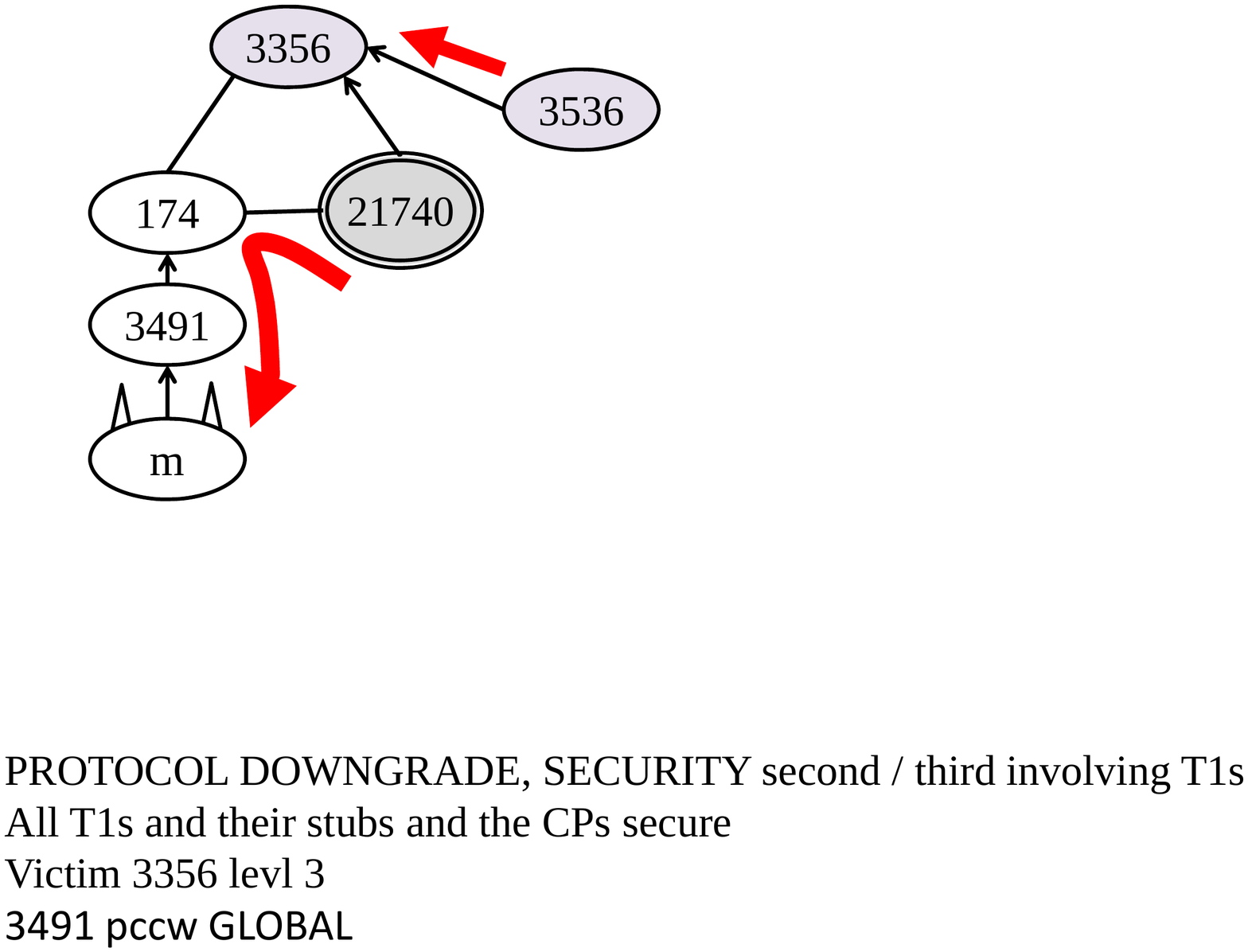}
  \vspace{-3mm}\caption{Protocol downgrade attack; Sec \second.}
  \vspace{-5mm}\label{fig:pda2nd}\end{center}
\end{figure}


Ideally, we would like a secure AS with a secure route to be protected from a routing attack. Unfortunately, however, this is  not always the case.  We now discuss a troubling aspect of S*BGP in partial deployment~\cite{BGPSECthreat}:

\myparab{Protocol downgrade attack.}  In a protocol downgrade attack, a source AS that uses a secure route to the legitimate destination under normal conditions, downgrades to an insecure bogus route \emph{during} an attack.

\smallskip\noindent
The best way to explain this is via an example:

\myparab{Figure~\ref{fig:pda2nd}.}  We show how AS 21740, a webhosting company, suffers a protocol downgrade attack, in the security \second  (or \third) model.  Under normal conditions (left), AS~21740 has a secure provider route directly to the destination Level 3 AS 3356, a Tier 1 ISP. (AS~21740 does \emph{not} have a peer route via AS 174 due to \Ex.)
During the attack (right), $m$ announces that it is directly connected to Level3, and so AS 21740 sees a bogus, insecure 4-hop peer route, via his peer AS 174.  Importantly, AS 21740 has no idea that this route is bogus; it looks just like any other route that might be announced with legacy BGP. In the security \second (and \third) model, AS 21740 prefers an insecure \emph{peer} route over a secure \emph{provider} route, and will therefore downgrade to the bogus route.

\ifnum\full=1
In Section~\ref{sec:T1suck:metric}, we show that protocol-downgrade attacks can be a serious problem, rendering even large partial  deployments of S*BGP ineffective against attacks.
\fi

\myparab{Downgrades are avoided in the security \first model.} Protocol downgrade attacks can happen in the security \second and \third models, but not when security is \first:

\begin{theorem}\label{thm:noPDA1st}
In the security \first model, for every attacker AS $m$, destination AS $d$, and AS $s$ that, in normal conditions, has a secure route to $d$ that does not go through $m$, $s$ will use a secure route to $d$ even during $m$'s attack.
\end{theorem}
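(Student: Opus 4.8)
The plan is to exploit the defining feature of the security \first model: the \SecP step precedes \LP, so the propagation and selection of \emph{secure} routes is entirely \emph{decoupled} from insecure routes. The attacker $m$ can only inject an \emph{insecure} bogus route, since it announces ``$m,d$'' over legacy BGP and no recipient validates it; hence the attack cannot change which secure routes are available to, or chosen by, the honest secure ASes. I would make this precise as a decoupling lemma: for any secure AS $v$, whether $v$ has an available secure route to $d$ and which one it selects depends only on the secure routes exported by $v$'s secure neighbours; a secure neighbour exports a secure route exactly when it selects one, which by \SecP it does whenever a secure route is available to it. Thus the availability and selection of secure routes among secure ASes form a closed subsystem that never references any insecure route.

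Next I would observe that this closed subsystem is exactly classical BGP under the Gao--Rexford policies on the \emph{secure subgraph} $G_S$ (induced by the secure ASes) for destination $d$: since every route inside $G_S$ is secure, the \SecP step is vacuous there and the effective ranking reduces to \LP, \SP, \TB, the standard model of \cite{gr,ggr,gsw}. Note the premise forces $d$ and $s$ to be secure, as a secure route is an all-secure path ending at $d$. The only secure AS whose behaviour the attack changes is $m$ itself: during the attack $m$ emits the bogus insecure announcement and forwards no legitimate secure route to $d$, so with respect to destination $d$ it contributes nothing to the secure subsystem and behaves as if deleted from $G_S$. (If $m$ is insecure it was never in $G_S$, so the secure subsystem is literally unchanged and the argument is only easier.) By \theoremref{thm:convergence} both the normal and attacked subsystems converge to unique stable states, so it is legitimate to speak of ``the'' secure outcome before and during the attack.

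It then remains to show that $s$ still has an available secure route in the attacked subsystem, i.e.\ in Gao--Rexford BGP on $G_S \setminus \{m\}$. Here I would invoke the standard reachability characterization for the Gao--Rexford model: in the stable state, an AS has a route to $d$ if and only if there is a valley-free path to $d$ consistent with export rule \Ex. In normal conditions $s$ uses a secure route $R$ to $d$ avoiding $m$; as a route selected under \LP and \Ex, $R$ is valley-free, and since it avoids $m$ it persists in $G_S \setminus \{m\}$ regardless of $m$'s behaviour. Hence a valley-free path from $s$ to $d$ survives, so $s$ has an available secure route during the attack, and because \SecP comes first $s$ \emph{uses} a secure route, as claimed.

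The main obstacle is making the decoupling rigorous rather than merely intuitive. The delicate point is that I cannot argue $s$ re-uses the \emph{same} route $R$: deleting $m$ can cause intermediate ASes to switch their best secure route, so the attacked stable state need not reproduce $R$ edge-for-edge, and the set of routes available during the attack is not in general a subset of the normal one. This is precisely why the argument is routed through \emph{existence} of a valley-free path---a monotone, route-independent property---combined with the reachability characterization, rather than through any claimed monotonicity of BGP's dynamic trajectory. A secondary point to check is that $m$'s bogus announcement truly cannot masquerade as secure: it is sent over legacy BGP and asserts a non-existent edge to $d$, so S*BGP validation (edge existence for soBGP, origin signing for S-BGP/BGPSEC) would reject it, leaving it insecure.
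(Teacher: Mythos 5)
Your proposal is correct, but it takes a genuinely different route from the paper's. The paper proves \theoremref{thm:noPDA1st} directly from the correctness of the staged algorithm of Appendix~\ref{apx:alg:1}: since every route containing $m$ is insecure during the attack, such routes are fixed only after the FSCR/FSPeeR/FSPrvR stages, and the paper then claims by induction along $R_s$ that every AS on $R_s$ is re-fixed to \emph{exactly} the route it used in normal conditions, so $s$ keeps $R_s$ itself. You instead prove only \emph{existence}: the secure subsystem is classical Gao--Rexford routing on the secure subgraph with $m$ effectively deleted, and the valley-free path $R$ witnesses that $s$ still has \emph{some} available secure route, which \SecP-first then forces $s$ to use. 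Notably, your weaker intermediate claim is the more defensible one: the paper's same-route claim fails in general, because deleting $m$ from the secure subsystem can make a \emph{new, shorter} secure route available to an AS on $R_s$ --- e.g., if an AS $w$ off $R_s$ normally prefers a three-hop secure \emph{peer} route through $m$ over a two-hop secure provider route, then during the attack $w$ falls back to the short provider route and exports it to its customer $v$ on $R_s$, which can now switch by \SP{} to a strictly shorter secure route than its suffix of $R_s$. The theorem survives (the switch is to another secure route), and your existence-based argument is unaffected; this is precisely the ``delicate point'' you flag. What each approach buys: the paper's argument, when the route-identity claim does hold, yields a stronger stability statement tied to its simulation machinery; yours is more robust and essentially self-contained. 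The one step you should shore up is the ``standard reachability characterization'' (route in the stable state iff a valley-free, \Ex-consistent path exists), which you invoke without proof; it is true, and within this paper it follows from the correctness lemmas of Appendix~\ref{apx:algos} (the staged BFS fixes a route at every AS retaining a perceivable valley-free path), but it deserves at least that argument or citation.
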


\noindent
\ifnum\full=1
The proof is in Appendix~\ref{apx:downgrades}.
\fi
While the theorem holds only if the attacker $m$ is not on AS $s$'s route, this is not a severe restriction because, otherwise, $m$ would attract traffic from $s$ to $d$ even without attacking. 
\section{Invariants to Deployment}\label{sec:partitions}

Given the vast number of possible configurations for a partial deployment of S*BGP, we present a framework for exploring the security benefits of S*BGP vis-a-vis origin authentication, \emph{without making any assumptions about which ASes are secure}. To do this, we show how to quantify security (Section~\ref{sec:metric}), discuss how to determine an \emph{upper bound} on security available with \emph{any} S*BGP deployment for any routing model (Section~\ref{sec:doomedImm}),  finally compare it to the security available with origin authentication (Section~\ref{sec:baseline},~\ref{sec:upperLower}).

\subsection{Quantifying security: A metric.}\label{sec:metric}

We quantify improvements in ``security'' by determining the fraction of ASes that avoid attacks (per Section~\ref{sec:attack:dets}). The attacker's goal is to attract traffic from as many ASes as possible; our metric therefore measures the average fraction of ASes that do \emph{not} choose a route to the attacker.

\myparab{Metric.} Suppose the ASes in set $S$ are secure and consider an attacker $m$ that attacks a destination $d$.  Let $H(m,d,S)$ be the number of ``happy'' source ASes that choose a legitimate route to $d$ instead of a bogus route to $m$. (See Table~\ref{tab:glossary}).  Our metric is:
%
\begin{equation*}\label{eq:metric}
H_{M,D}(S)=\tfrac1{|D|(|M|-1)(|V|-2)}\sum_{m\in M}\sum_{d\in D\backslash\{m\}} H(m,d,S)
\end{equation*}
%
\noindent
Since we cannot predict where an attack will come from, or which ASes it will target, the metric averages over all attackers in a set $M$ and destinations in a set $D$; we can choose $M$ and $D$ to be any subset of the ASes in the graph, depending on (i) where we expect attacks to come from, and (ii) which destinations we are particularly interested in protecting.  When we want to capture the idea that all destinations are of equal importance, we average over all destinations; note that ``China's 18 minute mystery" of 2010~\cite{china} fits into this framework well, since the hijacker targeted prefixes originated by a large number of (seemingly random) destination ASes.   However, we can also zoom in on important destinations $D$ (\eg content providers~\cite{LabovitzSIGCOMM,pakistan,moratel}) by averaging over those destinations only.
We can, analogously, zoom in on certain types of attackers $M$ by averaging over them only.
\ifnum\full=1
Averaging over fixed sets $D$ and $M$ (that are independent of $S$) also allows us to compare security across deployments $S$ and routing policy models.
\fi

\begin{table}
\begin{small}
\begin{tabular}{|l|p{2.5in}|}
  \hline
  happy & Chooses a legitimate secure/insecure route to $d$. \\
  \hline
  unhappy & Chooses a bogus insecure route to $m$. \\
\hline\hline
  immune & Happy \emph{regardless of which ASes are secure}. \\\hline
  doomed & Unhappy \emph{regardless of which ASes are secure}. \\\hline
  protectable & Neither immune nor doomed.   \\\hline
\end{tabular}
\end{small}
\vspace{-3mm}
\caption{Status of source $s$ when $m$ attacks $d$.}\label{tab:glossary}
\vspace{-3mm}
\end{table}

\myparab{Tiebreaking \& bounds on the metric.}  Recall from Section~\ref{sec:policies} that our model fully determines an AS's routing decision up to the tiebreak step \TB of its routing policy.  Since computing $H_{M,D}(S)$ only requires us to distinguish between ``happy'' and ``unhappy'' ASes, the tiebreak step matters only when a source AS $s$ has to choose between (1) an \emph{insecure} route(s) to the legitimate destination $d$ (that makes it happy), and (2) an \emph{insecure}  bogus route(s) to $m$ (that makes it unhappy).  Importantly, $s$ has no idea which route is bogus and which is legitimate, as both of them are insecure.  Therefore, to avoid making uninformed guesses about how ASes choose between equally-good \emph{insecure} routes, we will compute upper and lower bounds on our metric; to get a lower bound, we assume that every AS $s$ in the aforementioned situation will always choose to be unhappy (\ie option (2)); the upper bound is obtained by assuming $s$ always chooses to be happy (\ie  (1)).
\ifnum\full=1
See also Appendix~\ref{apx:partitions}.
\fi

\ifnum\full=1
\myparab{Algorithms.} Our metric is determined by computing routing outcomes, each requiring time $O(|V|)$, over all possible $|M||D|$ attacker and destination pairs.  We sometimes take $M=D=V$
so that our computations approach $O(|V|^3)$; the parallel algorithms we developed for this purpose are presented in Appendix~\ref{apx:algos},~\ref{apx:sim}.
\fi

\subsection{Origin authentication gives good security.}\label{sec:baseline}

At this point, we could compute the metric for various S*BGP deployment scenarios, show that most source ASes are ``happy'', argue that S*BGP has improved security, and conclude our analysis.
This, however, would not give us the full picture, because it is possible that most of the happy ASes would have been happy \emph{even if S*BGP had not been deployed}.  Thus, to understand if the juice is worth the squeeze, we need to ask how many more attacks are prevented by a particular S*BGP deployment scenario, relative to those already prevented by RPKI with origin authentication.  More concretely, we need to compare the fraction of happy ASes \emph{before and after the ASes in $S$ deploy S*BGP}. To do this, we compare the metric for a deployment scenario $S$ against the ``baseline scenario'', where RPKI and origin authentication are in place, but no AS has adopted S*BGP, so that the set of secure ASes is $S=\emptyset$.

In \cite{BGPattack}, the authors evaluated the efficacy of origin authentication against attacks that it was not designed to prevent --- namely, the ``$m,d$'' attack of Section~\ref{sec:attack:dets}.  They randomly sampled pairs of attackers and destinations and plotted the distribution of the fraction of ``unhappy'' source ASes (ASes that route through the attacker, see Table~\ref{tab:glossary}). Figure 3 of \cite{BGPattack} shows that attacker is able to attract traffic from less than half of the source ASes in the AS graph, on average.
We now perform a computation and obtain a result that is similar in spirit; rather than randomly sampling pairs of attackers and destinations as in \cite{BGPattack}, we instead compute a \emph{lower bound} on our metric over all possible attackers and destinations. We find that  $H_{V,V}(\emptyset)\geq60\%$ on the basic UCLA graph, and $H_{V,V}(\emptyset)\geq62\%$ on our IXP-augmented graph.

It is striking that both our and \cite{BGPattack}'s result indicate more than half of the AS graph is \emph{already} happy even \emph{before} S*BGP is deployed. To understand why this is the case, recall that with origin authentication, an attacking AS $m$ must announce a bogus path ``$m,d$'' that is one hop longer than the path ``$d$'' announced by the legitimate destination AS $d$.  When we average over all $(m, d)$ pairs and all the source ASes, bogus paths through $m$ will appear longer, on average, than legitimate paths through $d$. Since path length plays an important role in route selection,  on average, more source ASes choose the legitimate route.

\begin{figure}
\begin{center}
    \includegraphics[width=2.3in]{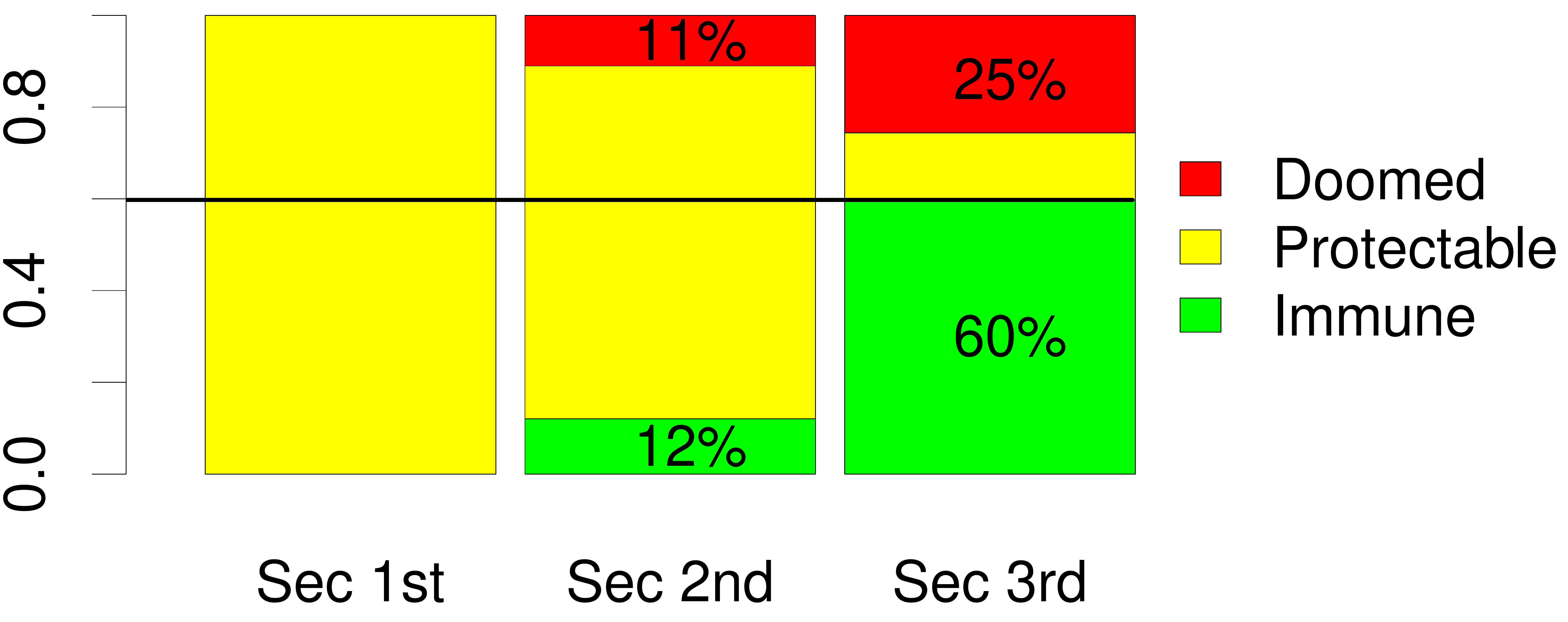}
    \vspace{-3mm}\caption{Partitions}
    \vspace{-7mm}\label{fig:partitions}\end{center}
\end{figure}

\subsection{Does S*BGP give better security?}

How much further can we get with a partial deployment of S*BGP? We now obtain bounds on the improvements in security that are possible for a given routing policy model, but \emph{for any} set  $S$ of secure ASes.

We can obtain these bounds thanks to the following crucial observation: ASes can be partitioned  into three distinct categories with respect to each attacker-destination pair $(m,d)$.  Some ASes are \emph{doomed} to route through the attacker regardless of which ASes are secure.  Others  are \emph{immune} to the attack  regardless of which ASes are secure. Only the remaining ASes are \emph{protectable}, in the sense that whether or not they route through the attacker depends on which ASes are secure (see Table~\ref{tab:glossary}).

To bound our metric $H_{M,D}(S)$  for a given routing policy model (\ie security \first, \second, or \third) and \emph{across all partial-deployment scenarios} $S$, we first partition source ASes into categories ---  doomed, immune, and protectable ---  for each $(m,d)$ pair and each routing policy model.  By computing the average fraction of immune ASes across all $(m,d) \in M \times D$ for a given routing model, we get a lower bound on $H_{M,D}(S)$  $\forall S$ and that routing model. We similarly get an upper bound on $H_{M,D}(S)$ by computing the average fraction of ASes that are \emph{not} doomed.

\subsubsection{Partitions: Doomed, protectable \& immune.}\label{sec:doomedImm}

\noindent  We return to Figure~\ref{fig:pda2nd} to explain our partitioning:

\myparab{Doomed.}  A source AS $s$ is \emph{doomed} with respect to pair $(m,d)$ if $s$  routes through $m$ no matter which set $S$ of ASes is secure.   AS 174 
in Figure~\ref{fig:pda2nd} is doomed when security is \second (or \third).  If security is \second (or \third), AS 174 \emph{always} prefers the bogus customer route to the attacker over a (possibly secure) peer path to the destination AS 3356,  for every $S$.

\myparab{Immune.} A source AS $s$ is \emph{immune} with respect to pair $(m,d)$ if $s$ will route through $d$ no matter which set $S$ of ASes is secure.  AS 3536 in Figure~\ref{fig:pda2nd} is one example; this single-homed stub customer of the destination AS 3356 can \emph{never} learn a bogus route in any of our security models.  When security is \second or \third, another example of an immune AS is AS 10310 in Figure~\ref{fig:bensNdamages2nd}; its customer route to the legitimate destination AS 40426 is always more attractive than its provider route to the attacker in these models.

\myparab{Protectable.} AS $s$ is protectable  with respect to pair $(m,d)$ if it can either choose the legitimate route to $d$, or the bogus one to $m$, depending on $S$.  With security \first, AS~174 in Figure~\ref{fig:pda2nd} becomes protectable.  If it has a secure route to the destination AS 3356,  AS~174 will choose it and be happy; if not, it will choose the bogus route to $m$.

\subsubsection{Which ASes are protectable?}

The intuition behind the following partitioning of ASes is straightforward.  The subtleties involved in proving that an AS is doomed/immune are discussed in
\ifnum\full=1
Appendix~\ref{apx:partitions}.
\else
the full version.
\fi

\mypara{Security \first.} Here, we suppose that all ASes are protectable; the few exceptions (\eg the single-homed stub of Figure~\ref{fig:pda2nd}) have little impact on the count of protectable ASes.

\mypara{Security \second.} Here, an AS is doomed if it has a route to the attacker with better local preference \LP than every available route to the legitimate destination; (\eg the bogus \emph{customer} route offered to AS 174 in Figure~\ref{fig:pda2nd} has higher \LP than the legitimate \emph{peer} route).  An immune AS has a route to the destination that has higher \LP than every route to the attacker.  For protectable AS, its best available routes to the attacker and destination have \emph{exactly the same} \LP.

\mypara{Security \third.} Here, a doomed AS has a path to $m$ with (1) better \LP OR (2) equal \LP and shorter length \SP, than every available path to $d$. The opposite holds for an immune AS. A protectable AS has best available routes to $m$ and $d$ with equal \LP \emph{and} path length \SP.

\subsection{Bounding security for all deployments.}\label{sec:upperLower}

For each routing model, we found the fraction of doomed/ protectable / immune source ASes for each attacker destination pair $(m,d)$, and took the average over all $(m,d)\in V\times V$.  We used these values to get upper- and lower bounds on $H_{V,V}(S)$ \emph{for all deployments $S$}, for each routing model.

\myparab{Figure~\ref{fig:partitions}:}   The colored parts of each bar represent the average fraction of immune, protectable, and doomed source ASes, averaged over all $O(|V|^2)$ possible pairs of attackers and destinations.  Since $H_{V,V}(S)$ is an average of the fraction of happy source ASes over all pairs of attackers and destinations, the upper bound on the metric $H_{V,V}(S)$ $\forall S$ is the average fraction of source ASes that are \emph{not} doomed.
The upper bound on the metric $H_{V,V}(S)$ $\forall S$ is therefore: $\approx 100\%$ with security \first, $89\%$ with security \second, and $75\%$ with security \third.  (The same figure computed on our IXP-edge-augmented graph looks almost exactly the same, with the proportions being $\approx 100\%$, $90\%$ and $77\%$.) 
Meanwhile, the heavy solid line is the lower bound on the metric $H_{V,V}(\emptyset)$ in the baseline setting where $S=\emptyset$ and  there is only origin authentication; in Section~\ref{sec:baseline} we found that $H_{V,V}(\emptyset)=60\%$ (and $62\%$ for the IXP-edge-augmented graph). Therefore, we can bound the maximum change in our security metric $H_{V,V}(S)$ $\forall S$ for each routing policy model by computing the distance between the solid line and the boundary between the fraction of doomed and protectable ASes.  We find:

\myparab{Security \third: Little improvement.} Figure~\ref{fig:partitions} shows that the maximum gains over origin authentication that are provided by the security \third model are quite slim --- at most 15\% --- \emph{regardless} of which ASes are secure.  (This follows because the upper bound on the metric $H_{V,V}(S)\leq 75\%$ for any $S$ while the lower bound on the baseline setting is $H_{V,V}(\emptyset)\geq 60\%$.) Moreover, these are the \emph{maximum} gains $\forall S$; in a realistic S*BGP deployment, the gains are likely to be much smaller.  This result is disappointing, since the security \third model is likely to be the most preferred by network operators (Section~\ref{sec:survey}), but it is not especially surprising.  S*BGP is designed to prevent path shortening attacks; however, in the security \third model ASes prefer short (possibly bogus) insecure routes over a long secure routes, so it is natural that this model realizes only minimal  security benefits.


\myparab{Security \second: More improvement.}  Meanwhile,  route security is prioritized above route length with the security~\second model, so we could hope for better security benefits.  Indeed, Figure~\ref{fig:partitions} confirms that the \emph{maximum gains} over origin authentication are better: $89- 60 =29\%$.  But can these gains be realized in realistic partial-deployment scenarios?
\ifnum\full=1
We answer this in question in Section~\ref{sec:results:metric}.
\fi

\myparab{Decreasing numbers of immune ASes? } The fraction of immune ASes in the security \second (12\%) and \first ($\approx 0\%$) models is (strangely) lower than the fraction of happy ASes in the baseline scenario (60\%). How is this possible?  In Section~\ref{sec:Colldamages} we explain this counterintuitive observation by showing that \emph{more} secure ASes can sometimes result in \emph{less} happy ASes; these ``collateral damages'', that occur only in the security \first and \second models, account for the decrease in the number of immune ASes.


\subsection{Robustness to destination tier.}\label{sec:partitions:robust}

Thus far, we have been averaging our results over all possible attacker-destination pairs in the graph. However, some destination ASes might be particularly important to secure, perhaps because they source important content (\eg the content provider ASes (CPs)) or transit large volumes of traffic (the Tier 1 ASes). As such, we broke down the metric over destinations in each  \emph{tier} in Table~\ref{tab:tiers}.

\myparab{Figure~\ref{fig:partitions:dest:sl}.  } We show the partitioning into immune~/~protectable~/~doomed ASes 
in the security \third model, but this time averaged individually over all destinations in each tier, and all possible attackers $V$.  The thick horizontal line over each vertical bar again shows the corresponding lower bound on our metric $H_{V,\text{Tier}}(\emptyset)$ when no AS is secure.  Apart from the Tier 1s (discussed next), we observe similar trends as in Section~\ref{sec:upperLower}, with the improvement in security ranging from $8-15\%$ for all tiers;
\ifnum\full=1
the same holds for the security \second model, shown in \textbf{Figure~\ref{fig:partitions:dest:ss}}.
\fi
\ifnum\full=0
the same holds for the security \second model. (Figure in full version).
\fi

\begin{figure}
\begin{center}
\includegraphics[width=2.7in]{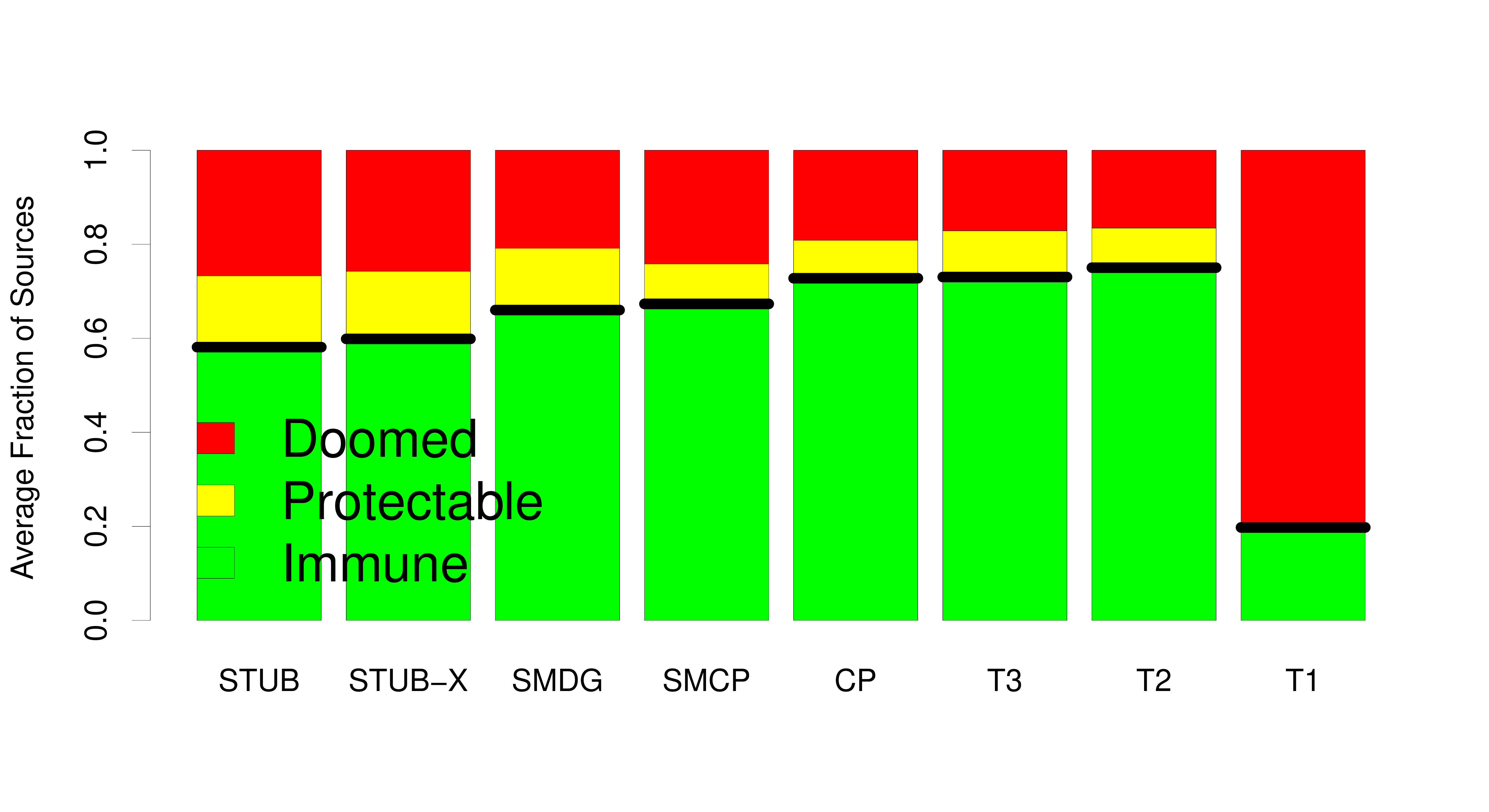}
    \vspace{-1mm}\caption{Partitions by destination tier. Sec~\third.}
    \vspace{-5mm}\label{fig:partitions:dest:sl}\end{center}
\end{figure}

\subsection{It's difficult to protect Tier 1 destinations.}\label{sec:T1suck}

Strangely enough, Figure~\ref{fig:partitions:dest:sl} shows that when Tier 1 destinations are attacked in the security \third model, the vast majority ($\approx 80\%$) of ASes are doomed, and only a tiny fraction are protectable; the same holds when security is \second
\ifnum\full=0
(not shown).
\else
(Figure~\ref{fig:partitions:dest:ss}).
\fi
Therefore, in these models, S*BGP can do little to blunt attacks on Tier 1 destinations.

How can it be that Tier 1s, the largest and best connected (at least in terms of customer-provider edges) ASes in our AS graph, are the most vulnerable to attacks? Ironically, it is the Tier 1s' very connectivity that harms their security. Because the Tier 1s are so well-connected, they can charge most of their neighbors for Internet service.  As a result, most ASes reach the Tier 1s via costly provider paths that are the least preferred type of path according to the \LP step in our routing policy models.  Meanwhile, it turns out that when a Tier 1 destination is attacked, most source ASes will learn a bogus path to the attacker that is \emph{not} through a provider, and is therefore preferred over the (possibly secure) provider route to the T1 destination in the security \second or \third models.
In fact, this is exactly what lead to the protocol downgrade attack on the Tier 1 destination AS 3356 in Figure~\ref{fig:pda2nd}.
%
We will later (Section~\ref{sec:T1suck:metric}) find that this is a serious hurdle to protecting Tier 1 destinations.



\ifnum\full=1
\subsection{Which attackers cause the most damage?}\label{sec:partitions:attacker}
Next, we break things down by the type of the attacker, to get a sense of type of attackers that S*BGP is best equipped to defend against.

\myparab{Figure~\ref{fig:partitions:attacker:sl}.}   We bucket our counts of doomed, protectable, and immune ASes for the security \third model by the attacker type in Table~\ref{tab:tiers}, for all $|V|^2$ possible attacker-destination pairs. As the degree of the attacker increases, it's attack becomes more effective; the number of immune ASes steadily decreases, and the number of doomed ASes correspondingly increases, as the the tier of the attacker grows from stub to Tier 2.  Meanwhile, the number of protectable ASes remains roughly constant across tiers. The striking exception to this trend is that the the Tier 1 attacker is significantly less effective than even the lowest degree (stub) attackers.     While at this observation might seem unnatural at first, there is a perfectly reasonable explanation: when a Tier 1 attacks, its bogus route will look like a provider route from the perspective of most other source ASes in the graph.  Because the \LP step of our routing model depreferences provider routes relative to peer and customer routes, the Tier 1 attacker's bogus route will be less attractive than any legitimate route through a peer or provider, and as such most ASes will be immune to the attack. The same observations hold when security is \second.

\begin{figure}
\begin{center}
\includegraphics[width=2.7in]{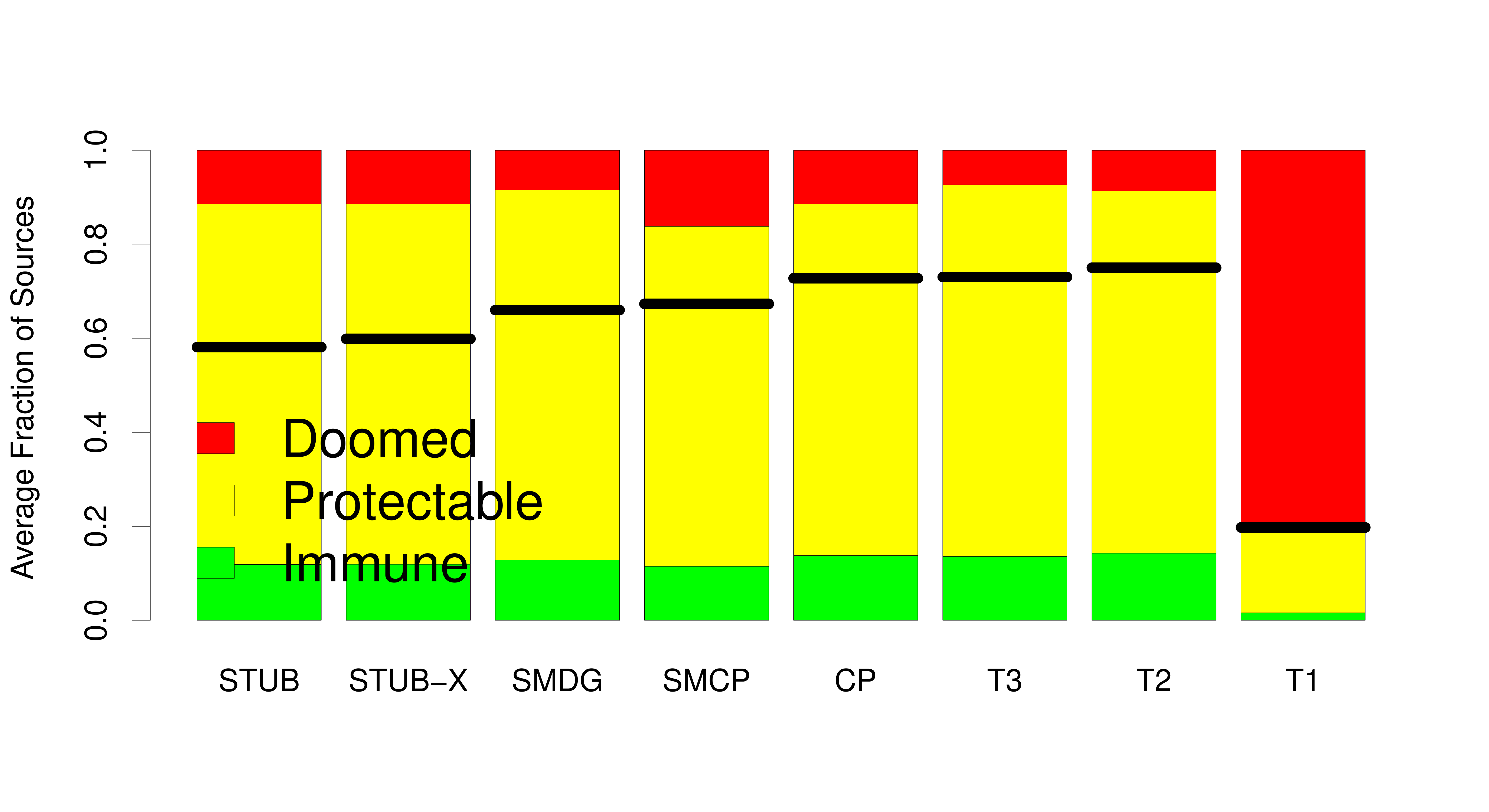}
    \vspace{-3mm}\caption{Partitions by destination tier. Sec \second.}
    \vspace{-5mm}\label{fig:partitions:dest:ss}\end{center}
\end{figure}

\begin{figure}
\begin{center}
    \includegraphics[width=2.7in]{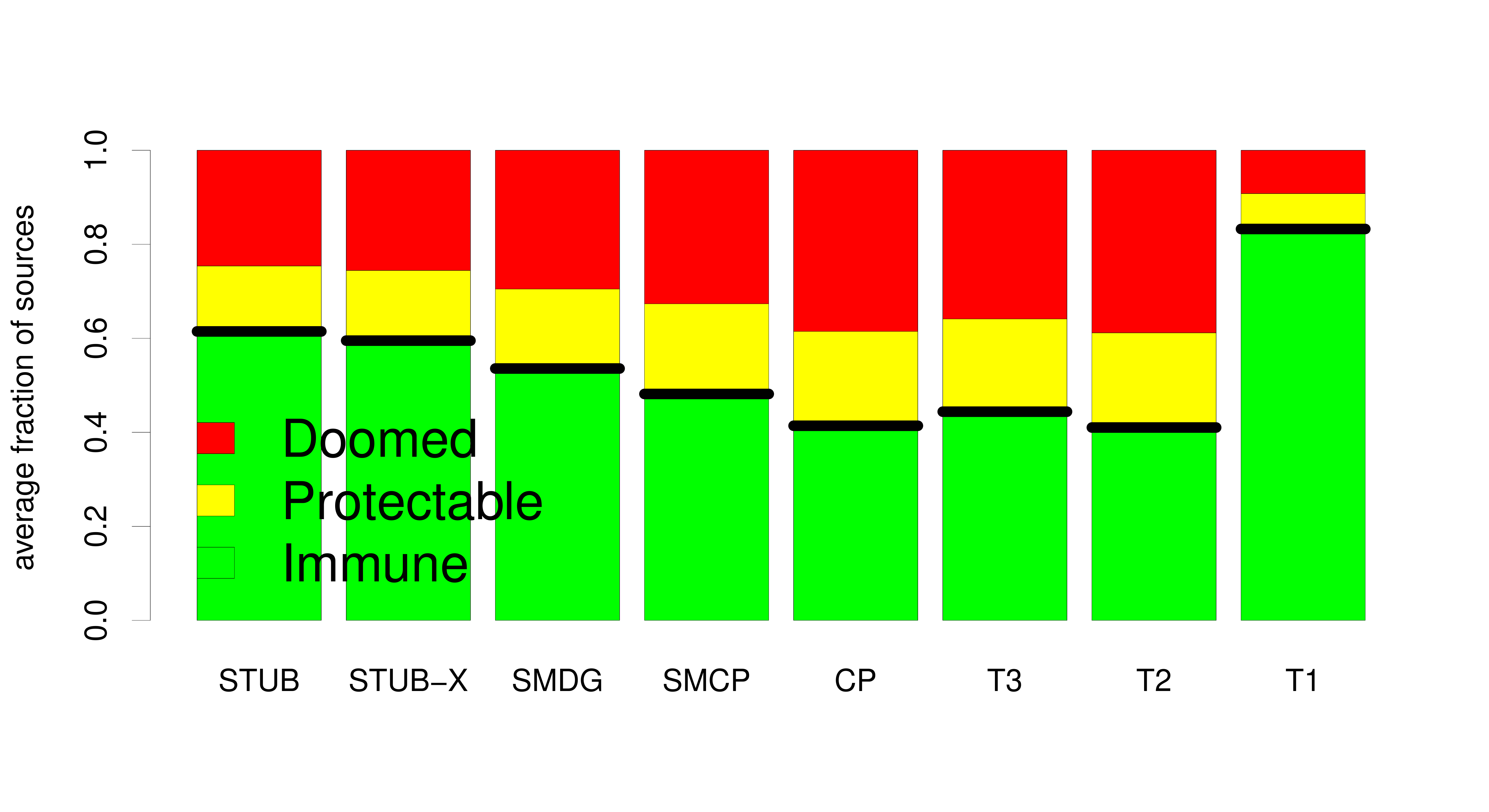}
    \vspace{-3mm}\caption{Partitions by attacker tier. Sec \third.}
    \vspace{-5mm}\label{fig:partitions:attacker:sl}\end{center}
\end{figure}

\myparab{Tier 1s can still be protected as sources.}  However, before we completely give up on the Tier 1s obtaining any benefit from S*BGP, we reproduced Figures~\ref{fig:partitions:dest:sl}~-~\ref{fig:partitions:dest:ss}
but this time, bucketing the results by the tier of source. (Figure omitted.)  We found that each source tier, \emph{including} the Tier 1s, has roughly the same average number of doomed (25\%), immune (60\%), and protectable (15\%) ASes.  It follows that, while S*BGP cannot protect Tier 1 destinations from attack, S*BGP still has the potential to prevent a Tier 1 sources from choosing a bogus route.

\myparab{Robustness of results.}  We repeated this analysis on our IXP-augmented graph (Appendix~\ref{apx:ixp}) and using different routing policies (Appendix~\ref{apx:stubborn_partitions}). Please see the appendices for details.

\fi

\section{Deployment Scenarios}\label{sec:results:metric}\label{sec:ixpedge}

In Section~\ref{sec:upperLower} we presented upper bounds on the improvements in security from S*BGP deployment for choice of secure ASes $S$. We found that while only meagre improvements over origin authentication are possible in the security \third model, better results are possible in the security \second and \first models.  However, achieving the bounds in Section~\ref{sec:upperLower} could require full S*BGP deployment at every AS. What happens in more realistic deployment scenarios?  First, we find that the security \second model often behaves disappointingly like the security \third model.  We also find that  Tier 1 destinations remain most vulnerable to attacks when security is \second or \third. We conclude the section by presenting prescriptive guidelines for  partial S*BGP deployment.

\myparab{Robustness to missing IXP edges.}
\ifnum\full=0
In the full version, we repeat the analysis in Section~\ref{sec:bigDeps}-\ref{sec:guidelines} on IXP-edge-augmented AS graph, and see almost identical trends.
\else
We repeated the analysis in Section~\ref{sec:bigDeps}-\ref{sec:guidelines} over the AS graph augmented with IXP peering edges and saw almost identical trends. We see a slightly higher baseline of happy ASes when $S=\emptyset$ (Section~\ref{sec:upperLower}), which almost always causes the improvement in the metric (over the baseline scenario) to be slightly smaller for this graph. (Plots in Appendix~\ref{apx:ixp}.)
\fi

\subsection{It's hard to decide whom to secure.}\label{sec:hard}

We first need to decide which ASes to secure. Ideally, we could choose the smallest set of ASes that maximizes the value of the metric. To formalize this, consider the following computational problem, that we call ``Max-k-Security'': Given an AS graph, 
a specific attacker-destination pair $(m,d)$, and a parameter $k>0$, find a set $S$ of secure ASes of size $k$ that maximizes the total number of happy ASes. Then:

\begin{theorem}\label{thm:hard}
Max-k-Security is NP-hard in all three routing policy models.
\end{theorem}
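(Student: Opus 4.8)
The plan is to prove NP-hardness by a polynomial-time reduction from \emph{Maximum Coverage}: given a universe $U=\{u_1,\dots,u_n\}$, a collection of sets $C_1,\dots,C_t\subseteq U$, and integers $k,\ell$, decide whether one can pick $k$ sets whose union has size at least $\ell$. The structural fact I would exploit is exactly the observation of Section~\ref{sec:doomedImm}: a protectable source becomes happy \emph{only} by acquiring a secure route to $d$, and a secure route requires \emph{every} AS on it to lie in $S$. Hence securing a single AS can simultaneously protect all sources that route through it, while protecting a given source may require securing a whole path --- precisely the ``one set covers many elements'' structure of a covering problem. Since \theoremref{thm:convergence} guarantees a unique stable routing outcome for every $S$, the number of happy ASes is a well-defined function of $S$, so the optimization problem is meaningful and the reduction is sound.

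Concretely, I would build an AS graph containing the destination $d$, the attacker $m$ announcing the bogus edge ``$m,d$'', one \emph{hub} AS $h_j$ per set $C_j$ attached directly to $d$, and one insecure \emph{stub} AS $v_i$ per element $u_i$, with $v_i$ attached to $h_j$ exactly when $u_i\in C_j$. The business relationships, path lengths, and the position of $m$ would be arranged so that (i) an \emph{insecure} hub runs plain BGP and selects the bogus route to $m$ --- so it is unhappy and, by \Ex, exports that bogus route to its stub customers --- whereas (ii) a hub in $S$ (together with $d\in S$) has a \emph{secure} one-hop route to $d$ that the \SecP step makes it prefer, so it selects and exports the legitimate route. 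A stub $v_i$, being insecure, simply follows \LP/\SP/\TB, and I would set its preferences so that it is happy iff at least one of its hubs exports the legitimate route, \ie iff at least one set containing $u_i$ is secured. Then securing $d$ plus a size-$k$ set of hubs makes exactly $k$ hubs happy and one happy stub per covered element, so the number of happy ASes equals $k$ plus the coverage. Because securing a stub, or securing $m$ (which lies on no legitimate path to $d$), creates no secure route to $d$ and hence confers no benefit, an optimal budget is always spent on $d$ together with hubs; maximizing happy ASes therefore maximizes coverage, completing the reduction.

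The delicate part is making the gadget decisive in \emph{all three} models at once, and here the security~\third model is the main obstacle. When security is \first, \SecP dominates \LP, so a secured hub flips to the legitimate route regardless of relationships and essentially any gadget works. When security is \second, the flip occurs provided the hub's legitimate route to $d$ and its bogus route to $m$ share the same \LP class; I would make both (say) provider routes and then use \SP to drive the insecure hub onto the shorter bogus route. Security~\third is tightest, since \SecP fires only among routes that already tie on both \LP \emph{and} \SP: I must engineer the hub's legitimate and bogus routes to have \emph{identical} local-preference class and \emph{identical} length, with the residual \TB tie broken (by a fixed deterministic intradomain rule) in favor of $m$, so that an insecure hub picks the bogus route via \TB while a secured hub picks the legitimate route via \SecP. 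The crux of the proof is to verify that these exact \LP/\SP ties can be realized simultaneously at every hub, and that route propagation and \Ex do not inadvertently render any stub \emph{immune} or \emph{doomed} (which would decouple a stub's fate from its hubs' security and break the coverage correspondence); once the gadget is shown to be globally consistent, all three reductions follow from the same construction with only the route-ranking parameters adjusted per model.
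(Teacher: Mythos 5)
Your high-level instinct --- encode a covering problem by letting one secured AS protect many sources --- matches the paper's actual proof, which reduces the decision version of the problem from Set Cover. But your gadget has a genuine gap, and it sits at the stubs, not (as you anticipated) at the hubs. In the security \second and \third models your per-element stub $v_i$ stays \emph{insecure}, so when its several hub providers each export a route, the legitimate route from a secured hub and the bogus routes from insecure hubs all tie at $v_i$ on \LP (all provider routes) and on \SP (the hub-level length tie plus one hop), and the choice falls to $v_i$'s tiebreak \TB. But \TB is a fixed, deployment-independent rule, while \emph{which} hub carries the legitimate route depends on $S$. So you cannot ``set its preferences so that it is happy iff at least one of its hubs exports the legitimate route'': under any fixed neighbor ordering, $v_i$ is happy iff its single \TB-favorite hub is secured, i.e., each element is effectively coverable only by one designated set; the objective then degenerates into picking the $k$ hubs with the most assigned stubs (solvable greedily), and the coverage correspondence collapses. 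Nor can you escape by breaking the tie on length: if the legitimate route is strictly shorter at the stub, it is already strictly shorter at the insecure hub, which then selects and exports it, making every stub happy regardless of $S$. (Relatedly, your second paragraph is internally inconsistent with your third: a hub attached \emph{directly} to $d$ has a 1-hop legitimate route that, within the same \LP class, beats the 2-hop bogus route ``$m,d$'' at the \SP step, so in the security \second and \third models the insecure hub is immune and never chooses the bogus route; the identical-length tie you correctly demand for security \third forces an intermediate AS between hubs and $d$ --- which is precisely what reproduces the fatal stub-level tie.)

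The paper's construction avoids this trap by making the per-element gadgets \emph{themselves} members of $S$: each element AS $e_i$ has two 2-hop \emph{customer} routes, one through a set AS $s_j$ to $d$ and one through $m$, with $e_i$'s \TB fixed to prefer the route through $m$. An insecure $e_i$ is then forced unhappy by \TB, so any solution must spend budget on $d$ and all $n$ element ASes, and the remaining $\gamma$ secured ASes must be set ASes forming a cover --- because at a \emph{secure} $e_i$ the tie is broken by \SecP rather than \TB, uniformly in all three models. If you restructure your reduction so that the protectable, tie-facing nodes are the ones being secured (rather than trying to steer insecure stubs through a tie they cannot resolve), your covering reduction goes through essentially as the paper's does.
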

The proof is
\ifnum\full=0
in the full version.
\else
in Appendix~\ref{apx:hardness_results}.
\fi
This result can be extended to the problem of choosing the set of secure ASes that maximize the number of happy ASes over \emph{multiple} attacker-destination pairs (which is what our metric computes).

\subsection{Large partial deployments.}\label{sec:bigDeps}

\begin{figure}
 \subfigure[]{\includegraphics[width=.243\textwidth]{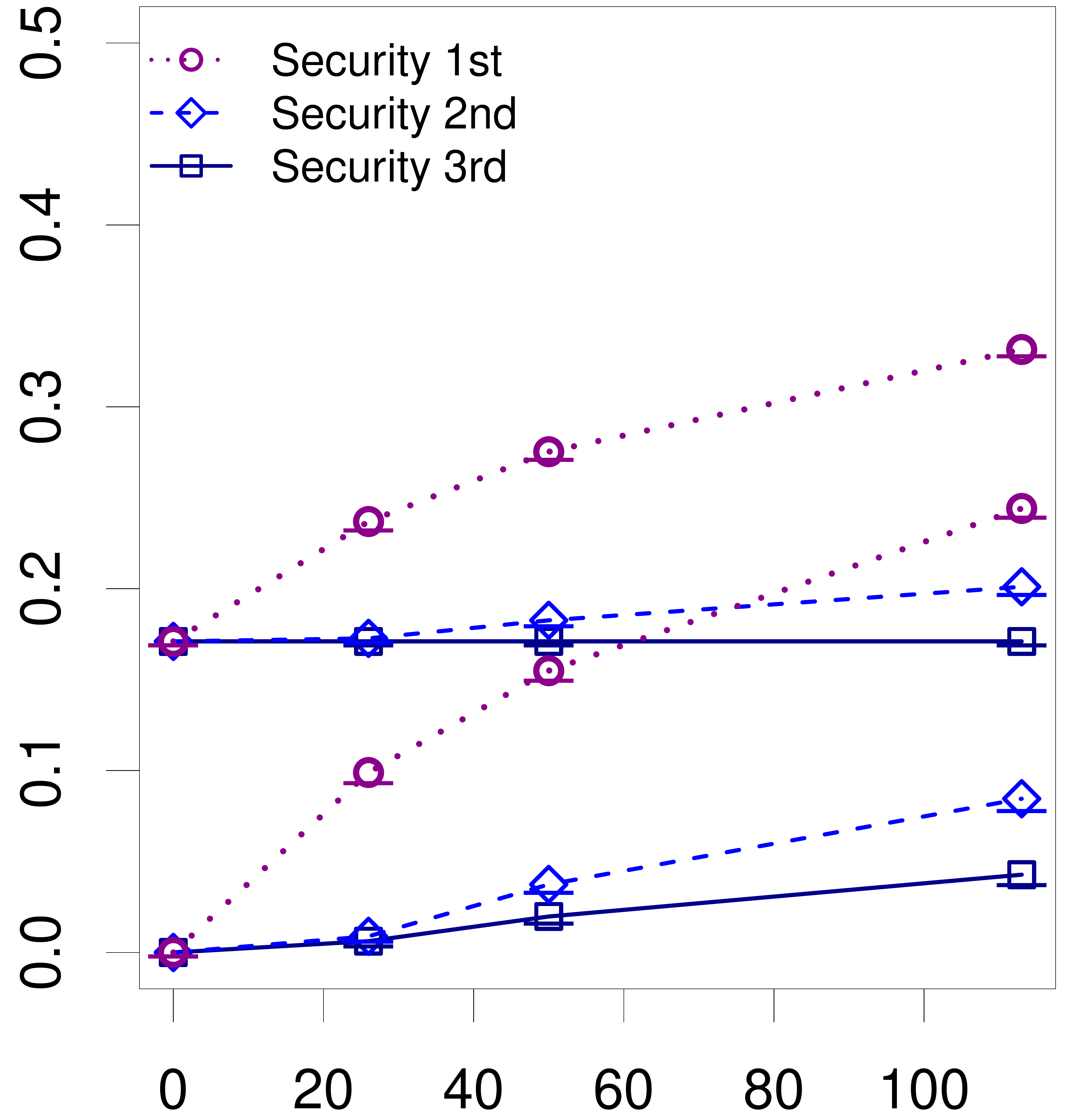} \label{fig:metric:bigdeps}}
 \subfigure[]{\includegraphics[width=.214\textwidth]{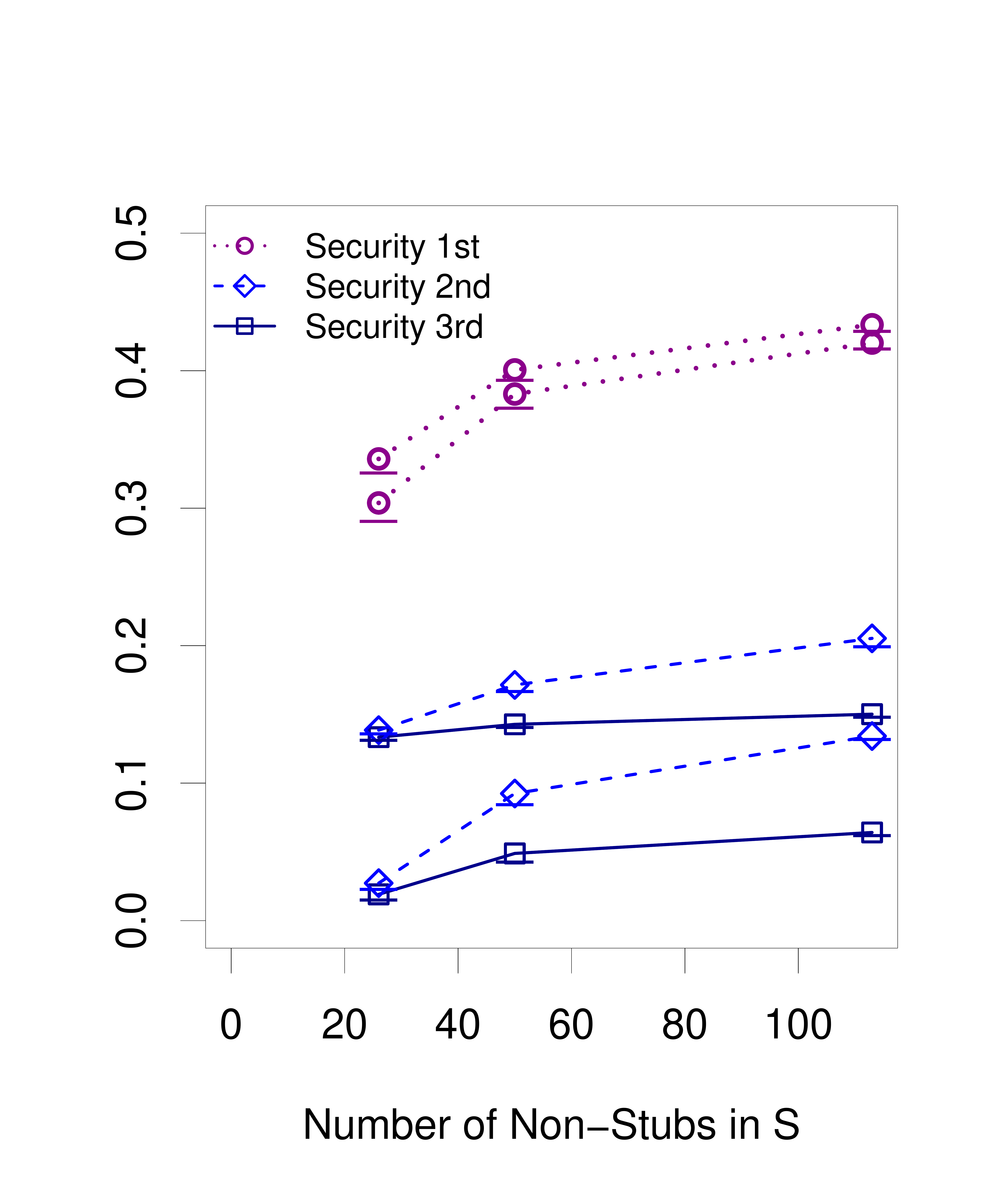} \label{fig:metric:Sonly}}
 \vspace{-5mm}
 \caption{Tier 1+2 rollout: For each step $S$ in rollout, upper and lower bounds on \subref{fig:metric:bigdeps} $H_{M',V}(S) - H_{M',V}(\emptyset)$ and \subref{fig:metric:Sonly} $H_{M',V}(S)-H_{M',d}(\emptyset)$ averaged over all $d \in S$. The $x$-axis is the number of non-stub ASes in $S$.  The ``error bars'' are explained in Section~\ref{sec:simplex}.}
  \vspace{-3mm}
 \end{figure}

Instead of focusing on choosing the optimum set $S$ of ASes to secure (an intractable feat), we will instead consider a  few partial deployment scenarios among high-degree ASes $S$, as suggested in practice~\cite{FCCcommit} and in the literature~\cite{adopt,JenYannis,adoptability}.

\myparab{Non-stub attackers.} We now suppose that the set of attackers is the set of non-stub ASe in our graph $M'$ (\ie not ``Stubs'' or ``Stubs-x'' per Table~\ref{tab:tiers}). Ruling out stub ASes is consistent with the idea that stubs cannot launch attacks if their providers perform prefix filtering~\cite{BGPattack,BGPsurvey}, a functionality that can be achieved via IRRs~\cite{IRRpt} or even the RPKI~\cite{roaRPSL}, and does not require S*BGP.

\subsubsection{Security across all destinations.}\label{sec:bigdep:all}

Gill~\etal~\cite{adopt} suggest bootstrapping S*BGP deployment by having secure ISPs deploy S*BGP in their customers that are stub ASes. 
We therefore consider this ``rollout'':

\myparab{Tier 1 \& Tier 2 rollout. }  Other than the empty set, we consider three different secure sets.  We secure $X$ Tier 1's and $Y$ Tier 2's and all of their stubs, where {\small $(X,Y) \in\{(13,13),(13,37),(13,100)\}$}; this corresponds to securing about 33\%, 40\%, and 50\%  of the AS graph.   

\smallskip
\noindent
The results are shown in \textbf{Figure~\ref{fig:metric:bigdeps}}, which plots, for each routing policy model, the increase in the upper- and lower bound on $H_{M',V}(S)$ (Section~\ref{sec:metric}) for each set $S$ of secure ASes in the rollout ($y$-axis), versus the number of non-stub ASes in $S$ ($x$-axis). We make a few  important observations:

\myparab{Tiebreaking can seal an AS's fate.} Even with a large deployment of S*BGP, the improvement in security is highly dependent on the vagarities of the intradomain tiebreaking criteria used to decide between \emph{insecure} routes. (See also Section~\ref{sec:metric}'s discussion on tiebreaking.) Even when we secure 50\% of ASes in the security \first model (the last step of our rollout), there is still a gap of more than 10\% between the lower and upper bounds of our metric.  Thus, in a partial S*BGP deployment, there is a large fraction of ASes that are balanced on a knife's edge between an insecure legitimate route and an insecure bogus route; only the (unknown-to-us) intradomain routing policies of these ASes can save them from attack. This is inherent to any partial deployment of S*BGP, even in the security~\first model.

\myparab{Meagre improvements even when security is \second.}  As expected, the biggest improvements come in the security \first model, where ASes make security their highest priority and deprecate all economic and operational considerations.  When security is \first and 50\% of the AS graph is secure (at the last step in the rollout), the improvement over the baseline scenario is significant; about $24\%$.
While we might hope that the security \second model would present improvements that are similar to those achieved when security is \first, this is unfortunately not the case. In both the security \second and \third models we see similarly disappointing increases in our metric. 
We explain this observation in  Section~\ref{sec:postmortem}.

\subsubsection{Focus on the content providers?}\label{sec:metric:CPs}

Since much of the Internet's traffic originates at the content providers (CPs), we might consider the impact of S*BGP deployment on CPs only.
We considered the same rollout as above, but with all 17 CPs secure, and computed the metric \emph{over CP destinations only}, \ie $H_{M',CP}(S)$.
\ifnum\full=0
Results, shown in the full version, were similar to Figure~\ref{fig:metric:bigdeps}.
\else
The results, presented in Figure~\ref{fig:metric:bigdeps:CP}, are very similar to those in  Figure~\ref{fig:metric:bigdeps}:  improvements of at least $26\%$ $9.4\%$, and $4\%$  for security \first, \second, and \third  respectively.  We note, however, that CP destinations have a higher fraction of happy sources than other destinations on average, (see Figure \ref{fig:partitions:dest:sl}).
\fi

\ifnum\full=1
\begin{figure}
    \begin{center}
      \includegraphics[width=2in]{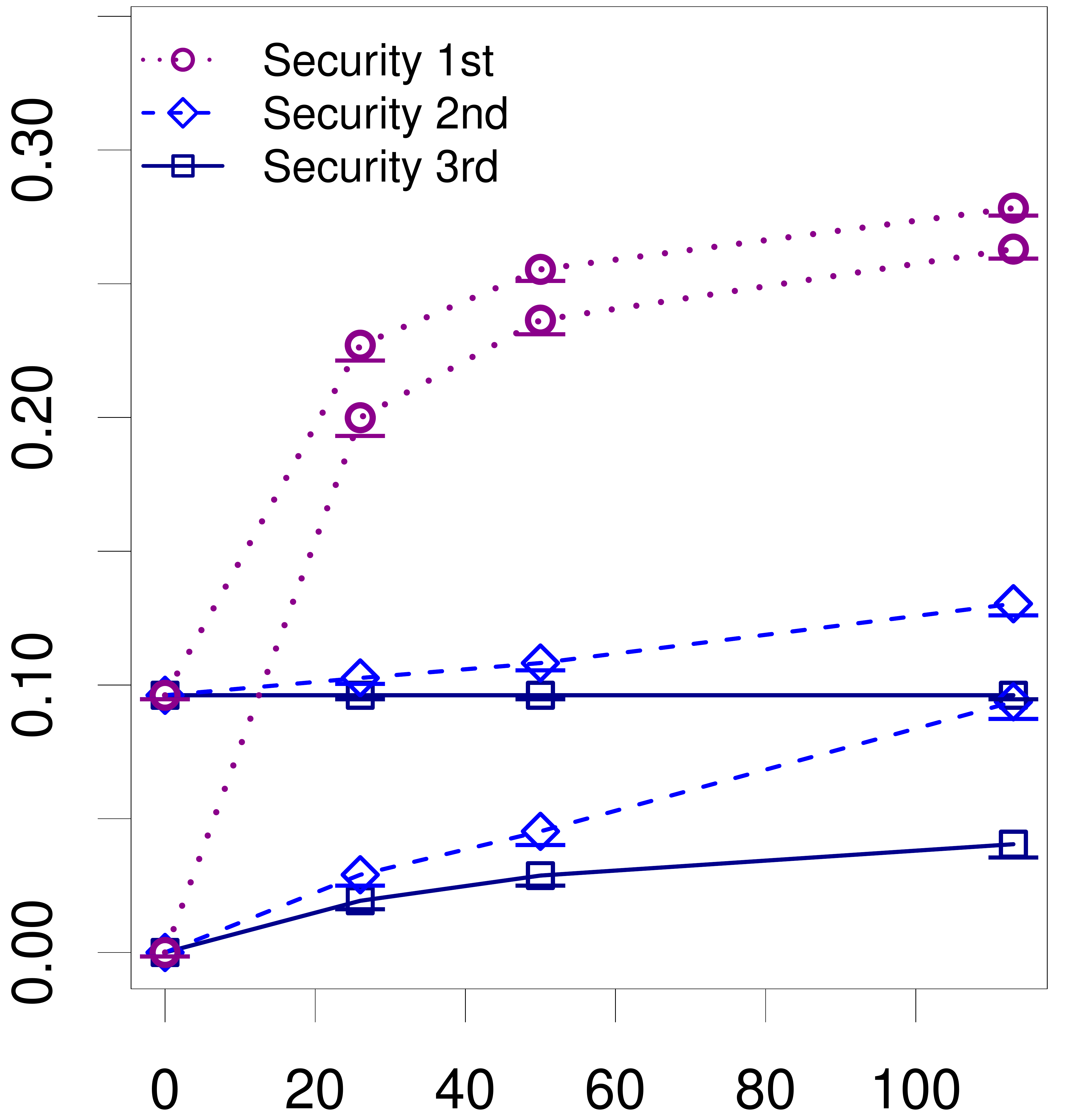} \vspace{-3mm}
      \caption{Tier 1+2+CP rollout: $H_{M',CP}(S) - H_{M',C}(\emptyset)$ for each step in the rollout. The $x$-axis is the number of non-stub, non-CP ASes in $S$.}
    \vspace{-5mm}\label{fig:metric:bigdeps:CP}
    \end{center}
\end{figure}
\fi

\subsubsection{Different destinations see different benefits.}\label{sec:uneven}\label{sec:islands}

Thus far, we have looked at the impact of S*BGP in aggregate across all destinations $d \in V$ (or $d\in CP$).  Because secure routes can only exist to secure destinations, we now look at the impact of S*BGP on \emph{individual secure destinations} $d\in S$, by considering $H_{M',d}(S)$.

\myparab{Figure~\ref{fig:metric:Sonly}.} We plot the upper and lower bounds on the \emph{change} in the metric, \ie $H_{M',d}(S)-H_{M',d}(\emptyset)$, averaged across \emph{secure destinations only}, \ie $d \in S$.
As expected, we find large improvements when security is \first, and small improvements when security is \third.  Interestingly, however, when security is \second the metric does increase by $13-20\%$ by the last step in the rollout; while this is still significantly smaller than what is possible when security is \first, it does suggest that at least some secure destinations benefit more when security is \second, rather than \third.

\smallskip\noindent
For more insight, we zoom in on this last step in our rollout:

\begin{figure}
 \includegraphics[width=.45\textwidth]{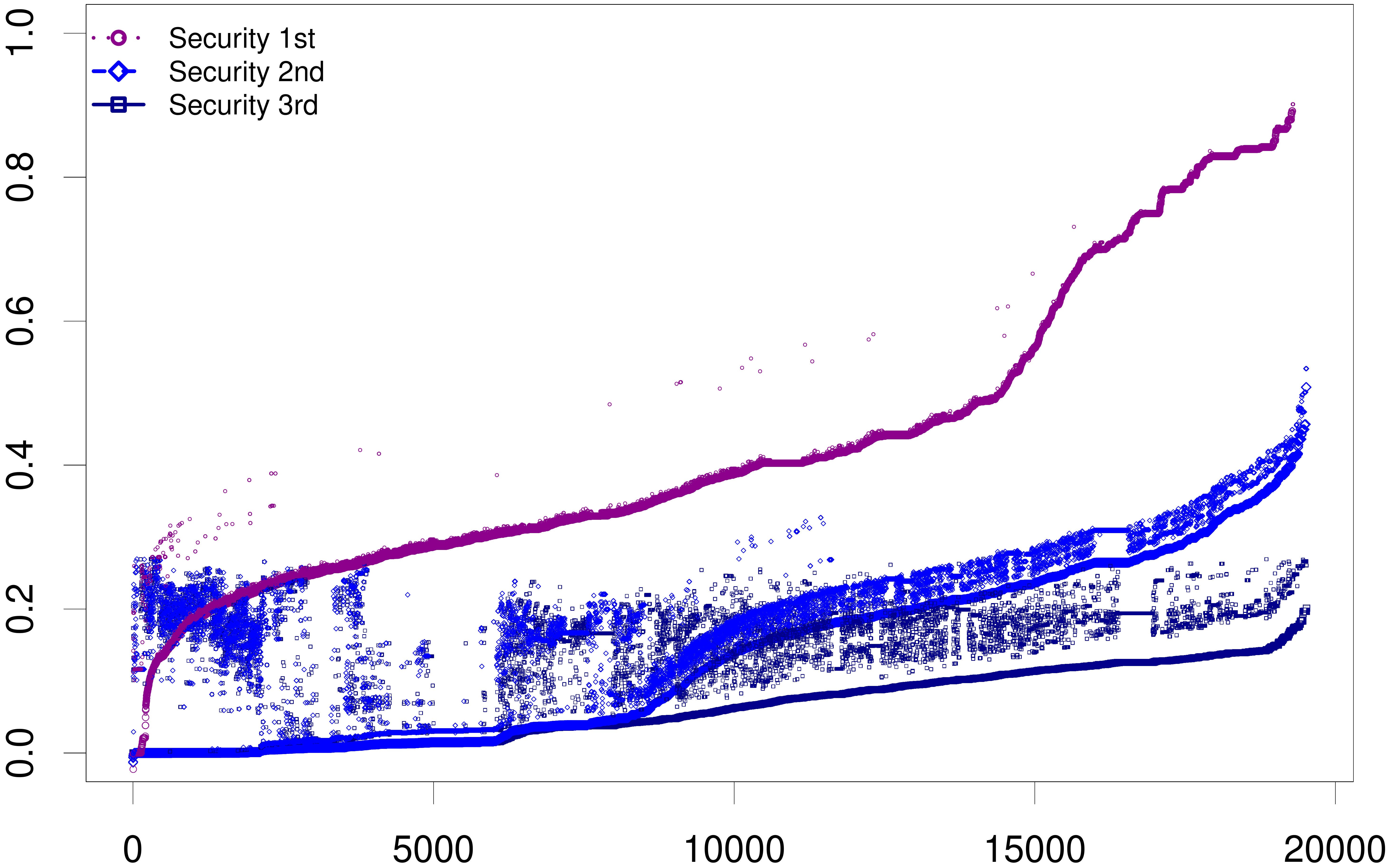}
 \vspace{-2mm}
 \caption{Non-decreasing sequence of $H_{M',d}(S)-H_{M',d}(\emptyset)$ $\forall$ $d \in S$. $S$ is all T1s, T2s, and their stubs.}\label{fig:seq}
 \vspace{-3mm}
 \end{figure}

\myparab{Figure~\ref{fig:seq}.}  For the last step in our rollout, we plot upper and lower bounds on the \emph{change} in the metric, \ie $H_{M',d}(S)-H_{M',d}(\emptyset)$, for each \emph{individual} secure destination $d\in S$.  For each of our three models, the lower bound for each $d\in S$ is plotted as a non-decreasing sequence; these are the three ``smooth'' lines. The corresponding upper bound for each $d\in S$ was plotted as well. For security \first, the upper and lower bounds are almost identical, and for security \second and \third, the upper bounds are the ``clouds'' that hover over the lower bounds. A few observations:

\myparab{Security \first provides excellent protection. } We find that when security is \first, a \emph{secure} destination can reap the full benefits of S*BGP even in (a large) partial deployment. To see this, we computed the \emph{true value} of $H_{M',d}(S)$  for all secure destinations $d\in S$, and found that it was between $96.8-97.9\%$ on average (across all $d\in S$).

\myparab{Security \second and \third are similar for many destinations.}  Figure~\ref{fig:seq} also reveals that many destinations obtain roughly the same benefits from S*BGP when security is \second as when security is \third.  Indeed, $93 \%$ of 7500 secure destinations that see $<4\%$ (lower-bound) improvement in Figure~\ref{fig:seq} when security is \third, do the same when security is \second as well.
What is the reason for this?  There are certain types of protocol downgrade attacks that succeed \emph{both} when security is \second and when security is \third (\ie when the bogus path has better \LP than the legitimate path, see \eg Figure~\ref{fig:pda2nd}). In Section~\ref{sec:postmortem} we shall show that protocol downgrade attacks are the most significant reason for the metric to degrade; therefore, for destinations where these ``\LP-based'' protocol downgrade attacks are most common, the security \second model looks much like the security \third model.

\myparab{Tier 1s do best when security is \first, and worst when it is \second or \third.} 
When security is \first, our data also shows that the secure destinations that obtain the largest ($>40\%$) increases in their security metric $H_{M',d}(S)$ (relative to the baseline setting $H_{M',d}(\emptyset)$) include: (a) all 13 Tier 1s, and (b) $\geq 99\%$ of  ``Tier 1 stub'' destinations (\ie stub ASes such that all their providers are Tier 1 ASes).
On the other hand, these same destinations experience the \emph{worst} improvements when security is \second or \third (\ie a lower bound of $<3\%$).

To explain this, recall from Section~\ref{sec:T1suck} that when security is \second or \third, most source ASes that want to reach a Tier 1 destination are \emph{doomed}, because of protocol downgrade attacks like the one shown in Figure~\ref{fig:pda2nd}.  This explains the meagre benefits these destinations obtain when security is \second or \third.
On the other hand, protocol downgrade attacks fail when security is \first. Therefore, in the security \first model, the Tier 1 destinations (and by extension, Tier 1 stub destinations) obtain excellent security when S*BGP is partially deployed; moreover, they see most significant gains simply because they were so highly vulnerable to attacks in the absence of S*BGP (Figure~\ref{fig:partitions:dest:sl}, Section~\ref{sec:T1suck}).

\myparab{Security \second helps some secure destinations.}  Finally, when security is \second, about half of the secure destinations $d\in S$ see benefits that are discernibly better than what is possible when security is \third, though not quite as impressive as those when security is \first.  These destinations include some Tier 2s and their stubs, but never any Tier 1s.

\noindent Similar observations hold for earlier steps in the rollout.

\ifnum\full=1

\subsubsection{What happens when the Tier 1s are not secure?}\label{sec:results:T2:nonstub}

\begin{figure}
 \includegraphics[width=.45\textwidth]{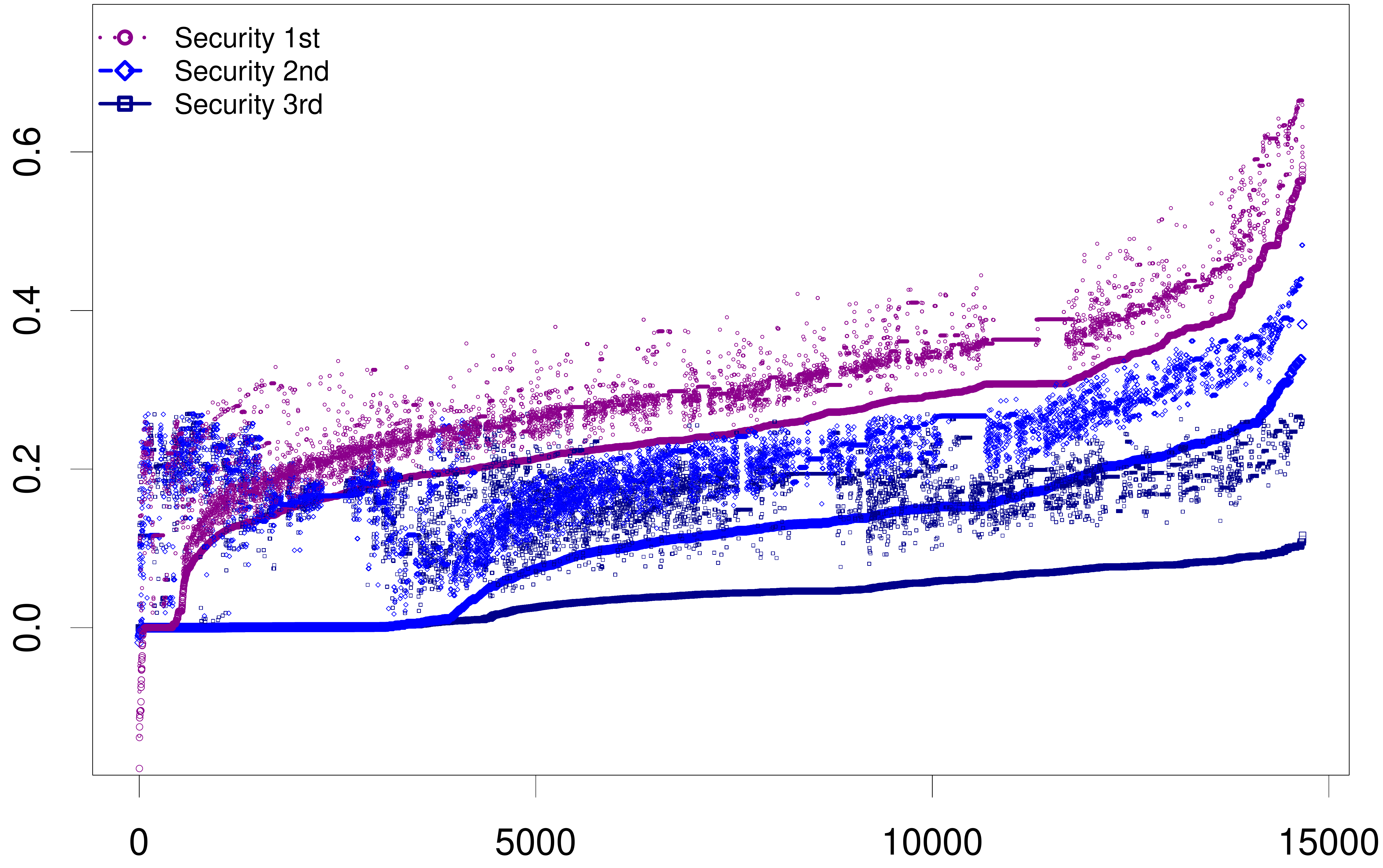}
 \vspace{-2mm}
 \caption{Non-decreasing sequence of $H_{M',d}(S)-H_{M',d}(\emptyset)$ $\forall$ $d \in S$. $S$ is all T2s, and their stubs.}\label{fig:seq:T2}
 \vspace{-3mm}
 \end{figure}

 \begin{figure}
    \begin{center}
      \includegraphics[width=2in]{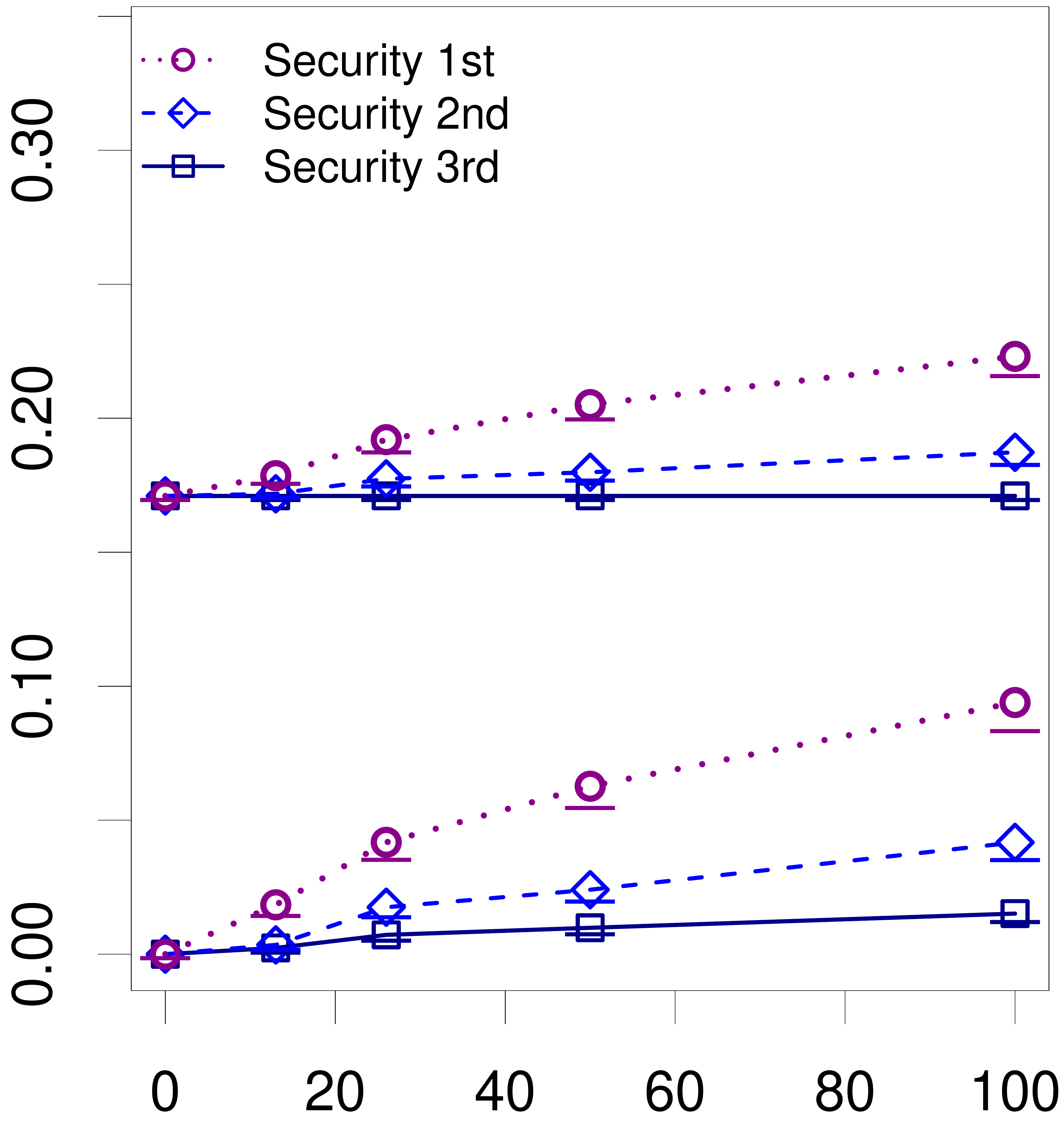} \vspace{-3mm}
      \caption{Tier 2 rollout: $H_{M',D}(S) - H_{M',D}(\emptyset)$ for each step in the T2 rollout. The $x$-axis is the number of non-stub, non-CP ASes in $S$.}
    \vspace{-5mm}\label{fig:metric:bigdeps:T2}
    \end{center}
\end{figure}

The results of the previous section motivate considering a deployment that excludes securing the Tier 1 ISPs.

\myparab{Secure just the Tier 2s?} We reproduce the analysis of Section~\ref{sec:bigdep:all} and Section~\ref{sec:uneven} with a rollout among only the Tier 2s and theirs stubs.  There are 100 Tier 2 ISPs in our AS graph (Table~\ref{tab:tiers}), and our Tier 2 rollout secures $Y$ Tier 2 ASes, and all of their stubs, where {\small $Y \in \{13, 26, 50, 100 \}$}; this amounts to securing about 18\%, 24\%, 30\%, and 38\% of ASes.

The results in \textbf{Figure~\ref{fig:metric:bigdeps:T2}} are similar to those in Figure~\ref{fig:metric:bigdeps}, except that the metric grows even more slowly, and  we see smaller improvements when security is \first.  This is consistent with our earlier observation (Section~\ref{sec:islands}) that the most dramatic improvements observed when security is \first are for Tier 1 destinations; the improvements for Tier 2 destinations and their stubs are much smaller when security is \first.   This causes the gap between the security \second and \first models to become smaller for the Tier 2 rollout  (relative to the Tier 1+2 rollout of Section~\ref{sec:uneven}); this can be observed from \textbf{Figure~\ref{fig:seq:T2}} which reproduces the results of Figure~\ref{fig:seq} for the last step of the Tier 2 rollout.  However, the gap between security \second and \first is smaller not only because Tier 2s see bigger improvements when security is \second model; this is also because they see worse improvements when security is \first.

 \begin{figure}
 \includegraphics[width=.45\textwidth]{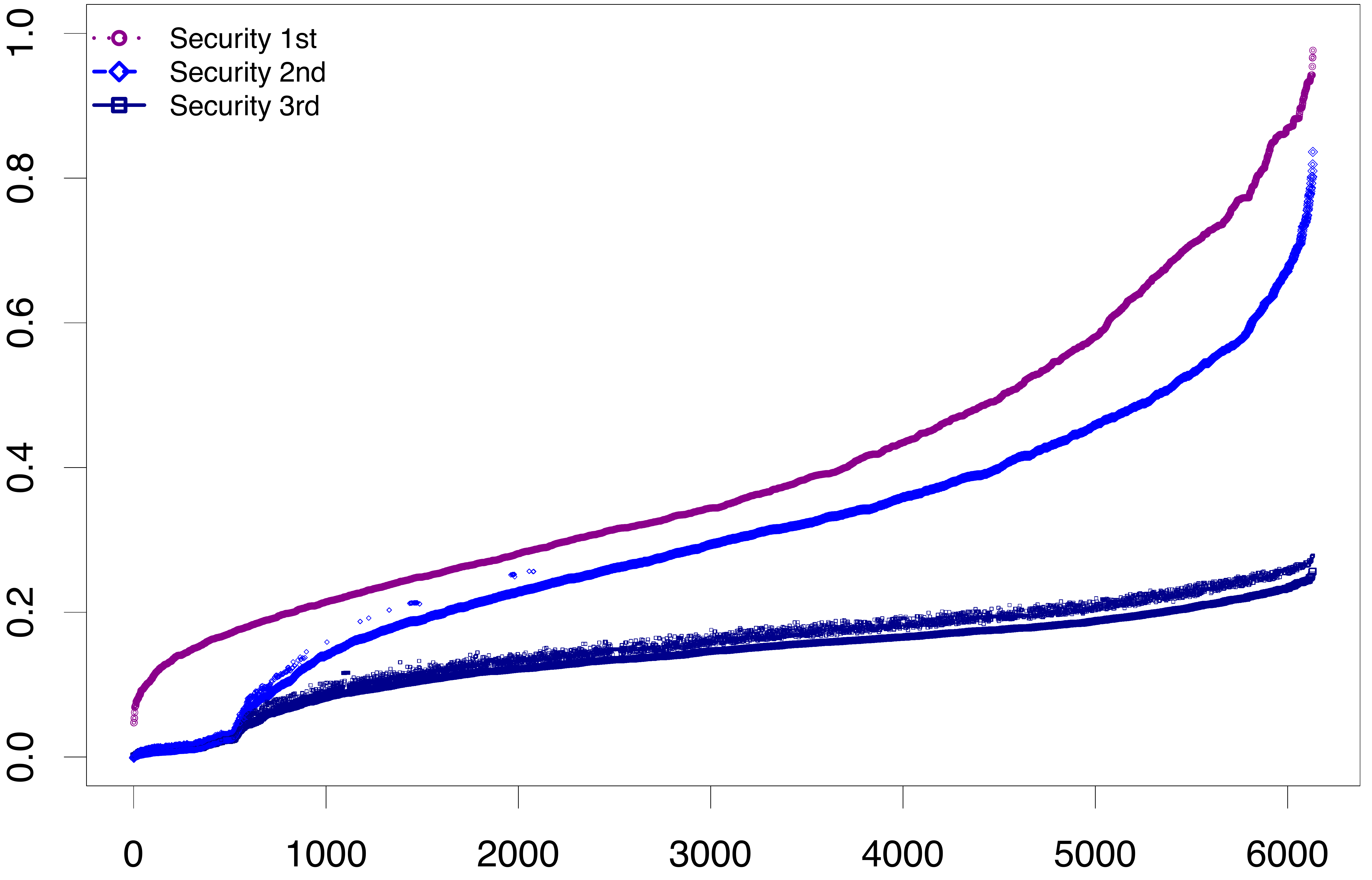}
 \vspace{-2mm}
 \caption{Non-decreasing sequence of $H_{M',d}(S)-H_{M',d}(\emptyset)$ $\forall$ $d \in S$. $S$ is all non stubs.}\label{fig:seq:nonstub}
 \vspace{-3mm}
 \end{figure}

\myparab{Secure just the nonstubs?} Finally, we consider securing \emph{only non-stub ASes} (\ie 6178 ASes or 15\% of the AS graph). We see a $6.2\%$, $4.7\%$ and $2.2\%$ worst-case improvement in the metric $H_{M',D}(S)$ when security is \first, \second, and \third respectively; this scenario therefore is similar to last step in our Tier 2 rollout, with exception that the gap between the security \second and \first model is even smaller. This is corroborated by \textbf{Figure~\ref{fig:seq:nonstub}}, which reproduces the results of Figure~\ref{fig:seq} for the scenario where only non-stub ASes are secure.  We see that the benefits available when security is \second almost reach those that are possible when security is \first.

\mypara{Summary.} Taken together, our suggest that in the security \first model, destinations that are Tier 1s or their stubs see the largest improvements in security. In such cases, the security \second model behaves much like the security \third model.  However, in cases where Tier 1s and their stubs are not secure, the gap between the security \second and \first model diminishes, in exchange for smaller gains when security is \first.

\fi

\subsection{Prescriptive deployment guidelines.}\label{sec:guidelines}

Section~\ref{sec:stability} suggested that ASes use consistent routing policies. We now suggest a few more deployment guidelines.

\subsubsection{On the choice of early adopters.}\label{sec:T1suck:metric}\label{sec:T1suck:why}

Previous work~\cite{adopt,JenYannis,adoptability} suggests that Tier 1s should be the earliest adopters of S*BGP. 
However, the discussion in Sections~\ref{sec:T1suck}~and~\ref{sec:islands} suggests that securing Tier 1s might not lead to good security benefits at the early adoption stage, when ASes are most likely to rank security \second or \third. We now confirm this.

\myparab{All Tier 1s and their stubs.} Even in a deployment that includes \emph{all} 13 Tier 1 ASes and their stubs (\ie 7872 ASes or $\approx20\%$ of the AS graph), improvements in security were almost imperceptible.  With security \second or \third, the average change in $H_{M',d}(S)-H_{M',d}(\emptyset)$ over secure destinations $d\in S$ causes the metric to increase by $<0.2\%$.

\myparab{Tier 1s, their stubs, and content providers.}  Following~\cite{adopt,FCCcommit}, we consider securing the CPs, the Tier 1s and all of their stubs, and obtained similar results.

\ifnum\full=1

\myparab{Analysis. } Why is a deployment at more than 20\% of the ASes in AS graph, including the large and well-connected Tier 1s, provide so little improvement in security? Recall that in Section~\ref{sec:T1suck} and Figure~\ref{fig:partitions:dest:sl}, we showed that when Tier 1 destinations are attacked, the vast majority of source ASes are doomed and almost none are protectable. It follows that if a source retains a secure route to a Tier 1 destination during an attack, that source is likely to be immune. The same argument also applies to other secure destinations (\ie CPs of stub customers of T1s); this is because, in the deployment scenarios above, most secure routes traverse a Tier 1 as their first hop. Because almost every source AS that continued to use a secure route during an attack would have routed to the legitimate destination even if no AS was secure, we see little improvements in our security metric.

\textbf{Figure~\ref{fig:t1_cp_sec_r_breakdown}} confirms this. We show what happens to the secure routes to each CP destination when security is \third; similar observations hold when security is \second. The height of each bar is the fraction of routes to each CP destination that are secure under normal conditions.  The lower part of the bar shows secure routes that were lost to protocol downgrade attacks (averaged over all attacks by non-stubs in $M'$), and the middle part shows the fraction of secure routes from immune source ASes to the destination. We clearly see that (1) most secure routes are lost to protocol downgrade attacks, and (2) almost all the secure routes that remain during attacks from source ASes that are \emph{immune}.

\begin{figure}[b]
    \includegraphics[width=3.25in]{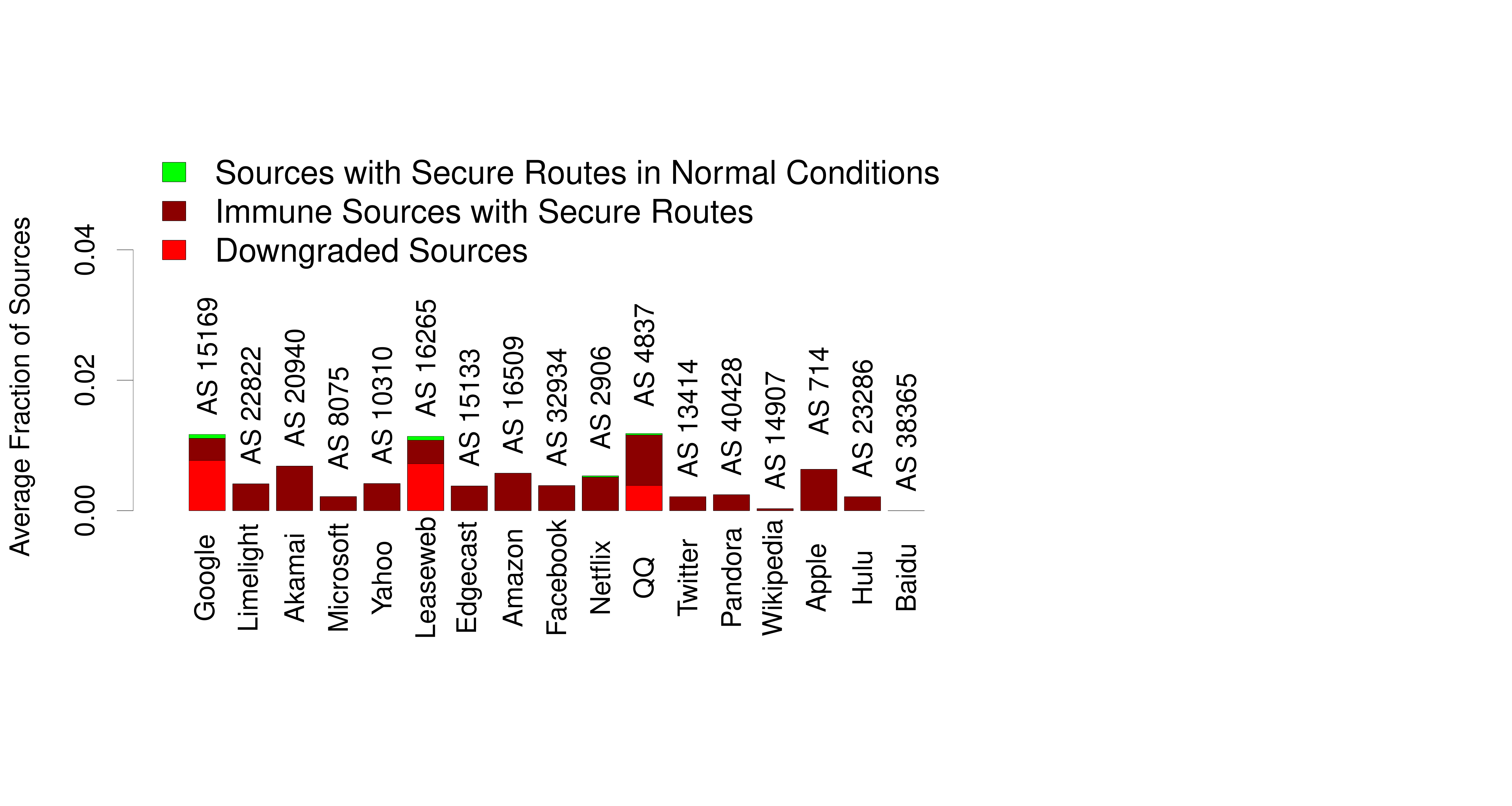}
    \vspace{-3mm}\caption{What happens to secure routes to each CP destination during attack. $S$ is the Tier 1s, the CPs, and all their stubs and security is~\third.}
    \vspace{-3mm}\label{fig:t1_cp_sec_r_breakdown}
\end{figure}	
\fi

\myparab{Choose Tier 2s as early adopters.}  We found that early deployments at the Tier 2 ISPs actually fare better than those at the larger, and better connected Tier 1s. For example, securing the 13 largest Tier 2s (in terms of customer degree) and all their stubs (a total of $6918$ ASes), the average change in $H_{M',d}(S)-H_{M',d}(\emptyset)$ over secure destinations $d\in S$ is $\approx 1\%$ when security is \second or \third.
\ifnum\full=1
This also agree with our observations in Section~\ref{sec:results:T2:nonstub}.
\fi

\subsubsection{Use simplex S*BGP at stubs.}\label{sec:simplex}

Next, we consider~\cite{BGPSEC,adopt}'s suggestion for reducing complexity by securing stubs with \emph{simplex S*BGP}.

\myparab{Simplex S*BGP.} Stub ASes have no customers of their own, and therefore (by \Ex) they will never send S*BGP announcements for routes through other ASes.  They will, however, announce routes to their own IP prefixes. For this reason \cite{adopt,BGPSEC} suggests either (1) allowing ISPs to send S*BGP messages on behalf of their stub customers or (2) allowing stubs to deploy S*BGP in a unidirectional manner, sending outgoing S*BGP messages but receiving legacy BGP messages.  Since a stub propagates only outgoing BGP announcements for a very small number of IP prefixes (namely, the prefixes owned by that stub), simplex mode can decrease computational load, and make S*BGP adoption less costly.

\smallskip\noindent
Given that 85\% of ASes are stubs, does this harm security?

\myparab{Figure~\ref{fig:metric:bigdeps}-\ref{fig:metric:Sonly}.}  The ``error bars'' in
Figure~\ref{fig:metric:bigdeps}-\ref{fig:metric:Sonly} show what happens when we suppose that all stubs run simplex S*BGP. There is little change in the metric.  To explain this, we note that (1) a stub's routing decision does not affect any other AS's routing decision, since by \Ex stubs do not propagate BGP routes from one neighbor to another, and (2) a stub's routing decisions are limited by the decisions made by its providers, so if its providers avoid attacks, so will the stub,  but (3) the stub acts like a secure destination, and therefore (nonstub) ASes establishing routes to the stub still benefit from S*BGP.
These results indicate that simplex S*BGP at stubs can lower the complexity of S*BGP deployment without impacting overall security.
\ifnum\full=1
Stub ASes that are concerned about their own security as \emph{sources} (rather than destinations) can, of course, always choose to deploy full S*BGP.
\fi

\ifnum\full=0 \newpage \fi
\section{Root Causes \& Non-Monotonicity}\label{sec:wrapup}

We now examine the reasons for the changes in our security metric as S*BGP is deployed.  We start by discussing two subtle phenomena: the collateral damages and collateral benefits incurred by insecure ASes from the deployment of S*BGP at \emph{other} ASes.  We then use these phenomena in a root-cause analysis of the results of Section~\ref{sec:results:metric}.

\subsection{Security is not monotonic!}\label{sec:examples}

The most obvious desiderata from S*BGP deployment is that the Internet should become only more secure as more ASes adopt S*BGP. Unfortunately, however, this is \emph{not} always the case. Security is not \emph{monotonic}, in the sense that securing more ASes can actually make other ASes unhappy.

To explain this, we use a running example taken from the UCLA AS graph, where the destination (victim) AS $d$ is Pandora's AS40426 (a content provider) and the attacker $m$ is an anonymized Tier 2 network. We consider the network \emph{before} and \emph{after} a partial deployment of S*BGP $S$ and see how the set of happy ASes changes; $S$ consists of all 100 Tier 2s, all 17 content providers, and all of their stubs.

\begin{table}
\begin{center}
\begin{small}
\begin{tabular}{|r|c|c|c|}
  \hline
  Security model & \first & \second & \third \\
  \hline\hline
  Protocol downgrade attacks & $X$ & $\checkmark$ & $\checkmark$ \\
  \hline
  Collateral benefits & $\checkmark$ & $\checkmark$ & $\checkmark$ \\
  \hline
  Collateral damages & $\checkmark$ & $\checkmark$ & $X$ \\
  \hline
\end{tabular}
\vspace{-2mm}
\caption{Phenomena in different security models}\label{tab:phenom}
\vspace{-7mm}
\end{small}
\end{center}
\end{table}

\subsubsection{Collateral Damages}\label{sec:Colldamages}

\myparab{Figure~\ref{fig:bensNdamages2nd}.} We show how AS~52142, a Polish ISP, suffers from collateral damage when security is \second. On the left, we show the network prior to S*BGP deployment.  AS~52142 is offered two paths, both insecure: a 3-hop path through his provider AS~5617 to the legitimate destination AS~40426, and a 5-hop bogus route to the attacker.
(The route to $m$ is really 4 hops long, but $m$ (falsely) claims a link to AS~40426 so AS~52142 thinks it is 5 hops long.)
AS~52142 will choose the legitimate route because it is shorter. On the right, we show the network after S*BGP deployment. AS~5617 has become secure and now prefers the secure route through its neighbor Cogent AS~174.  However, AS~5617's secure route is 5 hops long (right), significantly longer than the 2 hop route AS~5617 used prior to S*BGP deployment (left).  Thus, after S*BGP deployment AS~52142 learns a 6-hop legitimate route through AS~5617, and a 5-hop bogus route.  Since AS~52142 is insecure, it chooses the shorter route, and becomes unhappy as collateral damage.

\ifnum\full=1
\myparab{Collateral damages.} A source AS $s \notin S$ obtains collateral damages from an S*BGP deployment $S$ with respect to an attacker $m$ and destination $d$ if (a) $s$ was happy when the ASes in $T$ are secure, but (b) $s$ is unhappy when the ASes in $S$ are secure, and $S\supset T$.
\smallskip
\fi

\myparab{No collateral damages in the security \third model:} The collateral damage above occurs because AS 5617 prefers a \emph{longer} secure route over a shorter insecure route. This can also happen in the security \first model (but see also Appendix~\ref{apx:damages}), but not when security is \third.  See Table~\ref{tab:phenom}.
\begin{theorem}\label{thm:nodamage3rd}
In the security \third model, if an AS $s$ has a route to a destination $d$ that avoids an attacker $m$ when the set of secure ASes is $S$, then $s$ has a route to a destination $d$ that avoids attacker $m$ for every set of secure ASes in $T \supset S $.
\end{theorem}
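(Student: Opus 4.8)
The plan is to prove the monotonicity statement in contrapositive-flavored form: as the secure set grows, no \emph{happy} AS (one whose chosen route reaches $d$ while avoiding $m$) becomes unhappy. Since any $T \supseteq S$ is reached by adding secure ASes one at a time, it suffices to prove the claim for $T = S \cup \{x\}$ and then induct on $|T \setminus S|$. Throughout I would invoke Theorem~\ref{thm:convergence} to guarantee that each secure set induces a \emph{unique} stable routing state, so that ``the route $s$ uses'' is well defined, and I would lean on two structural features of the security \third model: (i) the attacker's bogus route ``$m,d$'' is announced over legacy BGP, so every route that reaches $m$ is insecure under \emph{every} secure set; and (ii) the \SecP step sits strictly below \LP and \SP, so security can only break ties, never override local preference or path length.

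The key lemma I would isolate is a \emph{signature-invariance} property: for every AS $a$, the route $a$ uses in the stable state has the same (\LP-class, path-length) signature regardless of the secure set. This holds because the \LP and \SP steps, and the export rule \Ex, depend only on this signature and never on whether a route is secure, while \SecP and \TB lie strictly below \SP and therefore only select among routes already sharing a common signature. Hence when an AS forwards its chosen route to a neighbor, the signature that neighbor observes is independent of which tie was broken below \SP, so the projection of the whole system onto (\LP, \SP) is an ordinary BGP instance whose outcome does not depend on the tiebreaking performed underneath it. This is precisely what makes the doomed / immune / protectable partition of Section~\ref{sec:doomedImm} well defined independently of $S$, and I would state the lemma to match that partition.

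With the lemma in hand the three cases finish quickly. An immune $s$ is happy for every secure set, so the conclusion is immediate; a doomed $s$ is never happy, so the hypothesis is vacuous. The substantive case is a protectable $s$, whose best routes to $d$ and to $m$ are tied in (\LP, \SP). If $s$ is happy under $S$, then either $s$ holds a secure route to $d$ at that tied signature (which \SecP selects over the always-insecure route to $m$), or $s$ holds only insecure routes at that level and \TB happens to pick one reaching $d$. In the first subcase the route stays secure under $T \supseteq S$ (enlarging the secure set never desecures a route) and remains available, so $s$ stays happy; in the second subcase I must show the route-to-$d$ that \TB selected is still available under $T$.

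The main obstacle is exactly this last point: local reasoning about $s$ alone is circular, since the availability of $s$'s route-to-$d$ depends on whether $s$'s upstream neighbors still route to $d$. I would close the loop with a global argument on the re-convergence dynamics: start from the stable state under $S$, switch the secure set to $T = S \cup \{x\}$, and re-run BGP to the stable state under $T$ (well defined by Theorem~\ref{thm:convergence}), maintaining the invariant that the set of ASes routing to $d$ never shrinks. Preservation under a single activation follows because, by signature invariance, no AS can switch away from $d$ at the \LP or \SP level, and \SecP can only move an AS \emph{toward} $d$ (there is no secure route to $m$); thus every route change keeps an AS at $d$ or moves it from $m$ to $d$, only \emph{adding} route-to-$d$ availability downstream. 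Making this ``switches go only from $m$ to $d$'' statement rigorous against asynchronous dynamics---e.g.\ via a suitable activation order or a potential function built on the signature ranks---is the crux; once established, the happy set is monotone, $s$ stays happy, and the induction on $|T\setminus S|$ completes the theorem.
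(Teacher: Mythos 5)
Your scaffolding is right and matches the paper's machinery: your ``signature-invariance'' lemma is precisely the paper's Corollary~\ref{cor:3rdInsensitive_1} (in the security~\third model every AS stabilizes to a route of the same type and length regardless of the deployment, because \SecP and \TB sit below \LP and \SP and \Ex depends only on route type), and your observations that attacked routes are insecure under every deployment and that enlarging the secure set never desecures a route are both correct and both used by the paper. But the step you yourself flag as the crux is a genuine gap, and it is not a technicality---it is the entire content of the theorem. Your re-convergence argument asks for an invariant (``the set of ASes routing to $d$ never shrinks'') to hold along \emph{asynchronous} BGP dynamics started from the $S$-stable state, yet signature invariance is a statement about \emph{stable outcomes}, not transient ones: mid-convergence an AS's available routes need not have the invariant signature, routes can transiently vanish (e.g.\ loop-avoidance against a neighbor's in-flight path change), and \TB can be re-invoked on a perturbed candidate set, so ``every route change keeps an AS at $d$ or moves it from $m$ to $d$'' does not follow from anything you have established. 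A potential function built on signature ranks is not sketched concretely enough to assess, and without it the protectable case---an AS whose happiness under $S$ came from \TB selecting an insecure legitimate route---remains open.

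The paper closes exactly this gap by avoiding dynamics altogether (Appendix~\ref{apx:mono}). It compares two executions of the Fix-Routes algorithm of Appendix~\ref{apx:alg:3}---one under $S$, one under $T$, with no need for your $T=S\cup\{x\}$ induction---and inducts on the algorithm's multi-stage BFS fixing order (FCR, then FPeeR, then FPrvR), whose output is proven in Appendix~\ref{apx:alg:corr} to be the unique stable state (this is also where Theorem~\ref{thm:convergence} comes from, so uniqueness is a byproduct rather than an input). The induction has two cases: every AS fixed to a secure route under $S$ is fixed to a (possibly different) secure, hence legitimate, route under $T$; and every AS fixed to an insecure legitimate route under $S$ is either fixed to a secure route under $T$ (again legitimate) or is assigned the \emph{same next hop} $\Nxt(s)$ in both runs, in which case legitimacy follows from the induction hypothesis applied to that next hop. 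In effect, the ``suitable activation order'' you wave at is the algorithm's canonical fixing order, and choosing it turns your unproven dynamic invariant into a finite lockstep induction over a schedule already certified to reach the unique stable state---that substitution is what your proposal is missing.
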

\noindent
\ifnum\full=1
The proof is in Appendix~\ref{apx:mono}.
The security~\third model is our only \emph{monotone} model;  more secure ASes cannot result in fewer happy ASes, so the metric $H_{M,D}(S)$ grows monotonically in $S$.
\fi

\ifnum\full=1
\myparab{Fewer immune ASes as security becomes more important?}  Collateral damages also explain why the fraction of immune ASes in the security \second model in Figure~\ref{fig:partitions} is smaller than the number of happy ASes in the baseline scenario (Section~\ref{sec:upperLower}). This is because in the security \second model, collateral damages mean that securing some ASes can actually make other ASes \emph{more} vulnerable to attack.
\fi

\subsubsection{Collateral Benefits}\label{sec:Collbenefits}

Insecure ASes can also become happy as a \emph{collateral benefit}, because \emph{other} ASes obtained secure routes:

\myparab{Figure~\ref{fig:bensNdamages2nd}.}  We show how AS~5166, with the Department of Defense Network Information Center, obtains collateral benefits when its provider AS~174, Cogent, deploys S*BGP.  On the left, we show the network prior to the deployment of S*BGP; focusing on Cogent AS~174, we see that it falls victim to the attack, choosing a bogus route through its customer AS~3491.  As a result, AS~5166 routes to the attacker as well. On the right, we show the network after S*BGP deployment.  Now, both AS~174 and AS~3491 are secure, and choose a longer secure customer route to the legitimate destination.  As a result, AS~5166, which remains insecure, becomes happy as a collateral benefit.

\ifnum\full=1
\myparab{Collateral benefits.} A source AS $s \notin S$ obtains collateral benefits from an S*BGP deployment $S$ with respect to an attacker $m$ and destination $d$ if (a) $s$ is unhappy when the ASes in $T$ are secure, but (b) $s$ is happy when the ASes in $S$ are secure, and $S\supset T$.
\fi

\smallskip
Collateral benefits are possible in all three routing policy models (Table~\ref{tab:phenom}).
\ifnum\full=0
Examples are in the full version.
\else
Here is an example when security is \third:

\myparab{Figure~\ref{fig:benefits3rd}.} We show how  AS34223, a Russian ISP, obtains collateral benefits in the security \third model.  The left subfigure shows how AS34223 and and its provider AS3267 react to the attack before S*BGP deployment; AS3267 learns two peer routes of equal length -- one bogus route to the attacker $m$ and one legitimate route to Pandora's AS 40426.  AS3267 then tiebreaks in favor of the attacker, so both AS3267 and his customer AS34223 become unhappy.  On the right, we show what happens after partial S*BGP deployment.  AS3267 has a \emph{secure} route to Pandora of equal length and type as the insecure route to $m$; so AS3267 chooses the secure route, and his insecure customer AS34223 becomes happy as a collateral benefit.

\fi

\begin{figure}
\begin{center}
 \includegraphics[width=1.4in]{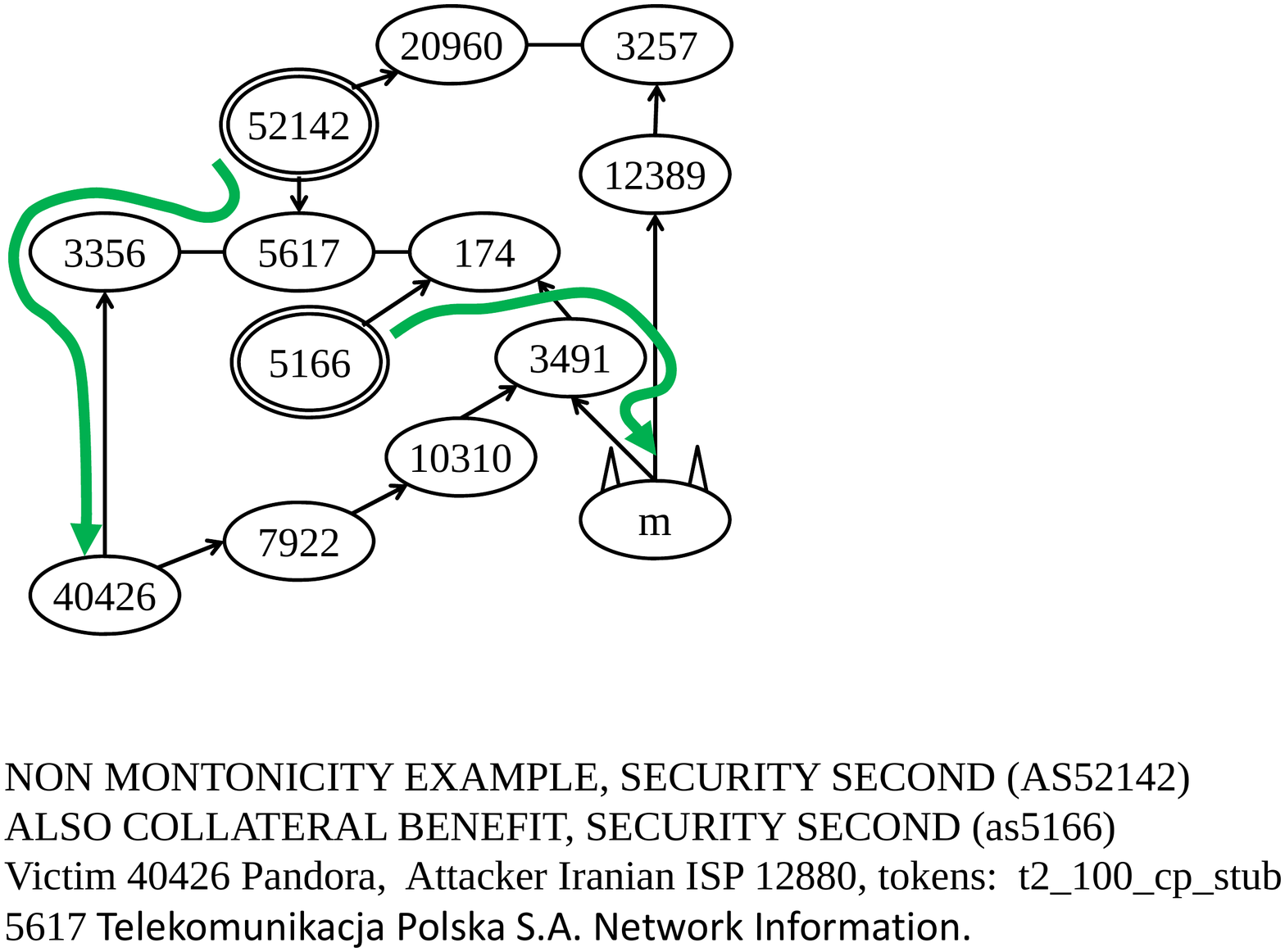}
 \includegraphics[width=1.4in]{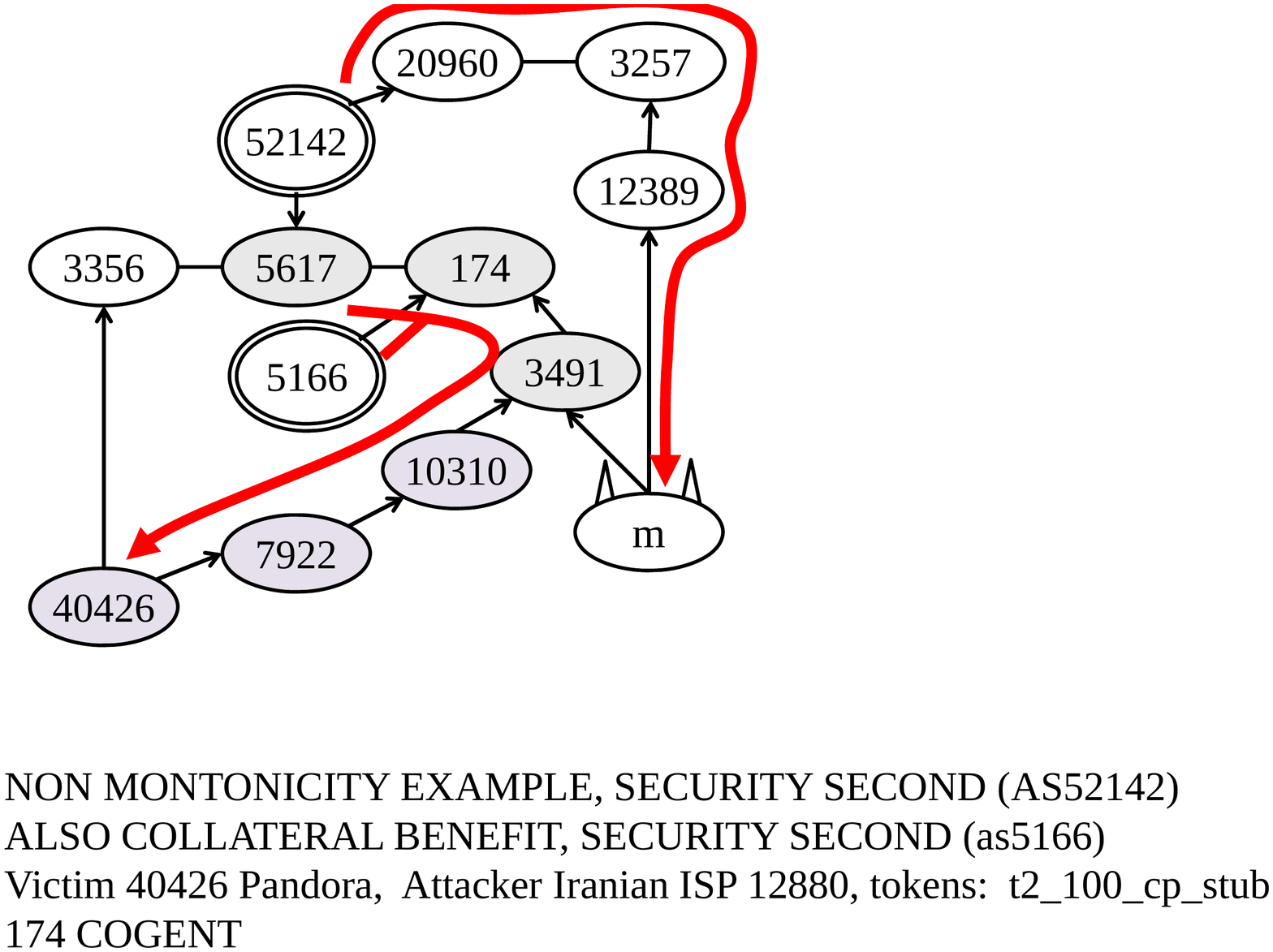}
  \vspace{-3mm}\caption{Collateral benefits \&  damages; sec~\second.}
  \vspace{-5mm}\label{fig:bensNdamages2nd}\end{center}
\end{figure}

\subsection{Root-cause analysis.}\label{sec:postmortem}

\ifnum\full=1

\begin{figure}
\begin{center}
 \includegraphics[width=.9in]{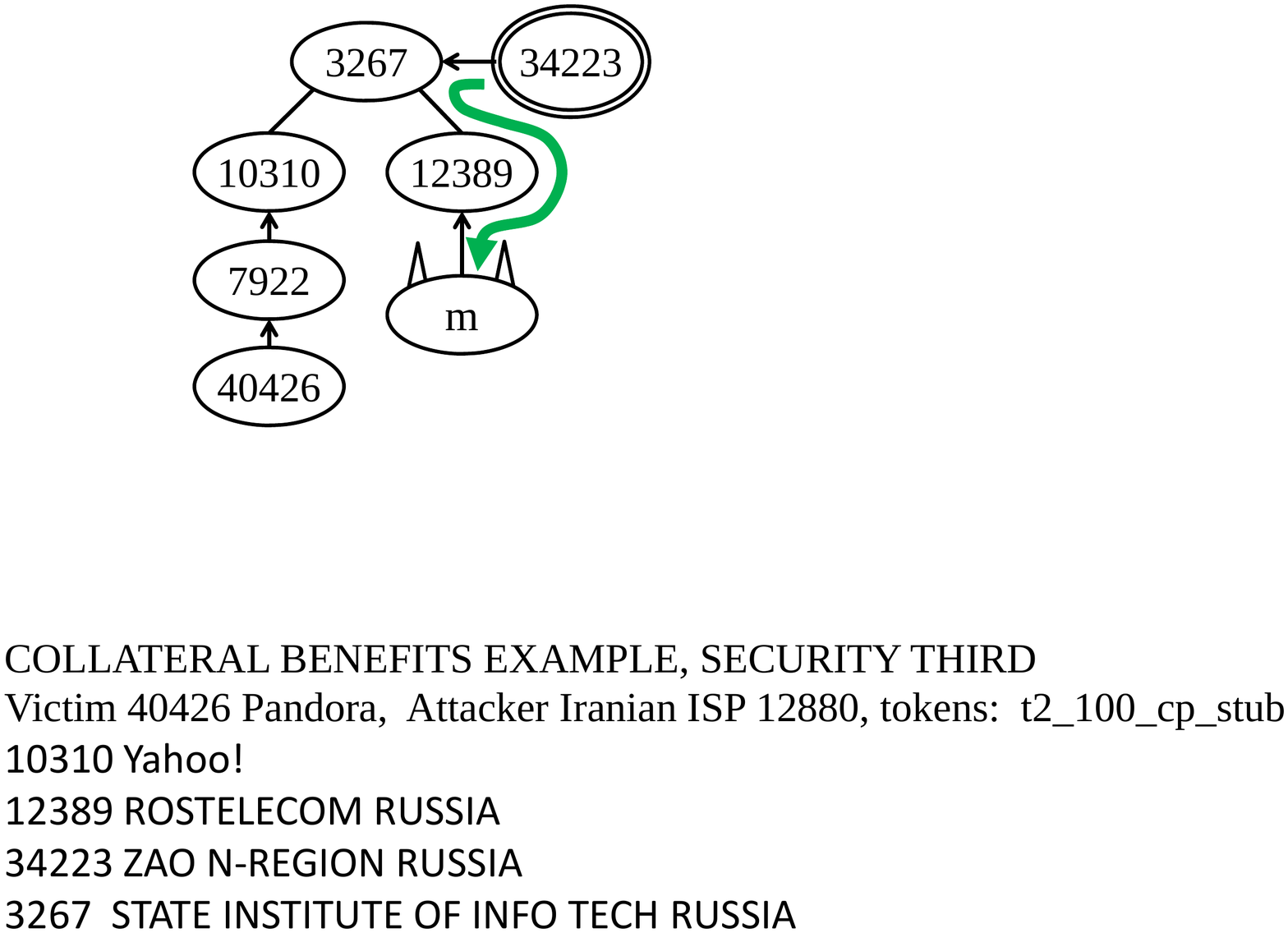}
  \includegraphics[width=.9in]{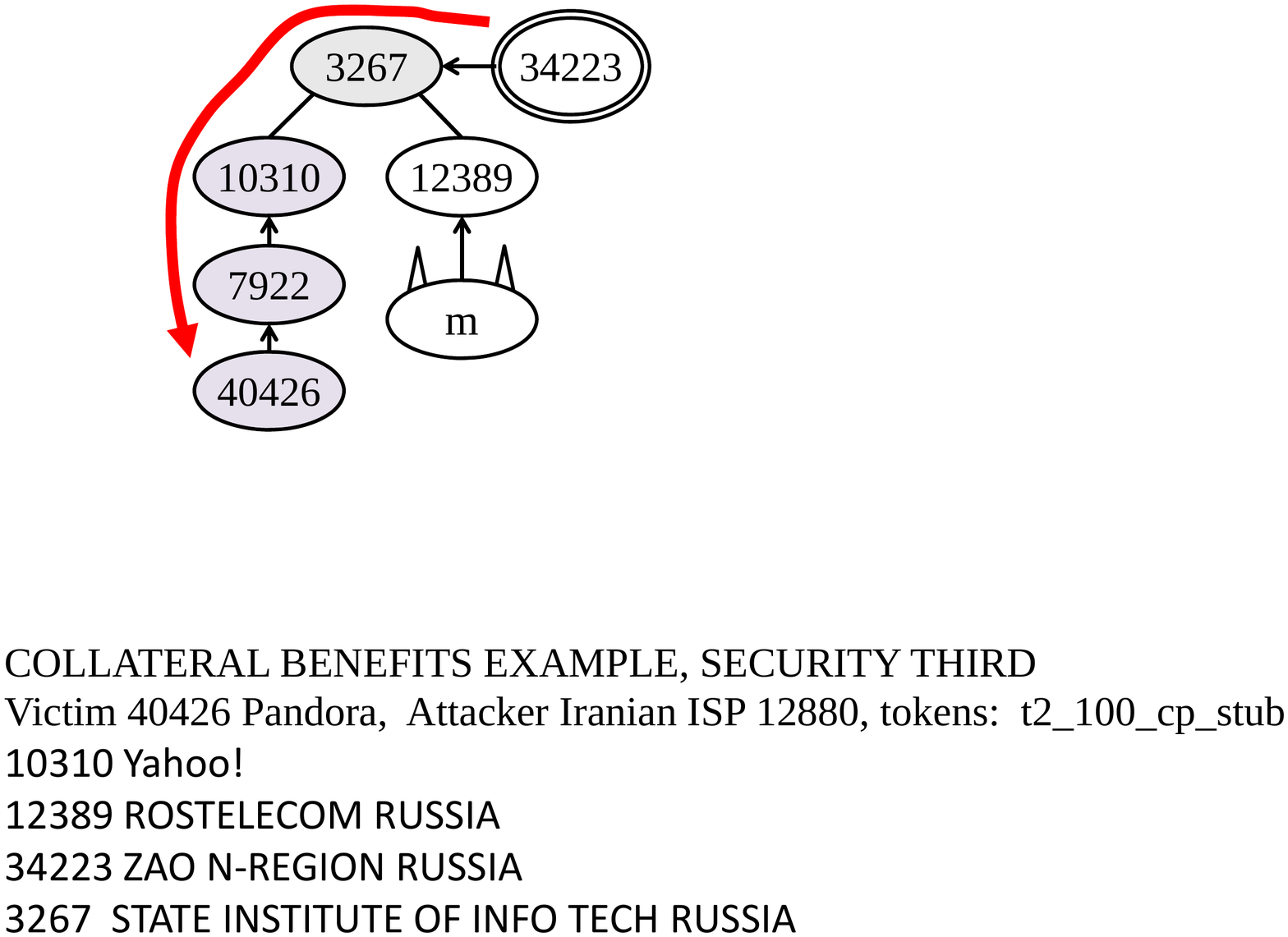}
  \vspace{-3mm}\caption{Collateral benefits; security \third.}
  \vspace{-5mm}\label{fig:benefits3rd}\end{center}
\end{figure}
\fi

Which of the phenomena in Table~\ref{tab:phenom} have the biggest impact on security?
We now check how these phenomena play out in the last step of the Tier 1 \& Tier 2 rollout of Section~\ref{sec:bigdep:all}. Recall that $S$ is all 13 Tier 1s, all 100 Tier 2s and all of their stubs, \ie roughly 50\% of the AS graph.
%

\myparab{Figure~\ref{fig:rootcause} (left). }  We start with a root cause analysis for the security~\third model.  Recall that Theorem~\ref{thm:nodamage3rd} showed that collateral damages do not occur in the security \third model, and so we do not consider them here.

\mypara{Changes in secure routes. } 
\ifnum\full=1
We start with an analysis similar to that of Section~\ref{sec:T1suck:why};
\fi
The bottom three parts of the bar show the fraction of secure routes available in normal conditions, prior to any routing attacks. (Averaging is across all $V^2$ sources and destinations.) During routing attacks, these routes can be broken down into three types: (1) secure routes lost to protocol downgrade attacks (lowest part of the bar), (2) secure routes that are ``wasted'' on ASes that would have been happy \emph{even in the absence of S*BGP} (second lowest part), and (3) secure routes that protected ASes that were unhappy in the absence of S*BGP (third lowest part). (Averaging is, as usual, over $M'$ and $D=V$ and all $V$ source ASes.)  Importantly, improvements in our security metric can only result from the small fraction of secure routes in class (3); the remaining secure routes either (1) disappear due to protocol downgrades, or (2) are ``wasted'' on ASes that would have avoided the attack even without S*BGP.

\mypara{Changes in the metric.} The top two parts of the bar show how (the lower bound on) the metric $H_{M',V}(S)$ grows relative to the baseline scenario $S=\emptyset$ due to: (a) secure routes in class (3), and (b) (the lower bound on) the fraction of insecure ASes that obtained collateral benefits.  Figure~\ref{fig:rootcause}(left) thus illustrates the importance of collateral benefits.

\myparab{Figure~\ref{fig:rootcause} (right). }  We perform the same analysis for the security \first model.  By Theorem~\ref{thm:noPDA1st}, protocol downgrade attacks occur only rarely in this model, so these are not visible in the figure. However, we now have to account for collateral damages (Section~\ref{sec:Colldamages}), which we depict with the smaller sliver on right of the  figure.  We obtain the change in the metric by subtracting the collateral damages from the gains resulting from (a) offering secure routes to unhappy ASes and (b) collateral benefits.  Fortunately, we find collateral damages to be a relatively rare phenomenon.

\ifnum\full=1
\myparab{Fitting it all together?} This analysis reveals that changes in the metric can be computed as follows: (Secure routes created under normal conditions) $+$ (collateral benefits)  $-$ (protocol downgrades)  $-$ (secure routes ``wasted'' on ASes that are already happy) $-$ (collateral damages).  We find that  all of these phenomena (with the exception of collateral damage) have significant impact on the security metric. These observations also drive home the point that the number of routes learned via S*BGP in \emph{normal conditions} is a poor proxy for the ``security'' of the network; more sophisticated metrics like the ones we use here are required.
\fi

\smallskip
Results where security \second look very similar to results when security is \third, with the addition of a small amount of collateral damage.  The bottom line is, when security is \second or \third, (1) protocol downgrade attacks cause many secure routes that were available under normal conditions to disappear, and (2) those ASes that retain their secure routes during the attack would have been happy even if S*BGP had not been deployed; the result is meagre increases in the security metric.  Meanwhile, when security is \first, few downgrades occur, and the security metric is greatly improved.

\begin{figure}
\begin{center}
    \includegraphics[height=1.4in]{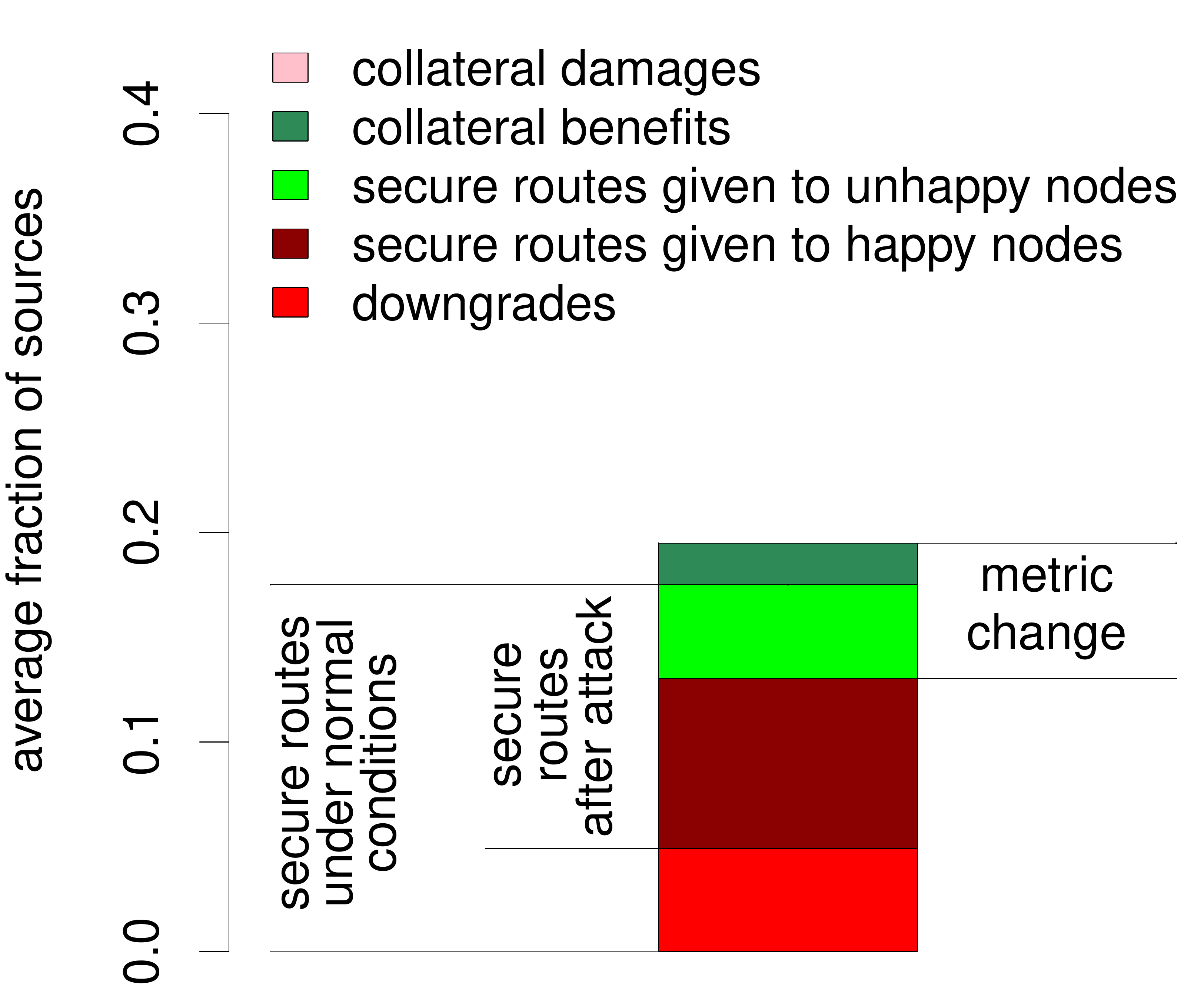}
    \includegraphics[height=1.4in]{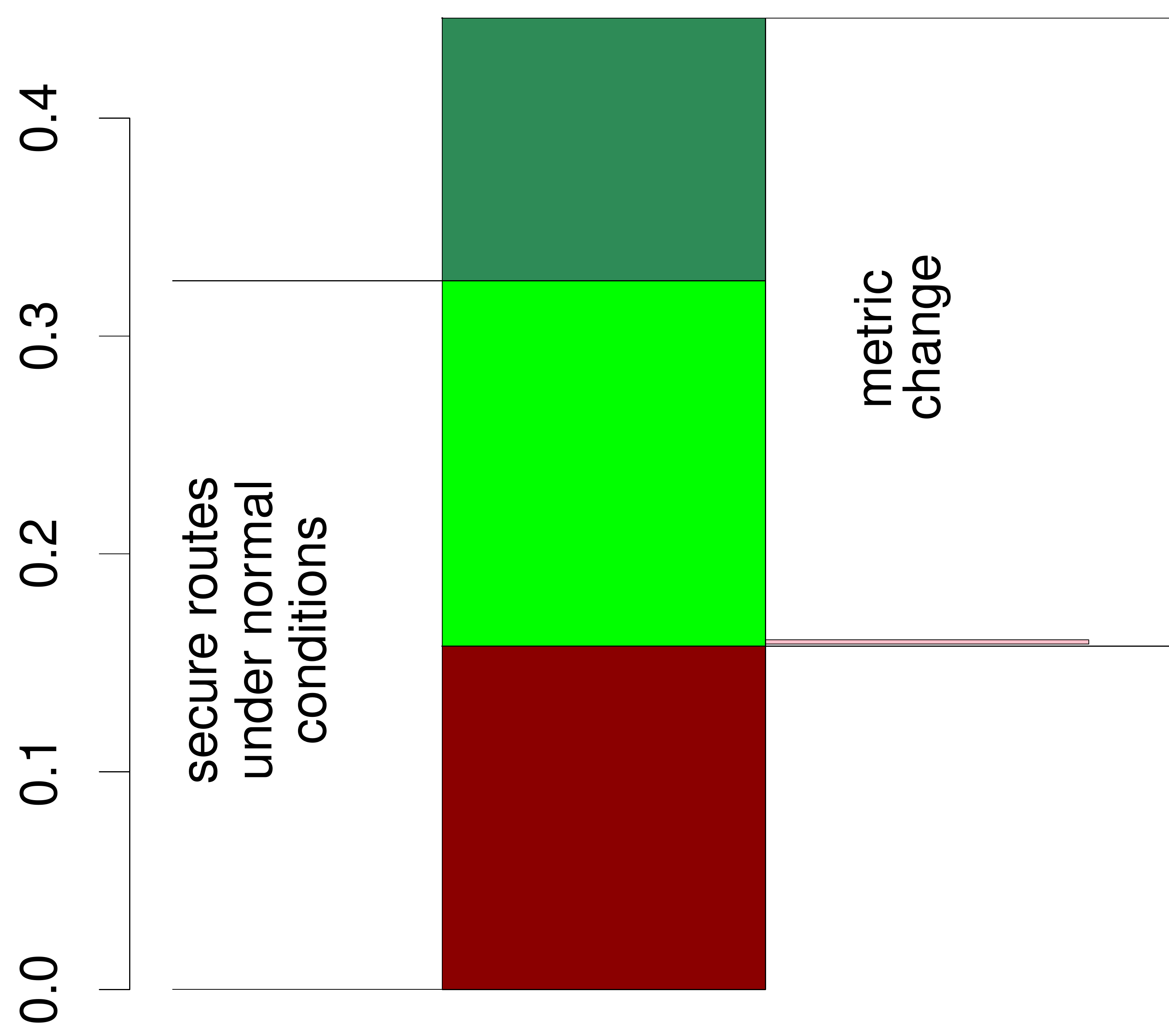}
    \vspace{-6mm}\caption{Changes in the metric explained. Sec~\third (left) and Sec~\first (right).}
    \vspace{-5mm}\label{fig:rootcause}\end{center}
\end{figure}

\section{Related Work}\label{sec:related}

Over the past decades several security extensions to BGP have been proposed; see~\cite{BGPsurvey} for a survey. However, proposals of new security extensions to BGP, and their subsequent security analyses typically assume that secure ASes will never accept insecure routes~\cite{JenYannis,adoptability}, which is reasonable in the full deployment scenario where every AS has already deployed S*BGP~\cite{BGPattack,BGPsurvey,PaulF}.  There have also been studies on incentives for S*BGP adoption~\cite{adopt,adoptability};
these works suggest that ``S*BGP and BGP will coexist in the long term''~\cite{adopt}, which motivated our study of S*BGP in partial deployment.
\ifnum\full=1
 The partial deployment scenarios we considered have been suggested in practice~\cite{FCCcommit} and in this literature~\cite{JenYannis,adopt,adoptability}.
\fi

Our work is most closely related to~\cite{BGPattack}, which also measures ``security'' as the fraction of source ASes that avoid having their traffic intercepted by the attacking AS. However, \cite{BGPattack} always assumes that the S*BGP variant is \emph{fully deployed}. Thus, as discussed in Section~\ref{sec:baseline}, \cite{BGPattack} also finds that fully-deployed origin authentication provides good security against attack we studied here (\ie announcing ``$m,d$'' using insecure BGP, see Section~\ref{sec:attack:dets}), but rightly assumes this attack fails against fully-deployed S*BGP.   Moreover, \cite{BGPattack} does not analyze interactions between S*BGP and BGP that arise during partial deployment (\eg Table~\ref{tab:phenom}).

Finally, \cite{robertCCS} includes cryptographic analysis of S*BGP in partial deployment, and an Internet draft~\cite{BGPSECthreat} mentions protocol downgrade attacks. However, neither explores how attacks on partially-deployed S*BGP can impact routing, or considers the number / type of ASes harmed by an attack.


\section{Conclusion}\label{sec:conclusion}

On one hand, our results give rise to guidelines for partially-deployed S*BGP: (1) Deploying lightweight simplex S*BGP at stub ASes, instead of full-fledged S*BGP; this reduces deployment complexity at the majority of ASes without compromising overall security.  (2) Incorporating S*BGP into routing policies in a similar fashion at all ASes, to avoid introducing routing anomalies like BGP Wedgies.  (3) Deploying S*BGP at Tier 2 ISPs, since deployments of S*BGP at Tier 1s can do little to improve security.
On the other hand, we find that partially-deployed S*BGP 
provides, on average, limited security benefits over route origin authentication when ASes do not prioritize security \first.

We hope that our work will call attention to the challenges that arise during partial deployment, and drive the development of solutions that can help surmount them.  One idea is to find ways to limit protocol downgrade attacks, as these cause many of our negative results. For example, one could add ``hysteresis'' to S*BGP, so that an AS does not immediately drop a secure route when ``better'' insecure route appears.  Alternatively, one could find deployment scenarios that create ``islands'' of secure ASes that agree to prioritize security \first for routes between ASes in the island; the challenge is to do this without disrupting existing traffic engineering or business arrangements.
Other security solutions could also be explored. For example, origin authentication with anomaly detection and prefix filtering could be easier to deploy (they can be based on the RPKI), and may be as effective as partially-deployed S*BGP.


\ifnum\anon=0
\section*{Acknowledgments}
We are grateful to BU and XSEDE for computing resources, and Kadin Tseng, Doug Sondak, Roberto Gomez and David O'Neal for helping us get our code running on various platforms. We thank Walter Willinger and Mario Sanchez for providing the list of ASes in each IXP that we used to generate our IXP-augmented AS graph, Phillipa Gill for useful discussions and sharing the results of~\cite{surveyEmail} with us, and Leonid Reyzin, Gonca Gursun, Adam Udi, our shepherd Tim Griffin and the anonymous SIGCOMM reviewers for comments on drafts of this paper.
This work was supported by NSF Grants S-1017907, CNS-1111723, ISF grant 420/12, Israel Ministry of Science Grant 3-9772, Marie Curie Career Integration Grant, IRG Grant 48106, the Israeli Center for Research Excellence in Algorithms, and a gift from Cisco.
\fi

\begin{small}
\bibliographystyle{abbrv}
\bibliography{partialSec_paper.bib}

\begin{thebibliography}{10}

\bibitem{IRRpt}
{IRR} power tools.
\newblock \url{http://sourceforge.net/projects/irrpt/}, 2011.

\bibitem{fccReport}
Working group 6 secure bgp deployment report.
\newblock Technical report, {FCC CSRIC}
  \url{http://transition.fcc.gov/bureaus/pshs/advisory/csric3/CSRICIII_9-12-12%
_WG6-Final-Report.pdf}, 2012.

\bibitem{ixpSIGCOMM12}
B.~Ager, N.~Chatzis, A.~Feldmann, N.~Sarrar, S.~Uhlig, and W.~Willinger.
\newblock Anatomy of a large european {IXP}.
\newblock In {\em SIGCOMM'12}, 2012.

\bibitem{alexa}
Alexa.
\newblock The top 500 sites on the web.
\newblock \url{http://www.alexa.com/topsites}, October 1 2012.

\bibitem{ixp}
B.~Augustin, B.~Krishnamurthy, and W.~Willinger.
\newblock {IXPs}: Mapped?
\newblock In {\em IMC'09}, 2009.

\bibitem{JenYannis}
I.~Avramopoulos, M.~Suchara, and J.~Rexford.
\newblock How small groups can secure interdomain routing.
\newblock Technical report, Princeton University Comp. Sci., 2007.

\bibitem{PaulF}
H.~Ballani, P.~Francis, and X.~Zhang.
\newblock A study of prefix hijacking and interception in the {I}nternet.
\newblock In {\em SIGCOMM'07}, 2007.

\bibitem{robertCCS}
A.~Boldyreva and R.~Lychev.
\newblock Provable security of s-bgp and other path vector protocols: model,
  analysis and extensions.
\newblock In {\em CCS'12}, pages 541--552.

\bibitem{pakistan}
M.~A. Brown.
\newblock {Rensys Blog: Pakistan hijacks YouTube}.
\newblock
  \url{http://www.renesys.com/blog/2008/02/pakistan_hijacks_youtube_1.shtml}.

\bibitem{BGPsurvey}
K.~Butler, T.~Farley, P.~McDaniel, and J.~Rexford.
\newblock A survey of {BGP} security issues and solutions.
\newblock {\em Proceedings of the IEEE}, 2010.

\bibitem{adoptability}
H.~Chang, D.~Dash, A.~Perrig, and H.~Zhang.
\newblock Modeling adoptability of secure {BGP} protocol.
\newblock In {\em SIGCOMM'06}, 2006.

\bibitem{cyclops}
Y.-J. Chi, R.~Oliveira, and L.~Zhang.
\newblock Cyclops: The {Internet} {AS}-level observatory.
\newblock {\em SIGCOMM CCR}, 2008.

\bibitem{bestpath}
Cisco.
\newblock Bgp best path selection algorithm: How the best path algorithm works.
\newblock Document ID: 13753, May 2012.
\newblock
  \url{http://www.cisco.com/en/US/tech/tk365/technologies_tech_note09186a00800%
94431.shtml#bestpath}.

\bibitem{china}
J.~Cowie.
\newblock Rensys blog: China's 18-minute mystery.
\newblock
  \url{http://www.renesys.com/blog/2010/11/chinas-18-minute-mystery.shtml}.

\bibitem{amogh}
A.~Dhamdhere and C.~Dovrolis.
\newblock Twelve years in the evolution of the internet ecosystem.
\newblock {\em Trans. Netw.}, 19(5):1420--1433, 2011.

\bibitem{ggr}
L.~Gao, T.~Griffin, and J.~Rexford.
\newblock Inherently safe backup routing with {BGP}.
\newblock {\em IEEE INFOCOM}, 2001.

\bibitem{gr}
L.~Gao and J.~Rexford.
\newblock Stable {Internet} routing without global coordination.
\newblock {\em Trans. Netw.}, 2001.

\bibitem{surveyEmail}
P.~Gill, S.~Goldberg, and M.~Schapira.
\newblock A survey of interdomain routing policies.
\newblock NANOG'56, October 2012.

\bibitem{adopt}
P.~Gill, M.~Schapira, and S.~Goldberg.
\newblock Let the market drive deployment: A strategy for transistioning to
  {BGP} security.
\newblock {\em SIGCOMM'11}, 2011.

\bibitem{quicksand}
P.~Gill, M.~Schapira, and S.~Goldberg.
\newblock Modeling on quicksand: dealing with the scarcity of ground truth in
  interdomain routing data.
\newblock {\em SIGCOMM Comput. Commun. Rev.}, 42(1):40--46, Jan. 2012.

\bibitem{sigBGP}
S.~Goldberg, S.~Halevi, A.~D. Jaggard, V.~Ramachandran, and R.~N. Wright.
\newblock Rationality and traffic attraction: Incentives for honest path
  announcements in {BGP}.
\newblock In {\em SIGCOMM'08}, 2008.

\bibitem{BGPattack}
S.~Goldberg, M.~Schapira, P.~Hummon, and J.~Rexford.
\newblock How secure are secure interdomain routing protocols?
\newblock In {\em SIGCOMM'10}, 2010.

\bibitem{wedgies}
T.~Griffin and G.~Huston.
\newblock {BGP} wedgies.
\newblock RFC 4264, 2005.

\bibitem{gsw}
T.~Griffin, F.~B. Shepherd, and G.~Wilfong.
\newblock The stable paths problem and interdomain routing.
\newblock {\em Trans. Netw.}, 2002.

\bibitem{hustonModel1}
G.~Huston.
\newblock Peering and settlements - {Part I}.
\newblock {\em The Internet Protocol Journal (Cisco)}, 2(1), March 1999.

\bibitem{hustonModel12}
G.~Huston.
\newblock Peering and settlements - {Part II}.
\newblock {\em The Internet Protocol Journal (Cisco)}, 2(2), June 1999.

\bibitem{BGPSECthreat}
S.~Kent and A.~Chi.
\newblock Threat model for bgp path security.
\newblock Internet draft: draft-ietf-sidr-bgpsec-threats-04, 2013.

\bibitem{SBGP}
S.~Kent, C.~Lynn, and K.~Seo.
\newblock Secure border gateway protocol {(S-BGP)}.
\newblock {\em JSAC}, 2000.

\bibitem{labovitzHyperGiant}
C.~Labovitz.
\newblock Arbor blog: Battle of the hyper giants.
\newblock
  \url{http://asert.arbornetworks.com/2010/04/the-battle-of-the-hyper-giants-p%
art-i-2/}.

\bibitem{labovitzPreso}
C.~Labovitz.
\newblock Internet traffic 2007 - 2011.
\newblock Global Peering Forum. Santi Monica, CA., April 2011.

\bibitem{LabovitzSIGCOMM}
C.~Labovitz, S.~Iekel-Johnson, D.~McPherson, J.~Oberheide, and F.~Jahanian.
\newblock Internet inter-domain traffic.
\newblock In {\em SIGCOMM'10}, 2010.

\bibitem{NSECrfc}
B.~Laurie, G.~Sisson, R.~Arends, Nominet, and D.~Blacka.
\newblock Dns security (dnssec) hashed authenticated denial of existence.
\newblock RFC 5155, March 2008.

\bibitem{BGPSEC}
M.~Lepinski.
\newblock Bgpsec protocol specification: draft-ietf-sidr-bgpsec-protocol-06.
\newblock Internet-Draft, 2012.

\bibitem{rpki}
M.~Lepinski and S.~Kent.
\newblock {\em RFC 6480: An Infrastructure to Support Secure Internet Routing}.

\bibitem{LGSPODC12}
R.~Lychev, S.~Goldberg, and M.~Schapira.
\newblock Network destabilizing attacks.
\newblock In {\em PODC'12}, 2012.

\bibitem{LGSTR}
R.~Lychev, S.~Goldberg, and M.~Schapira.
\newblock Network destabilizing attacks.
\newblock Arxiv Report 1203.1281, march 2012.

\bibitem{partialSec}
R.~Lychev, S.~Goldberg, and M.~Schapira.
\newblock Is the juice worth the squeeze? {BGP} security in partial deployment.
\newblock In {\em SIGCOMM'13}, 2013.

\bibitem{OrAuth}
P.~McDaniel, W.~Aiello, K.~Butler, and J.~Ioannidis.
\newblock Origin authentication in interdomain routing.
\newblock {\em Computer Networks}, November 2006.

\bibitem{AS7007}
S.~Misel.
\newblock {``Wow, AS7007!"}.
\newblock Merit NANOG Archive, April 1997.
\newblock \url{http://www.merit.edu/mail.archives/nanog/1997-04/msg00340.html}.

\bibitem{pfx-validate-09}
P.~Mohapatra, J.~Scudder, D.~Ward, R.~Bush, and R.~Austein.
\newblock {\em BGP Prefix Origin Validation}.
\newblock Internet Engineering Task Force Network Working Group, 2012.
\newblock \url{http://tools.ietf.org/html/draft-ietf-sidr-pfx-validate-09}.

\bibitem{roaRPSL}
P.~Palse.
\newblock {Serving ROAs as RPSL route[6] Objects from the RIPE Database}.
\newblock RIPE Labs, June 2010.
\newblock
  \url{https://labs.ripe.net/Members/Paul_P_/content-serving-roas-rpsl-route-o%
bjects}.

\bibitem{moratel}
T.~Paseka.
\newblock Cloudflare blog: Why google went offline today., November 2012.
\newblock
  \url{http://blog.cloudflare.com/why-google-went-offline-today-and-a-bit-abou%
t}.

\bibitem{DEFCON}
A.~Pilosov and T.~Kapela.
\newblock Stealing the {Internet}: An {Internet}-scale man in the middle
  attack, 2008.
\newblock {DEFCON}'16.

\bibitem{FCCcommit}
Reuters.
\newblock Internet providers pledge anti-botnet effort, March 22 2012.

\bibitem{10lessons}
M.~Roughan, W.~Willinger, O.~Maennel, D.~Perouli, and R.~Bush.
\newblock 10 lessons from 10 years of measuring and modeling the internet's
  autonomous systems.
\newblock {\em JSAC}, 29(9):1810--1821, 2011.

\bibitem{sss}
R.~Sami, M.~Schapira, and A.~Zohar.
\newblock Searching for stability in interdomain routing.
\newblock In {\em INFOCOM'09}, 2009.

\bibitem{sandvine}
Sandvine.
\newblock Fall 2012 global internet phenomena, 2012.

\bibitem{sriramBGPSECchoices}
K.~Sriram.
\newblock {BGPSEC} design choices and summary of supporting discussions.
\newblock Internet-Draft: draft-sriram-bgpsec-design-choices-03, January 2013.

\bibitem{soBGP}
R.~White.
\newblock Deployment considerations for secure origin {BGP} {(soBGP)}.
\newblock draft-white-sobgp-bgp-deployment-01.txt, June 2003, expired.

\bibitem{happyEyeballs}
D.~Wing and A.~Yourtchenko.
\newblock Happy eyeballs: Trending towards success with dual-stack hosts.
\newblock Internet draft: draft-wing-v6ops-happy-eyeballs-ipv6-01, October
  2010.

\end{thebibliography}
\end{small}

\appendix
\section{More Collateral damage}\label{apx:damages}

\begin{figure}[b]
\begin{center}
   \includegraphics[width=1.5in]{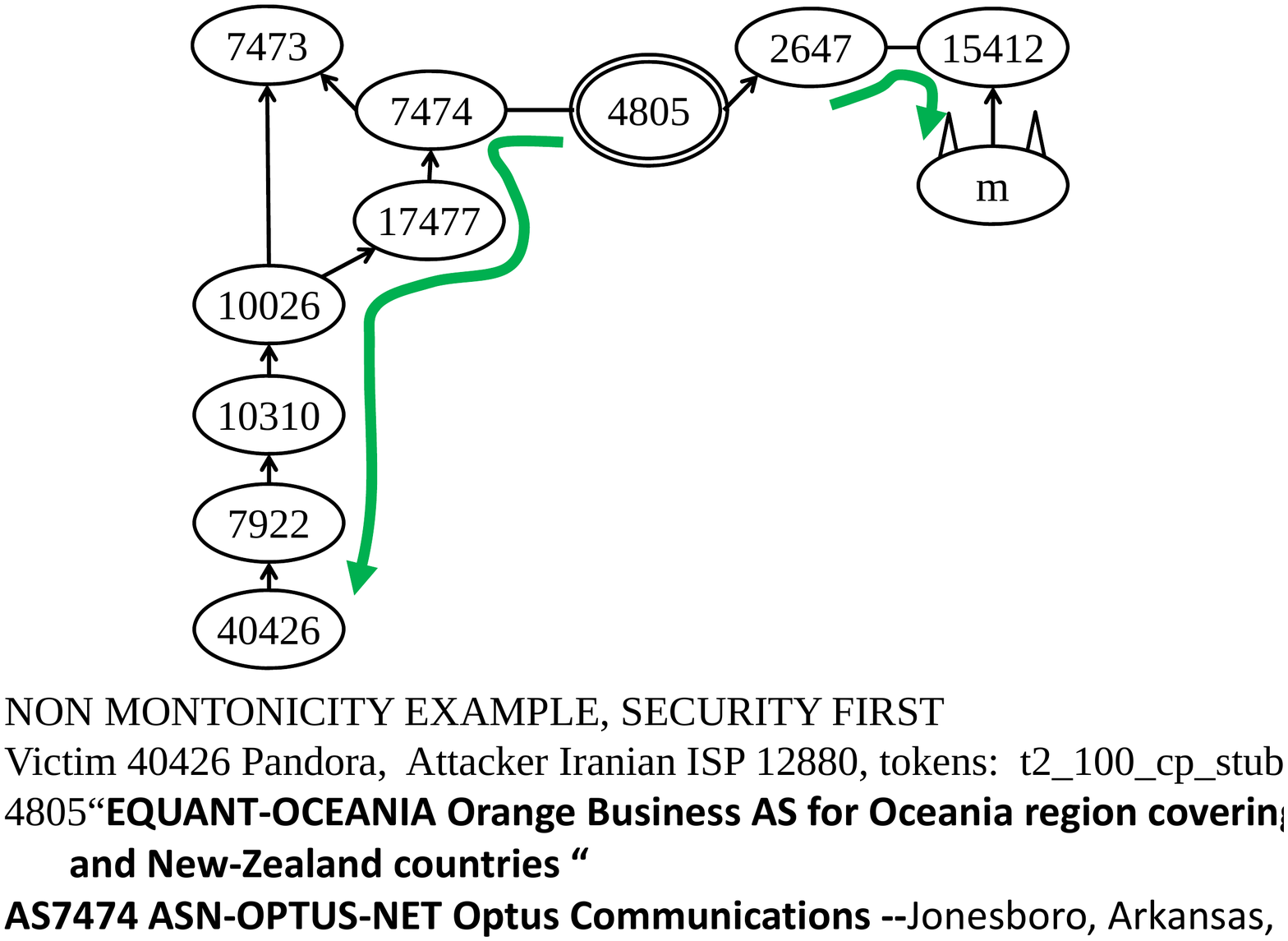}
  \includegraphics[width=1.5in]{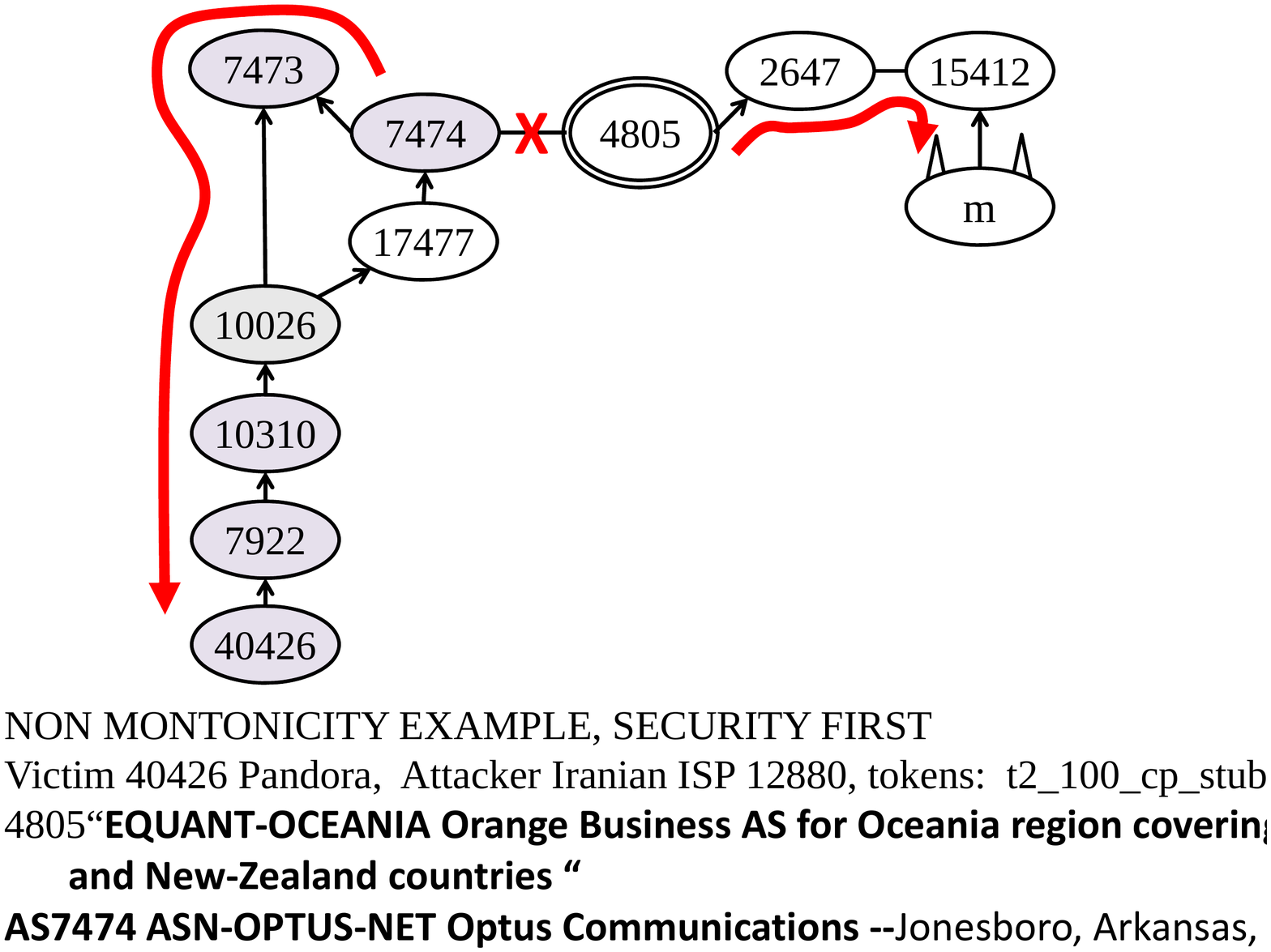}
  \vspace{-3mm}\caption{Collateral damages; security \first.}
  \vspace{-5mm}\label{fig:damages1st}\end{center}
\end{figure}

Figure~\ref{fig:bensNdamages2nd} revealed that collateral damages can be caused by secure ASes that choose \emph{long} secure paths.  When security is~\first, collateral damages can also be caused by secure ASes that choose \emph{expensive} secure paths:

\myparab{Figure~\ref{fig:damages1st}.}  We show how AS~4805, Orange Business in Oceania, suffers from collateral damage when security is \first. On the left, we show the network prior to S*BGP deployment. Orange Business AS4805 learns two routes: a legitimate route through its peer Optus Communications AS~7474, and a bogus route through its provider AS~2647. Since AS~4805 prefers peer routes over provider routes per our \LP rule, it will choose the legitimate route and avoid the attack. On the right, we show what happens after S*BGP deployment. Now, Optus Communications AS~7474 has started using a secure route. However, this secure route is through its provider AS~7473. Observe that AS~7474 is no longer willing to announce a route to its peer AS~4805 as this would violate the export policy \Ex.  AS~4805 is now left with the bogus provider route through AS~2647, and becomes unhappy as collateral damage.

\ifnum\full=1

\section{Computing Routing Outcomes}\label{apx:algos}

Below we present algorithms for computing S*BGP routing outcomes in the presence of an attacker (per Section~\ref{sec:attack:dets}),  in each of our three S*BGP routing models. These algorithms receive as input an attacker-destination pair $(m,d)$ and the set of secure ASes $S$ and output the S*BGP routing outcome (in each of our three S*BGP routing models). We point out that our algorithms can also be used to compute routes during normal conditions (when there is no attacker $m=\emptyset$), and when no AS is secure $S=\emptyset$. In these algorithms, which extend the algorithmic approach used in~\cite{BGPattack,adopt,quicksand} to handle partial S*BGP deployment in the presence the adversary described in Section~\ref{sec:attack:dets}, we carefully construct a partial two-rooted routing tree by performing multi-stage breadth-first-search (BFS) computations with $d$ and $m$ as the two roots. We prove the correctness of our algorithms (that is, that they indeed compute the desired S*BGP routing outcomes) in Appendix~\ref{apx:alg:corr}.  In subsequent sections, we show how to use these algorithms to partition ASes into doomed/immune/protectable nodes, to determine which ASes are happy, or experience protocol downgrade attacks for a given $(m,d)$-pair and deployment $S$.

\subsection{Notation and preliminaries.}

Since BGP (and S*BGP) sets up routes to each destination independently, we focus on routing to a unique destination $d$. We say that  a route is \legit if it does not contain the attacker $m$ (either because there is no attacker $m=\emptyset$ or because the attacker is not on the route). We say that a route is \attacked otherwise. Observe that in the presence of an attacker $m$ launching the attack of Section~\ref{sec:threat}, all \attacked routes have $m$ as the first hop following $d$. We use the following definition of ``perceivable routes'' from~\cite{LGSTR}.

\begin{definition}[Perceivable routes] A simple (loop-free) route $R = \{v_{i-1}, \ldots, v_1,d\}$ is \emph{perceivable} at AS $v_i$ if one of the two following conditions holds:
\begin{enumerate}
\item $R$ is \legit (so $v_1 \neq m$), and for every $0 < j < i$ it follows that $v_j$ announcing the route $(v_j,\ldots,d)$ to $v_{j+1}$ does not violate \Ex.

\item $R$ is \attacked (so $v_1 = m$), and for every $1 < j < i$ it follows that $v_j$ announcing the route $(v_j,\ldots,d)$ to $v_{j+1}$ does not violate \Ex.
\end{enumerate}
\end{definition}

Intuitively, an AS's set of perceivable routes captures all the routes this AS could potentially learn during the S*BGP convergence process. All non-perceivable routes from an AS can safely be removed from consideration as the \Ex condition ensures that they will not propagate from the destination/attacker to that AS.

We say that a route $(v_{i-1}, \ldots, v_1, d)$ is a \emph{customer route} if $v_{i-1}$ is a customer of $v_i$.  We define \emph{peer routes} and \emph{provider routes} analogously. We say that a route $R = \{v_i, v_{i-1}, ... , v_1, d\}$ contains AS $x$, if at least one AS in $\{v_i, v_{i-1}, ... , v_1, d\}$ is $x$.

\myparab{$\PR$ and $\BPR$ sets.}
Let $\PR(v_i, m, d)$ be the set of perceivable routes from $v_i$ for the attacker-victim pair $(m,d)$ when attacker $m$ attacks destination $d$ using the attack described in Section~\ref{sec:attack:dets}. (We set $m=\emptyset$ when there is no attacker.) Given a set of secure ASes $S$, for every AS $v_i$ we define the $\BPR(v_i, S, m, d)$ to be the set of all perceivable routes in $\PR(v_i, m, d)$ that are preferred by $v_i$ over all other perceivable routes, before the arbitrary tiebreak step \TB, according to the routing policy model (\ie security \first, \second, or \third) under consideration.  (Again, we set $m=\emptyset$ when there is no attacker and $S=\emptyset$ when no ASes are secure).  We define $\Nxt(v_i, S, m, d)$ to be the set of all neighbors of $v_i$ that are next hops of all routes in $\BPR(v_i,  S, m, d)$.  We will just use $\Nxt(v_i)$ when it is clear what $S$, $m$ and $d$ are.

Observe that in each of our models, all routes in $\BPR(v_i, S, m, d)$ must (1) belong to the same type---customer routes, peer routes, or provider routes, (2) be of the same length, and (3) either all be (entirely) secure or insecure.

\subsection{Algorithm for security \third.}\label{apx:alg:3}

We now present our algorithm for computing the S*BGP routing outcome in the security \third model in the presence of a set of secure ASes $S$ and an attacker $m$. We note that this algorithm also serves to compute the routing outcome when no ASes are secure, \ie $S=\emptyset$.
As in~\cite{LGSTR} (which studies a somewhat different BGP routing model and does not consider S*BGP) we exhibit an iterative algorithm \textbf{Fix-Routes} (FR) that, informally, at each iteration fixes a single AS's route and adds that AS to a set $\mathcal{I} \subseteq V$. This goes on until all ASes are in $\mathcal{I}$ (that is, all ASes' routes are fixed). We will later prove that FR indeed outputs the BGP routing outcome.

FR consists of three subroutines: \textbf{Fix Customer Routes} (FCR), \textbf{Fix Peer Routes} (FPeeR), and \textbf{Fix Provider Routes} (FPrvR), that FR executes in that order. Note that at the very beginning of this algorithm, $\mathcal{I}$ contains only the legitimate destination $d$ and the attacker $m$ (if there is an attacker). We now describe FR and its subroutines.

\myparab{Step I: The FCR subroutine.} FR starts with FCR; at this point $\mathcal{I}$ contains only the legitimate destination $d$ and the attacker $m$. Intuitively, FCR constructs a partial two-rooted tree (rooted at $d$ and $m$ on the graph, using a BFS computation in which only customer-to-provider edges are traversed.
Initially, $d$ has path length $0$ and $m$ has path length $1$ (to capture the fact that $m$ announces that it is directly connected to $d$ in the attack of Section~\ref{sec:attack:dets}).

We set $\PR^{0}(v_i)=\PR(v_i,m,d)$ and $\BPR^0(v_i)=\BPR(v_i, S, m, d)$ for every AS $v_i$. We let $r$ be the FR iteration and initialize it to $r:=0$.

While there is an AS $s\notin \mathcal{I}$ such that $\PR^{r-1}(s)$ contains at least one customer route, we ``fix'' the route of (at least) one AS by executing the following steps:

\begin{enumerate}
	\item{r++;}
	\item{Select the AS $v_i \notin \mathcal{I}$ that has the shortest \textbf{customer} route in its set $\BPR^{r-1}(v_i)$ (if there are multiple such ASes, choose one arbitrarily);}
 	\item{Add $v_i$ to $\mathcal{I}$; set $ \Nxt(v_i)$ to be $v_i$'s next-hop on the route in $\BPR^{r-1}(v_i)$ selected according to its tie-breaking rule \TB;}
 	\item{Remove, for every AS $v_j$, all routes in $\PR^{r-1}(v_i)$ that contain $v_i$ but whose suffix at $v_i$ is not in  $\BPR^{r-1}(v_i)$ to obtain the new set $\PR^r(v_i)$; set $\BPR^r(v_j)$ to be $v_j$'s most preferred routes in $\PR^r(v_i)$}
 	\item{Add all ASes $v_j$ such that $\PR^r(v_i)=\emptyset$ to $\mathcal{I}$.}
\end{enumerate}

\myparab{Step II: the FPeeR subroutine.} This step starts with $\mathcal{I}$ and the configuration of the routing system and the $\PR$ and $\BPR$ sets the way it is after execution of FCR (all the ASes discovered the FCR step have their route selections locked), \ie $\mathcal{I}$ contains only $d$, $m$, and ASes with either empty or customer routes. We now use only single peer-to-peer edges to connect new yet-unexplored ASes to the ASes that were locked in the partial routing tree in the \first stage of the algorithm.

While there is an AS $s\notin \mathcal{I}$ such that $\PR^{r-1}(s)$ contains at least one peer route, the following steps are executed:
\begin{enumerate}
\item r++
	\item{select an AS $v_i \notin \mathcal{I}$;}
 	\item{add $v_i$ to $\mathcal{I}$; set $ \Nxt(v_i)$ to be $v_i$'s next-hop on the route in $\BPR^{r-1}(v_i)$ selected according to its tie-breaking rule \TB;}
 	\item{remove, for every AS $v_j$, all routes in $\PR^{r-1}(v_i)$ that contain $v_i$ but whose suffix at $v_i$ is not in  $\BPR^{r-1}(v_i)$ to obtain the new set $\PR^r(v_i)$; set $\BPR^r(v_j)$ to be $v_j$'s most preferred routes in $\PR^r(v_i)$}
 	\item{add all ASes $v_j$ such that $\PR^r(v_i)=\emptyset$ to $\mathcal{I}$.}
\end{enumerate}

\myparab{Step III: The FPrvR subroutine.} We now run a BFS computation in which only provider-to-customer edges are traversed, that is, only ASes who are direct customer of those ASes that have already been added to the partial two-rooted tree are explored. This step starts with $\mathcal{I}$ and the configuration of the routing system and the $\PR$ and $\BPR$ sets the way it is after the consecutive execution of FCR and FPeeR.

While there is an AS $s\notin \mathcal{I}$ such that $\PR^{r-1}(s)$ contains at least one provider route, we execute the identical steps as in FCR, with the exception that we look for the $v_i$ that has the shortest \textbf{provider} route in its set $\BPR^{r-1}(v_i)$.

\subsection{Algorithm for security \second.}
\label{apx:alg:2}

Our algorithm for the security \second model is a refinement of the iterative algorithm \textbf{Fix Routes} (FR) presented above for the security \third model. This new algorithm is also a 3-stage BFS in which customer routes are fixed before peer routes, which are fixed before provider routes. In each stage we are careful to prioritize ASes with \emph{secure} routes over ASes with insecure routes.

We present the following two new subroutines. (1) \textbf{Fix Secure Customer Routes} (FSCR): FSCR is identical to FCR, with the sole exception that for the AS chosen at each iteration $r$ has a $\BPR^{r-1}$ that contains a \emph{secure} customer route; (2) \textbf{Fix Secure Provider Routes} (FSPrvR): FSPrvR is identical to FPrvR, with the sole exception that for the AS chosen at each iteration $r$ has a $\BPR^{r-1}$ that contains a \emph{secure} provider route. The variant of FR for the security \third model executes the subroutines the following order:

\begin{enumerate}
\item FSCR
\item FCR
\item FPeeR
\item FSPrvR
\item FPrvR
\end{enumerate}

\subsection{Algorithm for security \first.} \label{apx:alg:1}

Once again, we present a variant of the Fix Routes (FR) algorithm. This multi-stage BFS computation first discovers all ASes that can reach the destination $d$ via secure routes and only then discovers all other ASes (as in our algorithm for the security \third model).

We present the following new subroutine. \textbf{Fix Secure Peer Routes} (FSPeeR): FSPeeR is identical to FSPeeR, except that the AS chosen at each iteration $r$ has a \emph{secure} peer route in its $\BPR^{r-1}$ set. This variant of FR executes the subroutines in the following order:

\begin{enumerate}
\item FSCR
\item FSPeeR
\item FSPrV
\item FCR
\item FPeeR
\item FPrvR
\end{enumerate}


\subsection{Correctness of Algorithms}\label{apx:alg:corr}

We now prove that that our algorithms for computing the S*BGP routing outcomes indeed output the desired outcome.

\subsubsection{Correctness of algorithm for security \third.}

The proof that our algorithm for the security \third model outputs the S*BGP routing outcome in this model follows from the combination of the lemmas below. Recall that each of our algorithms computes, for every AS $v_i$, a next-hop AS $ \Nxt(v_i)$. Let $R_{v_i}$ be the route from $v_i$ induced by these computed next-hops.

\begin{lemma} \label{lem_get_stable_provider} Under S*BGP routing, the route of every AS added to $\mathcal{I}$ in FCR is guaranteed to stabilize to the route $R_{v_i}$. \end{lemma}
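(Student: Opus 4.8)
The plan is to argue by induction on the order in which ASes are added to $\mathcal{I}$ during FCR, equivalently on the length of the customer route that FCR assigns. I would first invoke Theorem~\ref{thm:convergence}, which guarantees a \emph{unique} stable state, so that ``stabilize to $R_{v_i}$'' reduces to the assertion that $v_i$'s route in the unique stable state equals $R_{v_i}$; it then suffices to show that \emph{any} stable state must assign $R_{v_i}$ to $v_i$. The base case places $d$ (with its trivial route) and $m$ (with the bogus route $(m,d)$, which $m$ announces to all neighbors) into $\mathcal{I}$; both are fixed and trivially stable. For the inductive step I would take the AS $v_i$ selected at iteration $r$ and let $c=\Nxt(v_i)$ be the customer of $v_i$ lying on the customer route FCR assigns. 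Because FCR traverses customer-to-provider edges and always selects the \emph{shortest} available customer route, $c$ carries a strictly shorter customer route and so was added to $\mathcal{I}$ at an earlier iteration (or $c\in\{d,m\}$); by the inductive hypothesis $c$ already uses $R_c$ in the stable state, and $R_{v_i}=(v_i,R_c)$ is loop-free since route length strictly decreases toward the roots $d,m$.

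\textbf{Realizability.} Next I would verify that $R_{v_i}$ is actually offered to $v_i$. Since $R_c$ is a customer route of $c$ (or $c\in\{d,m\}$, which announce their routes to everyone), the export rule \Ex\ forces $c$ to announce $R_c$ to \emph{all} its neighbors, including its provider $v_i$; hence $R_{v_i}$ is available to $v_i$ in the stable state.

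\textbf{Optimality and permanence.} The crux is to show that in the stable state $v_i$'s best available route is exactly $R_{v_i}$ (up to the tiebreak \TB, which FCR resolves deterministically). I would partition the competing routes. First, every \emph{peer} or \emph{provider} route available to $v_i$ is strictly dominated under \LP\ by any customer route, so none of them --- including routes that only surface later, during FPeeR or FPrvR --- can displace $R_{v_i}$; this Gao--Rexford-style dominance is exactly what makes an AS's customer route ``final'' and licenses FCR to commit $v_i$'s route before the later stages run. Second, every \emph{customer} route available to $v_i$ passes through some customer $c'$ of $v_i$ exporting a customer route: by the non-decreasing processing order, customers fixed after $v_i$ can offer only routes strictly longer than $R_{v_i}$, while customers fixed before $v_i$ were already present when $v_i$ was selected, so $v_i$'s \SP{}/\SecP{}/\TB\ choice among them is precisely $R_{v_i}$. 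Together these show $v_i$ strictly prefers $R_{v_i}$, so any stable state assigns it $R_{v_i}$, and uniqueness then yields the claim.

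\textbf{Main obstacle.} I expect the delicate point to be justifying FCR's pruning step (removing, when an AS is fixed, the routes whose suffix at that AS is not in its $\BPR$ set): one must show this pruning never discards the genuinely stable route and never retains an inconsistent one, i.e.\ that the routes surviving at $v_i$ when it is selected are exactly those surviving in the stable state. Making this precise requires coupling the monotone growth of customer-route length under propagation with \Ex\ (only customer routes climb to providers) and the strict \LP\ dominance of customer routes, so that the candidate set at $v_i$ is provably correct. Once this coupling is established, the induction closes.
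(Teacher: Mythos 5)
There is a genuine gap, and it is at the very first step: your plan invokes Theorem~\ref{thm:convergence} to reduce ``stabilizes to $R_{v_i}$'' to a statement about the unique stable state, but in this paper Theorem~\ref{thm:convergence} is itself \emph{derived from} this lemma and its siblings --- Appendix~\ref{apx:converge} proves convergence precisely by combining the correctness lemmas for FCR, FPeeR and FPrvR. Using the theorem here is circular. Nor can the circularity be waved away by treating convergence as an external black box: for BGP-style route selection, uniqueness of a stable state does \emph{not} by itself imply that the dynamics converge to it (safety and uniqueness are separate properties; the paper's own citation of Sami~\etal~\cite{sss} concerns exactly the interplay between multiple stable states and oscillation, and with the \SecP\ step, partial deployment, and an active attacker, no off-the-shelf Gao--Rexford-type result applies without argument). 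So even if every stable state assigned $R_{v_i}$ to $v_i$, your proposal would still owe an independent proof that the asynchronous process settles at all --- which is the actual content of the lemma.

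The paper avoids this by making the induction a statement about the \emph{dynamics} directly, with no detour through stable states: for the AS $v_i$ fixed at iteration $r+1$, every route in its $\BPR$ set is a perceivable customer route, so (by \Ex) its suffix at the next hop $v_j$ is again a perceivable customer route, strictly shorter; hence each such $v_j$ was fixed at an earlier iteration, and by the inductive hypothesis $v_j$'s route \emph{eventually converges} to $R_{v_j}$ in the actual execution. From that moment onward, $v_i$'s best available routes are exactly those in its $\BPR$ set at iteration $r+1$, so $v_i$ repeatedly selects $R_{v_i}$ (resolved deterministically by \TB) and never deviates. Your \LP-dominance and \Ex-realizability observations are the right ingredients and reappear in this argument, and your induction over FCR order is the correct skeleton; what must change is the target of the induction --- ``$v_i$'s route converges after its next hops' routes do'' rather than ``$v_i$'s route in the (assumed-unique) stable state is $R_{v_i}$.'' Note also that your ``main obstacle'' (justifying the pruning step against the stable route) dissolves under the dynamic formulation: one never needs to certify pruning against a pre-existing stable state, because the inductive claim itself establishes that the pruned routes can never be persistently offered.
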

\begin{proof}
We prove the lemma by induction on the FCR iteration. Consider the first iteration. Observe that the AS chosen at this iteration of FCR must be a direct provider of $d$ (that is, have a customer route of length 1). Hence, in the security \third model, once this AS learns of $d$'s existence it will select the direct route to $d$ and never choose a different route thereafter (as this is its most preferred route). Now, let us assume that for every AS chosen in iterations $1,\ldots,r$ the statement of the lemma holds. Let $v_i$ be the AS chosen at iteration $r+1$ of FCR. Consider $v_i$'s $\BPR$ set at that time. By definition, every route in the $\BPR$ set is perceivable and so must comply with \Ex at each and every ``hop'' along the route. Notice, that this, combined with the fact that all routes in $v_i$'s $\BPR$ set are customer routes, implies that the suffix of every such route is also a perceivable customer route. Consider an AS $v_j$ that is $v_i$'s next-hop on some route in $v_i$'s $\BPR$ set. Notice that $v_j$'s route is fixed at some iteration in $\{1,\ldots,r\}$ (as $v_j$ has a shorter perceivable customer route than $v_i$). Hence, by the induction hypothesis, at some point in the S*BGP convergence process, $v_j$'s route converges to $R_{v_j}$ for every such AS $v_j$. Observe that, from that point in time onward, $v_i$'s best available routes are precisely those capture by $\BPR$ in the $r+1$'th iteration of FCR. Hence, from that moment onwards $v_i$ will repeatedly select the route $R_{v_i}$ according to the tiebreak step $\TB$ and never select a different route thereafter.
\end{proof}

\begin{lemma} \label{lem_get_stable_peer} Under S*BGP routing, the route of every AS added to $\mathcal{I}$ in FPeeR is guaranteed to stabilize to the route $R_{v_i}$. \end{lemma}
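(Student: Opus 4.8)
The plan is to mirror the proof of the previous lemma (Lemma~\ref{lem_get_stable_provider}) and to lean on it as a black box. The one structural fact that makes FPeeR much easier than FCR is that a perceivable \emph{peer} route decomposes as a single peer edge prepended to a \emph{customer} route: if $R_{v_i}=(v_i,v_j,\ldots,d)$ is a peer route perceivable at $v_i$, then $v_j$ is a peer of $v_i$, and by \Ex a peer only exports a route to a peer when that route is a customer route, so the suffix $(v_j,\ldots,d)$ is a customer route of $v_j$. Consequently every AS on the suffix --- and in particular $v_j$ --- carries a customer route and was therefore fixed during FCR. By Lemma~\ref{lem_get_stable_provider} its route stabilizes to $R_{v_j}=(v_j,\ldots,d)$; moreover the pruning in step~4 of FCR guarantees that the suffix $v_i$ retains in its $\PR$ and $\BPR$ sets is exactly $v_j$'s chosen route $R_{v_j}$, so $R_{v_i}=(v_i,R_{v_j})$.

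Given this, the first step I would establish is that an AS $v_i$ selected in FPeeR has \emph{no customer route available} in the final routing state. This follows from the termination condition of FCR: were some customer neighbor $w$ of $v_i$ to hold a customer route, then $w$ would itself have been fixed in FCR, $v_i$ would have a customer route in its pruned $\PR$ set, and $v_i$ would have been added to $\mathcal{I}$ during FCR rather than FPeeR --- a contradiction. Since Lemma~\ref{lem_get_stable_provider} makes the FCR routes permanently stable, $v_i$'s lack of a customer route is permanent as well.

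The second step is the convergence argument itself, carried out exactly as in Lemma~\ref{lem_get_stable_provider}. After all ASes fixed in FCR have converged (which they do, by that lemma), the set of routes available to $v_i$ no longer changes: every candidate peer route of $v_i$ hangs off a peer whose route is a stabilized FCR customer route, and $v_i$ has no customer route and (by \LP) will never choose a provider route while a peer route is present. Hence from that moment on $v_i$'s best available routes are precisely those in its $\BPR$ set from the FPeeR iteration at which it was selected; applying \SP, then \SecP, then the tiebreak \TB to this fixed set yields $R_{v_i}$, and $v_i$ repeatedly selects $R_{v_i}$ and never deviates. A useful observation that streamlines this over the FCR case is that FPeeR ASes never lie on one another's chosen routes --- an AS with a peer route does not export it to a peer, again by \Ex --- so there is no inter-dependence among FPeeR ASes and no nontrivial induction within FPeeR is required; each is pinned down directly by the already-stable FCR outcome.

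The main obstacle I anticipate is the bookkeeping needed to turn ``no customer route in the pruned $\PR$ set'' into ``no customer route ever available in the stable state,'' and symmetrically to certify that \emph{every} candidate peer route's first hop is FCR-fixed with a matching suffix. Both hinge on the invariant, maintained by the step~4 pruning, that after FCR the surviving perceivable routes of any AS agree on their suffixes with the next-hops already committed to $\mathcal{I}$; once that invariant is stated cleanly, the two steps above are routine and the lemma follows.
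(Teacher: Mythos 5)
Your proposal is correct and follows essentially the same route as the paper's proof: both use \Ex to observe that every perceivable peer route decomposes into a peer edge prepended to a customer route whose next-hop was fixed in FCR, invoke Lemma~\ref{lem_get_stable_provider} to stabilize those next-hops, and conclude that from then on $v_i$'s best available routes are exactly its $\BPR$ set from the FPeeR iteration at which it was selected, so \TB pins down $R_{v_i}$ permanently. Your added observations (that $v_i$ retains no customer route after FCR, and that FPeeR ASes cannot lie on one another's routes, so no induction is needed within FPeeR) are implicit in the paper's one-pass, non-inductive argument and merely make it explicit.
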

\begin{proof}
Consider an AS $v_i$ chosen at some iteration of FPeeR. Observe if $(v_i,v_{i-1},\ldots,d)$ is a perceivable peer route then $(v_{i-1},\ldots,d)$ must be a perceivable customer route (to satisfy the \Ex condition). Hence, for every such route in $v_i$'s $\BPR$ set it must be the case that the route of $v_i$'s next-hop on this route $v_j$  was fixed in FCR. By Lemma~\ref{lem_get_stable_provider}, at some point in the S*BGP convergence process, $v_j$'s route converges to $R_{v_j}$ for every such AS $v_j$. Observe that, from that point in time onward, $v_i$'s best available routes are precisely those capture by its $\BPR$ set at the iteration of FPeeR in which $v_i$ is chosen. Hence, $v_i$ will select the route $R_{v_i}$ according to the tiebreak step $\TB$ and never select a different route thereafter.
\end{proof}

\begin{lemma} \label{lem_get_stable_customer} Under S*BGP routing, the route of every AS added to $\mathcal{I}$ in FPrvR is guaranteed to stabilize to the route $R_{v_i}$. \end{lemma}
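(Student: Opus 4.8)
The plan is to mirror, essentially verbatim, the structure of the two preceding lemmas (Lemma~\ref{lem_get_stable_provider} for FCR and Lemma~\ref{lem_get_stable_peer} for FPeeR), proving the claim by induction on the iteration number at which an AS is added to $\mathcal{I}$ during FPrvR. The engine of all three proofs is the same: because every route in $v_i$'s $\BPR$ set is \emph{perceivable}, it complies with \Ex at every hop, so the relevant suffix of $v_i$'s chosen route is itself a route that some earlier subroutine (or earlier iteration) has already locked; once all such suffixes have converged, $v_i$'s best available routes coincide exactly with its $\BPR$ set at the moment it was chosen, and $v_i$ then settles on $R_{v_i}$ via the tiebreak step \TB and never changes.

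The substantive difference from the FPeeR case, and the step I expect to be the main obstacle, is the \emph{three-way} case analysis forced by the nature of a provider route. Consider $v_i$ chosen at some iteration of FPrvR with a provider route $(v_i,v_j,\ldots,d)$ in its $\BPR$ set, where $v_j$ is a provider of $v_i$. Since $v_i$ is a \emph{customer} of $v_j$, the \Ex rule places \emph{no} restriction on the type of the suffix $(v_j,\ldots,d)$: unlike the peer case (where \Ex forces the suffix to be a customer route fixed in FCR), here the suffix may be a customer, peer, or provider route. I would split on these three possibilities: if the suffix is a customer route, $v_j$'s route was fixed in FCR, so it converges to $R_{v_j}$ by Lemma~\ref{lem_get_stable_provider}; if it is a peer route, $v_j$'s route was fixed in FPeeR, so it converges by Lemma~\ref{lem_get_stable_peer}; and if it is a provider route, then since the suffix is strictly shorter than $v_i$'s provider route and FPrvR selects ASes by shortest provider route in $\BPR$, the AS $v_j$ was added to $\mathcal{I}$ at a strictly earlier FPrvR iteration, so it converges to $R_{v_j}$ by the induction hypothesis.

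The one point that requires care, and that I would verify explicitly, is the well-foundedness of the induction: the provider-route subcase must recurse only on strictly shorter provider routes (which it does, by one hop per suffix), while the customer- and peer-route subcases bottom out in the already-established Lemmas~\ref{lem_get_stable_provider} and~\ref{lem_get_stable_peer} rather than in the current induction. Having pinned down that each next-hop $v_j \in \Nxt(v_i)$ converges to $R_{v_j}$, I would conclude exactly as in the earlier proofs: from the time all these next-hops have stabilized, $v_i$'s available routes are precisely those recorded in its $\BPR$ set at its FPrvR iteration, so $v_i$ selects $R_{v_i}$ under \TB and holds it thereafter. Together with Lemmas~\ref{lem_get_stable_provider} and~\ref{lem_get_stable_peer}, this yields that every AS's route stabilizes to the next-hop outcome computed by FR, completing the correctness argument for the security \third algorithm.
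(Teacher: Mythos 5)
Your proposal is correct and follows essentially the same argument as the paper: an induction on the FPrvR iteration, with the next-hop's suffix classified as a customer route (fixed in FCR), a peer route (fixed in FPeeR), or a provider route fixed at a strictly earlier FPrvR iteration, after which $v_i$ settles on $R_{v_i}$ via \TB once all next-hops have converged. Your explicit justification for the provider-route subcase (FPrvR's shortest-provider-route selection forces $v_j$ to be fixed before $v_i$) is a detail the paper leaves implicit, but it is the same reasoning.
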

\begin{proof}
We prove the lemma by induction on the number of FPrvR iterations. Consider the first iteration. Let $v_i$ be the AS chosen at this iteration, let $v_j$ be a next-hop of $v_i$ on some route $R$ in $v_i$'s $\BPR$ set, and let $Q$ be the suffix of $R$ at $v_j$. Observe that $Q$ cannot possibly be a provider route, for otherwise $v_j$ would have been chosen in FPrvR before $v_i$. Hence, $Q$ must be either a customer route or a peer route, and so $v_j$'s route must have been fixed in either FCR or FPeeR. Hence, by the previous lemmas, under S*BGP convergence, every such $v_j$'s route will eventually converge to $R_{v_j}$. Observe that once all such ASes' routes have converged and onwards $v_i$'s best available routes are precisely those captured by $\BPR$ in the $r+1$'th iteration of FPrvR. Hence, $v_i$ will select the route $R_{v_i}$ according to the tiebreak step $\TB$ and never select a different route thereafter.

Now, let us assume that for every AS chosen in iterations $1,...,r$ the statement of the lemma holds. Let $v_i$ be the AS chosen at iteration $r+1$ of FPrvR and consider $v_i$'s $\BPR$ set at this time. Let $v_j$ again be a next-hop of of $v_i$ on some route $R$ in $v_i$'s $\BPR$ set, and let $Q$ be the suffix of $R$ at $v_j$. Observe that if $Q$ is a provider route then $v_j$'s route must have been fixed in FPrvR at some point in iterations $\{1,\ldots,r\}$. If, however, $Q$ is either a customer route or a peer route $v_j$'s route must have been fixed in either FCR or FPeeR. Hence, by the previous lemmas and the induction hypothesis, under S*BGP convergence, every such $v_j$'s route will eventually converge to $R_{v_j}$. From that moment onwards $v_i$'s best available routes are precisely those captured by $\BPR$ in the $r+1$'th iteration of FprvR. Hence, $v_i$ will select the route $R_{v_i}$ according to the tiebreak step $\TB$ and never select a different route thereafter.
\end{proof}

\subsubsection{Correctness of algorithm for security \second.}

The proof that our algorithm for the security \second model outputs the S*BGP routing outcome in this model follows from the combination of the lemmas below. Let $R_{v_i}$ be the route from $v_i$ induced by the algorithm's computed next-hops.

\begin{lemma}\label{lem_get_secure_stable_provider}  Under S*BGP routing, the route of every AS added to $\mathcal{I}$ in FSCR is guaranteed to stabilize to the route $R_{v_i}$. \end{lemma}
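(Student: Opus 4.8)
The plan is to mirror the inductive argument used for FCR in Lemma~\ref{lem_get_stable_provider}, adding one ingredient specific to the security~\second preference order (\LP, then \SecP, then \SP, then \TB): among all perceivable routes at an AS, a \emph{secure customer route} is strictly most preferred. Indeed, any customer route beats any peer or provider route by \LP, and among customer routes a secure route beats an insecure one by \SecP. Hence, once an AS $v_i$ has a secure customer route available, it will lock onto a (shortest) secure customer route and never switch away from one. This is exactly the property that makes it legitimate to fix these ASes' routes first, in FSCR, before any other routes are considered.

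I would argue, as in the FCR proof, by induction on the FSCR iteration, that the route of each AS $v_i$ added to $\mathcal{I}$ in FSCR converges to $R_{v_i}$. The crucial structural fact is that the suffix of a secure customer route is again a secure customer route, one hop shorter. Concretely, if $R=(v_i,v_{i-1},\dots,v_1,d)\in\BPR^{r}(v_i)$ is a shortest secure customer route, then perceivability forces \Ex-compliance at the hop where $v_{i-1}$ announces to its provider $v_i$; this is allowed only if $v_{i-1}$'s route is itself a customer route, so the suffix $Q$ at the next hop $v_j=v_{i-1}$ is a customer route. Since $R$ is secure, every AS on $R$---and hence on $Q$---is secure, so $Q$ is a \emph{secure} customer route strictly shorter than $R$. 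Therefore $v_j$ has a secure customer route shorter than $v_i$'s and was selected at an earlier FSCR iteration; by the induction hypothesis its route converges to $R_{v_j}$. Once all next hops of the routes in $v_i$'s $\BPR$ set have converged, $v_i$'s best available routes are exactly those captured by $\BPR^{r}(v_i)$, so $v_i$ settles on $R_{v_i}$ via \TB and never changes. The base case is the AS holding a length-one secure customer route (a direct secure provider of $d$, which must exist whenever any secure customer route exists at all, by iterating the suffix decomposition), whose next hop is $d\in\mathcal{I}$.

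The main obstacle I expect is ruling out that $v_i$ is ever ``lured away'' after being fixed. Two things must be shown: first, that no route of a different type or security status can ever outrank $v_i$'s secure customer route---handled cleanly by the \LP-then-\SecP domination argument above; and second, that no \emph{shorter} secure customer route can become available at $v_i$ after it is fixed. The latter follows because every secure customer route from $v_i$ decomposes into one step to a customer $v_j$ plus a strictly shorter secure customer route from $v_j$, and every such $v_j$ is already in $\mathcal{I}$; thus the pruning in step~4 of FSCR leaves $v_i$ precisely the routes in its $\BPR$ set. Making this ``the available routes at convergence coincide with $\BPR^{r}(v_i)$'' step airtight is the delicate part---identical in spirit to the security~\third proof, but now carried out entirely within the secure sub-tree reachable via customer edges. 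The subsequent lemmas will then handle FCR, FPeeR, FSPrvR, and FPrvR in the order FSCR-FCR-FPeeR-FSPrvR-FPrvR, with each stage's next hops already fixed by the earlier stages.
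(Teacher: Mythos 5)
Your proposal is correct and takes essentially the same route as the paper, whose entire proof is the remark that the argument of Lemma~\ref{lem_get_stable_provider} goes through verbatim ``where now all routes must be secure.'' The two ingredients you make explicit---that under security~\second a shortest secure customer route is top-ranked (\LP{} then \SecP), and that by \Ex{} the suffix of a perceivable secure customer route is again a perceivable secure customer route, one hop shorter, so next hops are fixed in earlier FSCR iterations---are exactly the adaptations the paper's one-line proof leaves implicit.
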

\begin{proof}
The proof is essentially the proof of Lemma~\ref{lem_get_stable_provider} (where now all routes must be secure).
\end{proof}

\begin{lemma}  Under S*BGP routing, the route of every AS added to $\mathcal{I}$ in FCR is guaranteed to stabilize to the route $R_{v_i}$. \end{lemma}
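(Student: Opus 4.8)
The plan is to mirror the induction used in Lemma~\ref{lem_get_stable_provider} (the FCR lemma for the security \third model), adapting it to the fact that in the security \second algorithm the FCR phase runs \emph{after} the FSCR phase rather than as a single self-contained pass. I would proceed by induction on the FCR iteration number $r$, carrying the invariant that, under S*BGP, the route of every AS fixed in FCR up through iteration $r$ converges to its computed route $R_{v_i}$. The prior result Lemma~\ref{lem_get_secure_stable_provider} (every AS fixed in FSCR converges to $R_{v_i}$) is available to seed the argument across the FSCR/FCR phase boundary.

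For the inductive step, let $v_i$ be the AS chosen at the current FCR iteration and let $R=(v_i,v_{i-1},\ldots,d)\in\BPR^{r-1}(v_i)$ with next-hop $v_j=v_{i-1}$. The first step is the same structural observation as in Lemma~\ref{lem_get_stable_provider}: every route in $v_i$'s $\BPR$ set is a customer route, and perceivability together with \Ex forces the suffix $(v_{i-1},\ldots,d)$ to be a perceivable customer route of length strictly less than $R$. The key new step---and the main obstacle---is to show that $v_j$'s route was fixed \emph{before} $v_i$'s, which now requires a case split on the security status of that suffix. If the suffix $(v_j,\ldots,d)$ is secure, then $v_j$ was assigned its route during FSCR, and Lemma~\ref{lem_get_secure_stable_provider} gives convergence to $R_{v_j}$; note this case arises even when $R$ itself is insecure (because $v_i$ is insecure). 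If the suffix is insecure, then $v_j$ carries an insecure customer route shorter than $v_i$'s, so by the shortest-first selection rule of FCR it must have been fixed at an earlier FCR iteration, and the induction hypothesis supplies convergence to $R_{v_j}$. Because $R$ survived the pruning of Step~4 and still lies in $\BPR^{r-1}(v_i)$, its suffix at each already-fixed AS coincides with that AS's computed route, so $R=(v_i,R_{v_j})$; hence every next-hop appearing in $v_i$'s $\BPR$ set belongs to an AS whose route has already stabilized.

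The closing step is then identical in spirit to Lemma~\ref{lem_get_stable_provider}: once all such next-hop ASes $v_j$ have converged to $R_{v_j}$, the routes actually available to $v_i$ are exactly those captured by $\BPR^{r-1}(v_i)$, so from that moment onward $v_i$ repeatedly selects $R_{v_i}$ (resolving any remaining ties by \TB) and never deviates. One point I would verify carefully is that $v_i$'s most preferred available route at this stage is genuinely an \emph{insecure} customer route and not a secure one: since in the security \second model \SecP ranks secure above insecure \emph{within} the customer-route class, any $v_i$ possessing a secure customer route in its $\BPR$ set would already have been removed during FSCR. As no secure customer route can materialize for $v_i$ during FCR (such a route would require a secure next-hop whose own secure route was fixed in FSCR, which would in turn have made $v_i$ eligible in FSCR), the route $v_i$ settles on in FCR is exactly $R_{v_i}$, completing the induction.
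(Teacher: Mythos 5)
Your proof is correct and takes essentially the same route as the paper's: induction on the FCR iteration, with the next-hop $v_j$ on any route in $v_i$'s $\BPR$ set having been fixed either in FSCR (covered by Lemma~\ref{lem_get_secure_stable_provider}, or trivially if $v_j=d$) or at an earlier FCR iteration via the shortest-customer-route selection rule (covered by the induction hypothesis). Your extra verification that no secure customer route can appear in $v_i$'s $\BPR$ set during FCR is a detail the paper leaves implicit, but it only reinforces, rather than alters, the same argument.
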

\begin{proof}
As in proof of Lemma~\ref{lem_get_stable_provider}, this lemma is proved via induction on the FCR iteration. Consider the first iteration. Let $v_i$ be the AS chosen at this iteration and let $v_j$ be a next-hop on a route in $v_i$'s $\BPR$ set. Observe that it must be that either $v_j=d$ or $v_j$'s route was fixed in FSCR (for otherwise, $v_j$ would have been selected in FCR before $v_i$). Hence, Lemma~\ref{lem_get_secure_stable_provider} (and the fact that $d$'s route is trivially fixed) implies that under S*BGP convergence each such $v_j$'s route will stabilize at some point and from that point onwards $v_i$ will repeatedly select $R_{v_i}$ (see similar argument in Lemma~\ref{lem_get_stable_provider}). Now, let us assume that for every AS chosen in iterations $1,\ldots,r$ the statement of the lemma holds. Let $v_i$ be the AS chosen at iteration $r+1$ of FCR. Consider $v_i$'s $\BPR$ set at that time and consider again an AS $v_j$ that is $v_i$'s next-hop on some route in $v_i$'s $\BPR$ set. Notice that $v_j$'s route must either have been fixed in FCR at some iteration in $\{1,\ldots,r\}$ (if $v_j$ has a shorter perceivable customer route than $v_i$) or in FSCR (if $v_j$ has a secure customer route to $d$). Hence, by the induction hypothesis, at some point in the S*BGP convergence process, $v_j$'s route converges to $R_{v_j}$ for every such AS $v_j$. As before, from that point in time onward $v_i$ will repeatedly select $R_{v_i}$.
\end{proof}

\begin{lemma} Under S*BGP routing, the route of every AS added to $\mathcal{I}$ in FPeeR is guaranteed to stabilize to the route $R_{v_i}$. \end{lemma}
\begin{proof}
The proof is identical to that of Lemma~\ref{lem_get_stable_peer}.
\end{proof}

\begin{lemma} \label{lem_get_secure_stable_customer} Under S*BGP routing, the route of every AS added to $\mathcal{I}$ in FSPrvR is guaranteed to stabilize to the route $R_{v_i}$. \end{lemma}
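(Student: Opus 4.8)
The plan is to mirror the inductive argument used for \lemmaref{lem_get_stable_customer} (the FPrvR lemma), adapted to the facts that FSPrvR fixes only ASes whose $\BPR$ set consists of \emph{secure} provider routes and that, in the security \second model, the preference order is \LP, then \SecP, then \SP. I would proceed by induction on the FSPrvR iteration, assuming throughout the three already-established security \second lemmas: that ASes fixed in FSCR (\lemmaref{lem_get_secure_stable_provider}), in FCR, and in FPeeR all stabilize to their computed routes $R_{v_i}$.

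For the base case, let $v_i$ be the AS chosen in the first FSPrvR iteration; by construction its $\BPR$ set contains only secure provider routes, and $v_i$ has the \emph{shortest} such route among all not-yet-fixed ASes. Fix any route $R$ in this set, let $v_j$ be $v_i$'s next hop on $R$, and let $Q$ be the suffix of $R$ at $v_j$. Since $R$ is entirely secure, $Q$ is secure as well, and the $\PR$/$\BPR$ pruning guarantees $Q$ lies in $v_j$'s $\BPR$ set, so $v_j$'s best route has the same type as $Q$. The key step is to rule out $Q$ being a provider route: if it were, then $Q$ would be a secure provider route strictly shorter than $R$, so $v_j$ would have been selected by FSPrvR's shortest-secure-provider-route rule before $v_i$, contradicting that $v_i$ is chosen first. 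Hence $Q$ is a customer or a peer route, and $v_j$ is fixed in FSCR or FCR (if $v_j$ has any customer route) or in FPeeR (otherwise), i.e.\ strictly before FSPrvR begins. By the three assumed lemmas each such $v_j$ converges to $R_{v_j}$, after which $v_i$'s best available routes are exactly those in its $\BPR$ set, so $v_i$ repeatedly selects $R_{v_i}$ via \TB and never changes.

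For the inductive step at iteration $r+1$, the only new possibility is that the suffix $Q$ is itself a (necessarily secure) provider route. In that case $Q$ is a secure provider route of $v_j$ strictly shorter than $R$; since the pruning forces $v_j$'s best route to match $Q$'s type, $v_j$ has no customer or peer route that would have fixed it earlier, so FSPrvR's shortest-first selection guarantees $v_j$ was fixed at some iteration in $\{1,\ldots,r\}$, and the induction hypothesis gives convergence of $v_j$ to $R_{v_j}$. The remaining cases ($Q$ a customer or peer route) are handled exactly as in the base case. Once all next hops of the routes in $v_i$'s $\BPR$ set have converged, $v_i$'s best available routes coincide with this set and $v_i$ locks onto $R_{v_i}$.

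The main obstacle I anticipate is the case analysis on the suffix type $Q$, and in particular justifying that a secure provider suffix is always strictly shorter than $R$ and is therefore fixed earlier within FSPrvR rather than concurrently or later. This relies on two model-specific facts that must be invoked carefully: that FSPrvR, like FPrvR, selects the AS with the \emph{shortest} qualifying (secure) provider route at each iteration, and that in the security \second model a secure route is preferred over an insecure one only \emph{after} \LP, so that any customer or peer suffix $Q$ forces $v_j$ into a strictly earlier subroutine. Subtleties about whether $R_{v_j}$ literally equals $Q$ are absorbed, as in the template lemmas, by the $\PR$/$\BPR$ pruning in step 4 of the algorithm, which retains only routes whose suffixes remain consistent with the fixed next hops.
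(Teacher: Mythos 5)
Your proposal is correct and takes essentially the same approach as the paper, whose proof of this lemma simply states that it is the proof of \lemmaref{lem_get_stable_customer} with all routes now required to be secure---i.e., exactly the induction on FSPrvR iterations with the case analysis on the suffix type $Q$ (secure provider suffixes fixed earlier in FSPrvR by the shortest-first rule, customer/peer suffixes fixed in FSCR, FCR, or FPeeR) that you spell out. Your explicit handling of the secure-suffix-of-a-secure-route fact and the placement of \SecP below \LP in the security \second model just makes precise what the paper leaves implicit.
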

\begin{proof}
The proof is essentially the proof of Lemma~\ref{lem_get_stable_customer} (where now all routes must be secure).
\end{proof}

\begin{lemma}  Under S*BGP routing, the route of every AS added to $\mathcal{I}$ in FPrvR is guaranteed to stabilize to the route $R_{v_i}$. \end{lemma}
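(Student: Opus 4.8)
The plan is to replay, by induction on the number of FPrvR iterations, the argument of Lemma~\ref{lem_get_stable_customer} (the corresponding claim for FPrvR in the security \third model), while accounting for the single structural difference between the two models: in the security \second model an extra subroutine, FSPrvR, runs immediately before FPrvR and has already fixed every AS whose best perceivable route is a \emph{secure} provider route. Recall that by the time FPrvR begins, FSCR, FCR, FPeeR and FSPrvR have all completed, so by Lemma~\ref{lem_get_secure_stable_provider}, the FCR and FPeeR lemmas for this model, and Lemma~\ref{lem_get_secure_stable_customer}, every AS added to $\mathcal{I}$ in those stages has already converged to its computed route. Consequently an AS $v_i$ selected in FPrvR has a $\BPR$ set consisting of \emph{insecure} provider routes: it can have no perceivable customer or peer route (these are preferred by \LP) and no perceivable secure provider route, since any such route would by \SecP be preferred to every insecure provider route and would therefore have made $v_i$ a candidate for FSPrvR instead.

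For the inductive step, let $v_i$ be the AS chosen at FPrvR iteration $r+1$, let $v_j$ be its next-hop on some route $R$ in its $\BPR$ set, and let $Q$ be the suffix of $R$ at $v_j$. As in Lemma~\ref{lem_get_stable_customer}, $Q$ is itself perceivable at $v_j$, and because $R$ survived pruning into $v_i$'s $\BPR$ set, if $v_j$ is already in $\mathcal{I}$ then its fixed route equals $Q$; it therefore suffices to show that every such $v_j$ was fixed strictly before $v_i$. I would argue by cases on $Q$. If $Q$ is a customer route, then by \LP the best perceivable route of $v_j$ is a customer route, so $v_j$ was fixed in FSCR or FCR. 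If $Q$ is a peer route, then by \LP the best route of $v_j$ is a customer or peer route, so $v_j$ was fixed in FSCR, FCR or FPeeR. Each of these subroutines precedes FPrvR.

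The main obstacle is the case where $Q$ is a provider route, which is exactly where the extra FSPrvR stage matters and where the proof departs from the security \third argument. Here I would split further. If $Q$ is a \emph{secure} provider route, then since in the security \second model \SecP prefers a secure provider route over every insecure provider route (both sharing the same \LP class), the best perceivable route of $v_j$ cannot be an insecure provider route; hence $v_j$ was fixed in FSCR, FCR, FPeeR or FSPrvR, all of which run before FPrvR. If instead $Q$ is an \emph{insecure} provider route, then $Q$ has length $|R|-1$, and the length-ordering argument of Lemma~\ref{lem_get_stable_customer} applies verbatim: FPrvR processes ASes in nondecreasing order of shortest insecure provider route, so $v_j$ (whose best insecure provider route is no longer than $Q$) was fixed at an earlier FPrvR iteration and is handled by the induction hypothesis. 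The base case is identical except that no earlier FPrvR iteration exists, so the insecure-provider subcase is ruled out by the minimality of $v_i$'s route length. Having shown that every next-hop along $R_{v_i}$ is fixed before $v_i$, I would conclude exactly as in Lemma~\ref{lem_get_stable_customer}: once all these ASes have converged, $v_i$'s set of best available routes is precisely its $\BPR$ set, so $v_i$ repeatedly selects $R_{v_i}$ via \TB and never deviates thereafter.
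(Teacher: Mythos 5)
Your proof is correct and follows essentially the same route as the paper's: an induction on FPrvR iterations with a case analysis on the suffix $Q$ at the next-hop $v_j$ (customer/peer routes fixed in FSCR/FCR/FPeeR, secure provider routes fixed in FSPrvR, insecure provider routes fixed at an earlier FPrvR iteration via the length ordering). You simply make explicit some steps the paper leaves implicit, such as why an AS chosen in FPrvR has only insecure provider routes in its $\BPR$ set and why minimality rules out the insecure-provider subcase in the base case.
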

\begin{proof}
As in the proof of Lemma~\ref{lem_get_stable_provider}, we prove this lemma by induction on the number of FPrvR iterations. Consider the first iteration. Let $v_i$ be the AS chosen at this iteration, let $v_j$ be a next-hop of $v_i$ on some route $R$ in $v_i$'s $\BPR$ set, and let $Q$ be the suffix of $R$ at $v_j$. Observe that either $Q$ is a customer/peer route, in which case $v_j$'s route was fixed before $FSPrvR$ or $Q$ is a secure provider route, in which case $v_j$'s route was fixed in $FSPrvR$. We can now use Lemma~\ref{lem_get_secure_stable_customer} and an argument similar to that in the proof of Lemma~\ref{lem_get_stable_provider} to conclude that $v_i$'s route will indeed converge to $R_{v_i}$ at some point in the S*BGP routing process.

Now, let us assume that for every AS chosen in iterations $1,...,r$ the statement of the lemma holds. Let $v_i$ be the AS chosen at iteration $r+1$ of FPrvR and consider $v_i$'s $\BPR$ set at this time. Let $v_j$ again be a next-hop of of $v_i$ on some route $R$ in $v_i$'s $\BPR$ set, and let $Q$ be the suffix of $R$ at $v_j$. Observe that if $Q$ is a provider route then $v_j$'s route must have been fixed in either FSPrvR or in FPrvR at some point in iterations $\{1,\ldots,r\}$. If, however, $Q$ is either a customer route or a peer route $v_j$'s route must have been fixed in either FCR or FPeeR. Hence, by the previous lemmas and the induction hypothesis, under S*BGP convergence, every such $v_j$'s route will eventually converge to $R_{v_j}$. Again, we can conclude that $v_i$'s route too will converge to $R_{v_i}$.
\end{proof}

\subsubsection{Correctness of algorithm for security \first.}

The proof that our algorithm for the security \first model outputs the S*BGP routing outcome in this model follows from the combination of the lemmas below (whose proofs is almost identical to the proof for the other two models and is therefore omitted). Again, let $R_{v_i}$ be the route from $v_i$ induced by the algorithm's computed next-hops.

\begin{lemma} Under S*BGP routing, the route of every AS added to $\mathcal{I}$ in FSCR is guaranteed to stabilize to the route $R_{v_i}$. \end{lemma}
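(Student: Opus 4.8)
The plan is to mirror the inductive argument used for Lemma~\ref{lem_get_stable_provider} and Lemma~\ref{lem_get_secure_stable_provider}, specializing it to the security \first preference order, in which the \SecP step precedes \LP. The induction runs on the iteration count $r$ of FSCR. Throughout, I would exploit three facts: at the start of FSCR the set $\mathcal{I}$ contains only $d$ and (possibly) $m$; every route FSCR considers is a \emph{secure customer} route; and such routes never traverse $m$, since attacked routes are carried over legacy BGP and are therefore insecure. Thus $m \in \mathcal{I}$ is irrelevant to this subroutine.

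First I would dispatch the base case. The AS $v_i$ selected in the first FSCR iteration has the shortest secure customer route in $\BPR^{0}(v_i)$. By the same \Ex-based suffix argument recalled below, this shortest route must have length $1$, so $v_i$ is a secure direct provider of a secure $d$. Once $d$ announces itself, $v_i$ learns this length-$1$ secure customer route; because in the security \first model a secure route beats every insecure route, and a customer route is best among secure routes, $v_i$ selects it and never deviates, converging to $R_{v_i}$.

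For the inductive step I would assume the claim for iterations $1,\ldots,r$ and take $v_i$ chosen at iteration $r+1$. The key structural observation, identical to the one in Lemma~\ref{lem_get_stable_provider}, is that \Ex forces the suffix of a secure customer route to again be a secure customer route: if $(v_i,v_{i-1},\ldots,d)$ is a perceivable customer route then $v_{i-1}$ exported it to its provider $v_i$, which \Ex permits only for customer routes, so $(v_{i-1},\ldots,d)$ is itself a customer route, and it is secure because the whole route is. Hence each next-hop $v_j \in \Nxt(v_i)$ has a strictly shorter secure customer route to $d$ and was fixed at an earlier FSCR iteration, so by the induction hypothesis $v_j$'s route converges to $R_{v_j}$.

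Finally I would close as in the earlier lemmas: once every $v_j \in \Nxt(v_i)$ has converged, $v_i$'s best available routes are exactly those in its $\BPR$ set at iteration $r+1$, so $v_i$ repeatedly selects $R_{v_i}$ via \TB and never switches. The main obstacle, and the only place where the security \first model genuinely enters, is justifying this non-deviation step: I must confirm that no later-discovered route (a secure peer/provider route, or any insecure route) can ever displace a secure customer route. This follows because \SecP ranks every secure route above every insecure one, and among secure routes \LP ranks customer routes first; hence a secure customer route is preferred over every competing perceivable route. Everything else is a routine transcription of the FCR argument with ``customer route'' strengthened to ``secure customer route''.
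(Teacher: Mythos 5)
Your proposal is correct and is essentially the argument the paper intends: the paper omits this proof, stating it is ``almost identical'' to those of Lemmas~\ref{lem_get_stable_provider} and~\ref{lem_get_secure_stable_provider}, and your induction on FSCR iterations --- with the \Ex-based suffix argument showing next-hops are fixed earlier, and the non-deviation step justified by \SecP preceding \LP so that a secure customer route dominates all competing routes --- is exactly that adaptation. Your explicit observation that routes through $m$ are necessarily insecure (so FSCR never fixes an attacked route) matches the reasoning the paper itself uses in the proof of Theorem~\ref{thm:noPDA1st}.
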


\begin{lemma} Under S*BGP routing, the route of every AS added to $\mathcal{I}$ in FSPeeR is guaranteed to stabilize to the route $R_{v_i}$. \end{lemma}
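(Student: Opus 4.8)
The plan is to mirror the proof of Lemma~\ref{lem_get_stable_peer} (the FPeeR lemma for the security \third model), replacing ``peer route'' throughout by ``secure peer route'' and letting the preceding FSCR lemma for the security \first model play the role that Lemma~\ref{lem_get_stable_provider} played there. Concretely, I would fix an AS $v_i$ added to $\mathcal{I}$ at some iteration of FSPeeR and show that, after finitely many steps of S*BGP convergence, $v_i$'s best available route is exactly a route captured in its $\BPR$ set, so that $v_i$ locks onto $R_{v_i}$ via \TB and never deviates. As in Lemma~\ref{lem_get_stable_peer}, no induction over the FSPeeR iterations is needed: the argument is ``single shot'' because the next-hops of the relevant routes will already have been fixed in an earlier stage.

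The key structural step is to locate where the next-hops of $v_i$'s candidate routes get fixed. Since $v_i$ is selected in FSPeeR, every route in its $\BPR$ set is a secure peer route $(v_i, v_{i-1}, \ldots, d)$. By the \Ex export rule a peer announces only customer routes, so the suffix $(v_{i-1}, \ldots, d)$ carried by the next-hop $v_j = v_{i-1}$ must be a customer route; and since the whole route is secure, this suffix is a \emph{secure} customer route. Hence every such $v_j$ possesses a secure customer route to $d$, so its route was fixed in FSCR, the stage preceding FSPeeR. Invoking the FSCR correctness lemma for the security \first model (proved exactly as Lemma~\ref{lem_get_secure_stable_provider}), each such $v_j$'s route converges to $R_{v_j}$ at some point in the convergence process.

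The remaining step is to argue that, once all these next-hops have stabilized, no route strictly more preferred than those in $v_i$'s $\BPR$ set can become available to $v_i$. Here I would use that in the security \first model \SecP is applied before \LP: any insecure route is dominated by a secure peer route, and the only secure route type preferred over a secure peer route is a secure customer route. But $v_i$ was \emph{not} selected in FSCR, which runs to exhaustion before FSPeeR begins; since $\PR(v_i, m, d)$ is fixed from the outset, this means $v_i$ has no perceivable secure customer route at all. Therefore, once its secure-customer-route next-hops have converged, $v_i$'s most preferred available route is precisely the secure peer route recorded in its $\BPR$ set, and $v_i$ stabilizes to $R_{v_i}$.

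I expect the main obstacle to be exactly this last point: ruling out the appearance, mid-convergence, of a more-preferred secure customer route at $v_i$. The cleanest way to dispatch it is to lean on the perceivable-route formalism --- $\PR(v_i,m,d)$ is determined once and for all by \Ex and is only pruned, never enlarged, as the algorithm proceeds --- so that ``$v_i$ survives FSCR'' is equivalent to ``$v_i$ has no perceivable secure customer route,'' which is precisely what the preference argument requires.
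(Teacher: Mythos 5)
Your proof is correct and is essentially the approach the paper itself takes: for the security \first model the paper omits this proof entirely, stating it is ``almost identical to the proof for the other two models,'' and your argument is exactly that instantiation --- the proof of Lemma~\ref{lem_get_stable_peer} with ``peer route'' replaced by ``secure peer route'' (so that, via \Ex plus security, each next-hop carries a secure customer route fixed in FSCR, per Lemma~\ref{lem_get_secure_stable_provider}'s analogue) and with the FSCR lemma playing the role of Lemma~\ref{lem_get_stable_provider}. Your closing observation --- that since FSCR runs to exhaustion, an AS chosen in FSPeeR has no perceivable secure customer route, so no more-preferred route can surface mid-convergence --- is precisely the implicit step behind the paper's phrase ``$v_i$'s best available routes are precisely those captured by its $\BPR$ set,'' spelled out in more detail than the paper provides.
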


\begin{lemma} Under S*BGP routing, the route of every AS added to $\mathcal{I}$ in FSPrvR is guaranteed to stabilize to the route $R_{v_i}$. \end{lemma}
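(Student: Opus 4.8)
The plan is to prove the statement by induction on the FSPrvR iteration, reusing the template of Lemma~\ref{lem_get_stable_customer} (the analogous claim for FPrvR in the security \third model, adapted for the security \second model in Lemma~\ref{lem_get_secure_stable_customer}) but restricting attention throughout to \emph{secure} routes. The one genuinely new ingredient I must supply is that, in the security \first model, the three secure stages FSCR, FSPeeR and FSPrvR form a self-contained subsystem: an AS assigned a secure route can never be lured away by an insecure route learned later, because the \SecP step sits above \LP and hence every secure route is strictly preferred to every insecure one.

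First I would set up the induction. Fix an AS $v_i$ added to $\mathcal{I}$ at some iteration of FSPrvR, let $R_{v_i}=(v_i,v_{i-1},\ldots,d)$ be the secure provider route recorded for it, let $v_j=v_{i-1}$ be its next hop, and let $Q$ be the suffix of $R_{v_i}$ at $v_j$. Since $R_{v_i}$ is secure, every AS on it---in particular every AS on $Q$---is secure, so $Q$ is itself a secure route, and (being a sub-route of the perceivable route $R_{v_i}$) it satisfies \Ex at every hop and is perceivable at $v_j$. The route $Q$ is a customer, peer, or provider route from $v_j$'s perspective, and I would split into these three cases exactly as in Lemma~\ref{lem_get_stable_customer}: if $Q$ is a (secure) customer route then $v_j$ was fixed in FSCR; if a (secure) peer route, in FSPeeR; and if a (secure) provider route, then $Q$ is one hop shorter than $R_{v_i}$, so $v_j$ was chosen at an earlier FSPrvR iteration. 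In the first two cases the already-proved lemmas for FSCR and FSPeeR guarantee that $v_j$'s route converges to $R_{v_j}$; in the third case the induction hypothesis does.

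Once every next hop $v_j$ on a route in $v_i$'s $\BPR$ set has converged to $R_{v_j}$, the routes that $v_i$ actually learns are precisely those captured by $\BPR(v_i,S,m,d)$ at the iteration in which $v_i$ was chosen, so $R_{v_i}$ becomes available to $v_i$. The final step is to argue that $v_i$ then selects $R_{v_i}$ via \TB and never deviates. This is where the security \first preference is essential: because $v_i$ was processed in FSPrvR rather than in FSCR or FSPeeR, it has no perceivable secure customer or peer route, so $R_{v_i}$ is its most preferred secure route; and since any secure route beats any insecure route under security \first, $R_{v_i}$ dominates every other (secure or insecure) route $v_i$ can ever learn. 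Hence $v_i$ is stable at $R_{v_i}$ thereafter.

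I expect the main obstacle to be precisely this last closure argument---verifying that the secure stages can be reasoned about independently of the later insecure stages FCR/FPeeR/FPrvR. Concretely, one must rule out that a route arriving during an insecure stage could displace a route fixed during FSPrvR; the strict dominance of secure over insecure routes in the security \first model is what forecloses this, and it is the crux distinguishing this proof from its security \third counterpart.
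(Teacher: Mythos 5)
Your proof is correct and follows essentially the same route as the paper: the paper omits this proof as ``almost identical'' to that of Lemma~\ref{lem_get_stable_customer} (via Lemma~\ref{lem_get_secure_stable_customer}), and your induction on FSPrvR iterations with the three-way case split on the suffix $Q$ (secure customer $\Rightarrow$ FSCR, secure peer $\Rightarrow$ FSPeeR, secure provider $\Rightarrow$ earlier FSPrvR iteration) is exactly that template restricted to secure routes. Your explicit closure argument---that under security \first the \SecP step sits above \LP, so no route fixed in a later insecure stage (FCR, FPeeR, FPrvR) can ever displace a route fixed in the secure stages---is precisely the implicit adaptation the paper relies on when it orders the secure subroutines first, so making it explicit is a faithful reconstruction rather than a departure.
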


\begin{lemma} Under S*BGP routing, the route of every AS added to $\mathcal{I}$ in FCR is guaranteed to stabilize to the route $R_{v_i}$. \end{lemma}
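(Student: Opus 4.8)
The plan is to prove this by induction on the FCR iteration, reusing almost verbatim the machinery of the FCR lemma for the security \third model (Lemma~\ref{lem_get_stable_provider}) and its security \second analogue. The only structural difference is that in the security \first ordering FCR is preceded by \emph{all three} secure subroutines (FSCR, FSPeeR, FSPrvR), so the argument must run against those three rather than against FSCR alone.

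First I would pin down the shape of the route set of any AS that FCR selects. Because the security \first ordering ranks \SecP above \LP, an AS $v_i$ chosen in FCR can have \emph{no} secure route available (else it would already have been fixed in FSCR, FSPeeR, or FSPrvR), and among its insecure perceivable routes the \LP step forces $\BPR(v_i)$ to consist entirely of insecure customer routes. Fixing such a route $R\in\BPR(v_i)$ with next hop $v_j$, the \Ex export argument used in Lemma~\ref{lem_get_stable_provider} shows that the suffix $Q$ of $R$ at $v_j$ is itself a perceivable customer route, so in particular $Q$ is a perceivable route at $v_j$.

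The crux is to show that every such $v_j$ is fixed before $v_i$'s FCR iteration, which I would establish by splitting on whether $v_j$ has any secure route. If it does, then by \SecP its best route is secure, so $v_j$ was fixed in one of FSCR, FSPeeR, FSPrvR, that is, before FCR even started. If it does not, then $Q$ witnesses that $v_j$ has an insecure customer route, so by \LP its best insecure route is a customer route of length at most $|Q|<|R|$; since FCR always fixes the shortest available customer route first, $v_j$ was selected in FCR strictly before $v_i$. I expect this dichotomy --- and in particular confirming that the three secure subroutines jointly capture \emph{every} $v_j$ carrying a secure route, regardless of its business-relationship type --- to be the step requiring the most care, since it is exactly what separates the security \first argument from the simpler security \second one.

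Once the next hops are under control, the remainder is routine and identical to the earlier lemmas: by the induction hypothesis together with the three preceding secure-subroutine lemmas, each $v_j$'s route converges to $R_{v_j}$ at some point in the S*BGP process; from the instant all of them have converged, $v_i$'s best available routes are precisely those captured by $\BPR(v_i)$, so $v_i$ repeatedly selects $R_{v_i}$ according to \TB and never deviates thereafter. This fixes the route of every AS added to $\mathcal{I}$ in FCR, completing the induction.
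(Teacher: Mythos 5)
Your proposal is correct and takes essentially the approach the paper intends: for the security \first model the paper omits this proof as ``almost identical'' to the inductive FCR arguments given for the security \second and \third models (Lemma~\ref{lem_get_stable_provider} and its \second-model analogue), and your write-up is precisely that adaptation. In particular, your dichotomy on the next hop $v_j$ --- either $v_j$ has a secure route of \emph{some} type and was fixed in FSCR/FSPeeR/FSPrvR (all of which precede FCR in this model, since \SecP outranks \LP), or $Q$ witnesses a strictly shorter insecure customer route fixing $v_j$ at an earlier FCR iteration --- is exactly the one modification distinguishing this case from the security \second proof, where only FSCR precedes FCR.
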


\begin{lemma} Under S*BGP routing, the route of every AS added to $\mathcal{I}$ in FPeeR is guaranteed to stabilize to the route $R_{v_i}$. \end{lemma}

\begin{lemma} \label{lem_get_stable_customer_1st} Under S*BGP routing, the route of every AS added to $\mathcal{I}$ in FPrvR is guaranteed to stabilize to the route $R_{v_i}$. \end{lemma}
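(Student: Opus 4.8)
The plan is to mirror the induction used in Lemma~\ref{lem_get_stable_customer} (the security~\third analysis of FPrvR), adapting the case analysis to account for the extra \SecP priority and the reordered subroutines of the security~\first algorithm. As there, I let $R_{v_i}$ denote the route induced by the computed next-hops and proceed by induction on the FPrvR iteration, aiming to show that once all ASes fixed strictly earlier have converged, $v_i$'s best available routes coincide with its $\BPR$ set, so that $v_i$ repeatedly selects $R_{v_i}$ via its tiebreak \TB and never deviates thereafter.

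First I would characterize the $\BPR$ set of any AS $v_i$ chosen in FPrvR. Because \SecP is the top priority in the security~\first model, $v_i$ can have no perceivable secure route (otherwise it would already have been fixed in FSCR, FSPeeR, or FSPrvR); and because \LP then depreferences provider routes, $v_i$ can have no perceivable insecure customer or peer route (otherwise it would have been fixed in FCR or FPeeR). Hence every route in $\BPR(v_i)$ is an \emph{insecure provider} route, all of common length. This is the security~\first analogue of the opening observation in the proof of Lemma~\ref{lem_get_stable_customer}.

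Next, for a route $R \in \BPR(v_i)$ with next-hop $v_j$ and suffix $Q = (v_j,\ldots,d)$, I would argue that $v_j$'s route is fixed strictly before $v_i$, so that the induction hypothesis together with the earlier lemmas for FSCR, FSPeeR, FSPrvR, FCR, and FPeeR guarantees that $v_j$ converges to $R_{v_j}$. The case split is now two-dimensional. If $Q$ is secure, $v_j$ was fixed in one of the three secure subroutines, all of which precede FPrvR. If $Q$ is insecure and is a customer or peer route at $v_j$, then $v_j$ was fixed in FCR or FPeeR, again before FPrvR. The only remaining possibility is that $Q$ is an insecure provider route; since $Q$ is one hop shorter than $R$, $v_j$ either was fixed in an earlier subroutine (if its own best route is strictly more preferred than an insecure provider route) or was selected in FPrvR at an earlier iteration (because it possesses a strictly shorter insecure provider route than $v_i$). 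In the base case this last sub-case is ruled out, since there is no earlier FPrvR iteration and a strictly shorter insecure provider route at $v_j$ would contradict the minimality of $v_i$'s selection --- exactly as in Lemma~\ref{lem_get_stable_customer}. In every case $v_j$ is fixed before $v_i$, so once all such next-hops have converged the concluding argument runs identically to the previous lemmas.

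The hard part will be the extra \emph{security} dimension of the case analysis, which has no counterpart in the security~\third proof: the suffix of an \emph{insecure} route can itself be \emph{secure} (when $v_i$ is the only insecure AS on $R$), so I must explicitly include the ``$Q$ secure'' branch and verify it is absorbed by the three secure subroutines that now precede FPrvR in the security~\first ordering. I would also take care to justify that a perceivable insecure \emph{provider} suffix $Q$ forces $v_j$ to be fixed no later than an AS with a provider route of length $|Q| < |R|$; this relies on the fact that $v_j$'s genuinely best route can only be \emph{at least as preferred} as $Q$, hence fixed at least as early, which is precisely the monotonicity of the multi-stage BFS ordering exploited in Lemma~\ref{lem_get_stable_customer}. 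Once these branches are verified, the stabilization conclusion follows exactly as in the security~\third and security~\second analyses.
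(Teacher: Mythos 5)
Your proposal is correct and follows exactly the route the paper intends: the paper actually \emph{omits} this proof, remarking only that it is ``almost identical'' to the inductive BFS-stabilization arguments for the security \second and \third models (Lemmas~\ref{lem_get_stable_customer} and its security-\second analogue), and your reconstruction---induction on the FPrvR iteration, with the suffix case analysis extended by the ``$Q$ secure'' branch absorbed by FSCR/FSPeeR/FSPrvR---is precisely that adaptation. Your added care about secure suffixes of insecure routes and the minimality of the shortest-provider-route selection fills in the details the paper waves at without changing the approach.
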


\section{Bounds on happy ASes.}

We use the three algorithms in Appendix~\ref{apx:alg:3}-\ref{apx:alg:1} to compute upper and lower bounds on the set of happy ASes (as discussed in Section~\ref{sec:metric}), for a given attacker-destination pair $(m,d)$, set of secure ASes $S$ and routing model.  To do this, each algorithm records, for every AS discovered in the BFS computation, whether (1) all routes in its $\BPR$ at that iteration lead to the destination, or (2) all these routes lead to the attacker or (3) some of these routes lead to the destination and others to the attacker. The number of ASes in the \first category is then set to be a lower bound on the number of happy ASes. The total number of ASes in the \first and \third category is set to be an upper bound on the number of happy ASes. 

The correctness of this approach follows from the correctness of our algorithms (Appendix~\ref{apx:alg:corr}), and the fact that all the routes in the $\BPR^r(v_i)$ of a node $v_i$ at iteration $r$ have the same length, type, and are either all secure or insecure, so the \TB criteria completely determines which of these routes are chosen. As such, ASes in the \first category choose legitimate routes (and are happy) regardless of the \TB criteria, ASes in the \second category choose attacked routes (and are unhappy) regardless of the \TB criteria, and whether ASes in the \third category are happy completely depends on the \TB criteria.

\section{BGP convergence.}\label{apx:converge}

Taken together, Lemmas~\ref{lem_get_stable_provider}-\ref{lem_get_stable_customer_1st} proven in Appendix~\ref{apx:alg:corr} above imply Theorem~\ref{thm:convergence}; that is, when all ASes prioritize secure routes the same way, convergence to a single stable routing state is guaranteed, regardless of which ASes adopt S*BGP, even in presence of attacks discussed in Section~\ref{sec:threat}.

\section{Partitions.}\label{apx:partitions}

Recall from Section~\ref{sec:doomedImm} that a source AS $s$ is \emph{protectable} if S*BGP can affect whether or not it routes to the legitimate destination $d$ or the attacker $m$;  the source AS $s$ is \emph{doomed} (resp. \emph{immune}) if it always routes to the attacker $m$ (resp. routes to the legitimate destination $d$), regardless of how S*BGP is deployed in the network.  In the security \first model all ASes are assumed to be protectable (we do this to avoid the complications discussed in Appendix~\ref{apx:sec1parts}).  In this section we describe how we compute the sets of immune, doomed, and protectable ASes with respect to an attacker-destination pair $(m,d)$ in the security \second and \third models. To do this, we set $S=\emptyset$ and compute the BGP routing outcome for that $(m,d)$ pair using the algorithm in Section~\ref{apx:alg:3}.

\subsection{Computing partitions: security \third}

To determine the partitions for the security \third model, this algorithm records, for every AS discovered in the BFS computation whether (1) all routes in its $\BPR$ set at that iteration lead to the destination, or (2) all these routes lead to the attacker or (3) some of these routes lead to the destination and others to the attacker.  We classify ASes in the \first category as immune, ASes in the \second category as doomed, and ASes in the \third category as protectable.  We show below that this indeed coincides with our definitions of immune, doomed, and protectable ASes in Section~\ref{sec:doomedImm} for the security \third model.

The following allows us to prove the correctness of our algorithm for computing partitions:
\begin{cor} \label{cor:3rdInsensitive_1}
In the security \third routing model, for any destination $d$, attacker $m$, source $s$ and deployment $S \subseteq V$, $s$ will stabilize to a route of the same type and length as any route in $\BR(s, \emptyset, m, d)$.
\end{cor}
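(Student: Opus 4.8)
The plan is to reduce the claim to a structural property of the \textbf{Fix-Routes} (FR) algorithm for the security \third model, whose correctness is already established by Lemmas~\ref{lem_get_stable_provider}--\ref{lem_get_stable_customer}. Those lemmas guarantee that, for any deployment $S$, the source $s$ stabilizes to exactly the route $R_s$ that FR assigns it, so it suffices to prove the purely algorithmic statement: for every AS $v$ and every $S$, the route $R_v$ produced by FR has the same type (customer/peer/provider) and the same length as the routes in $\BR(v,\emptyset,m,d)$. (When $S=\emptyset$ no AS is secure, so \SecP is vacuous and $\BR(s,\emptyset,m,d)$ is simply the best type/length class under pure BGP; by the observation that all routes in a $\BR$ set share a common type and length, ``same type and length as any route'' is well-defined.) The guiding point is that in the security \third model the \SecP step sits \emph{below} both \LP and \SP, so security never influences route type or length; it is consulted only in the tiebreak among routes that already agree on type and length. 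Concretely, the subroutine order (FCR, then FPeeR, then FPrvR) and the ``shortest route first'' rule inside each subroutine depend only on type and length, never on $S$; the deployment $S$ can change only \emph{which} route within a fixed type/length class an AS selects.

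First I would make the $S$-invariant processing order precise by assigning each AS $v$ a rank equal to the position of the (type, length) class of $\BR(v,\emptyset,m,d)$ in FR's order (customer classes by increasing length, then peer classes, then provider classes), with $d$ and $m$ as roots. I would then prove, by induction on this rank, that for every $S$ the set of ASes whose FR route has class of rank at most $k$ is identical to the corresponding set for $S=\emptyset$, and that each such AS has the same type and length under $S$ as under $\emptyset$. For the inductive step, consider an AS $v$ fixed at rank $k$. A route is available to $v$ exactly when a neighbor $w$ has already fixed a route whose suffix $w$ exports to $v$; by the induction hypothesis every such $w$ (fixed at a strictly lower rank) has an $S$-invariant type and length. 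The type of the route $v$ obtains through $w$ is determined solely by the fixed business relationship on the $v$--$w$ edge, its length is $w$'s length plus one, and whether $w$ exports to $v$ under \Ex depends only on whether $w$'s route is a customer route, i.e. on $w$'s type. All three quantities are $S$-invariant by the induction hypothesis, so the collection of (type, length) classes available to $v$ is the same under $S$ as under $\emptyset$. Since \LP and \SP select the best such class independently of $S$, $v$'s chosen class --- and hence its rank --- coincides with that of $\BR(v,\emptyset,m,d)$. Loop-freeness is automatic because FR always sets $\Nxt(v)$ to a previously fixed AS, so $v$ never lies on $R_{\Nxt(v)}$.

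The main obstacle is precisely the worry that securing ASes, by changing \emph{which} same-class route a neighbor announces, could cascade downstream and alter the \emph{type or length} of the routes reaching $v$ (for instance by flipping an export decision or by creating or destroying a loop). The argument above neutralizes this: the two ingredients that actually propagate downstream --- the \Ex export decision and the route length --- depend on a neighbor only through its route's type and length, which the induction pins down as $S$-invariant, while loop-freeness is guaranteed structurally by FR's tree construction rather than by the particular intra-class choice. I would emphasize that this is exactly where the security \third ordering is essential: if \SecP sat above \SP or \LP (as in the security \second or \first models), securing a neighbor could change its route's length or type, the set of classes available to $v$ would no longer be $S$-invariant, and the induction would break --- consistent with the fact that this insensitivity, and the monotonicity of Theorem~\ref{thm:nodamage3rd} that builds on it, fail in those models.
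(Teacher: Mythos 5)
Your proposal is correct and takes essentially the same route as the paper's proof: both reduce the claim to the FR algorithm via the stabilization lemmas and observe that, because \SecP sits below \LP and \SP in the security \third model, the subroutine order (FCR, FPeeR, FPrvR) and the shortest-route-first rule---and hence the (type, length) class at which each AS is fixed---are independent of the deployment $S$, so every $\BPR^r(s)$ along the run stays inside $\BPR(s,\emptyset,m,d)$. The only difference is one of rigor: your explicit rank induction spells out the $S$-invariance of the \Ex export decisions, the length increments, and loop-freeness, details the paper's proof compresses into an appeal to ``the correctness of our algorithm.''
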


\begin{proof} This follows from the correctness of our algorithm for computing routes in the security \third model (Appendix~\ref{apx:alg:3}).  Note that because in the security \third model route security is prioritized below path length, all routes in $\BPR^r(s)$ must be contained in $\BPR(s,$ $\emptyset, m, d)$, where $\BPR^r(s)$ is the set of best perceivable routes of $s$ during iteration $r$ of the subroutine FCR, FPeeR or FPrvR of our algorithm, when $\BPR(s, S, m, d)$ contains customer, peer or provider routes respectively.  Recognize that by the correctness of our algorithm, $s$ must  stabilize to a route in $\BPR^r(s)$ for some iteration $r$ of exactly one of these subroutines.

Therefore, any $s$ that has customer routes in $\BPR(s,$ $\emptyset, m, d)$ will be ``fixed'' to a route in the FCR subroutine for any choice of $S$.  Similarly, if $s$ has peer (\resp provider) routes in $\BPR(s, \emptyset, m, d)$, it will be ``fixed'' to a route in the FPeeR (\resp FPrvR) subroutine for any choice of $S$.  Therefore, the type of the route will be fixed to the same type as that of the $\BPR(s,$ $\emptyset, m, d)$ for all $S$. Moreover, when we choose to ``fix'' the route of $s$ in the appropriate subroutine, we do so by selecting $s$ with a shortest routes out of all the sources that have not been ``fixed'', and regardless of $S$, and it follows the the length of the route will be the same for all $S$.
\end{proof}

Corollary \ref{cor:3rdInsensitive_1} tells us that for determining whether $s$ is immune, doomed or protectable in security \third model, it is sufficient to keep track of all the routes of the best type and shortest length of $s$ (i.e. all the routes in $\BPR(s, \emptyset, m, d)$), because $s$ is guaranteed to  stabilize to one of these routes.  Therefore, if all such routes are legitimate (\resp attacked), then $s$ will always  stabilize to a legitimate (\resp attacked) route under any S*BGP deployment $S$, so $s$ must be immune (\resp doomed).  However, if some of these routes are legitimate and some are attacked, then whether $s$  stabilizes to a route to $m$ or $d$ depends on deployment $S$, so $s$ must be protectable.

\subsection{Computing partitions: security \second}

The algorithm for determining partitions for the security \second model is slightly different from that used when security is third.  We still use the algorithm from Appendix~\ref{apx:alg:3}, except that now, for every AS discovered in the BFS computation we need to keep track of all perceivable routes in its $\PR$ set that are of the same \emph{type} as the routes in its $\BPR$ set.  
We keep track of whether (1) all such routes lead to the destination, or (2) all such routes lead to the attacker or (3) some of these routes lead to the destination and others to the attacker.  We classify ASes in the \first category as immune, ASes in the \second category as doomed, and ASes in the \third category as protectable. 
 
The following allows us to prove the correctness of this algorithm:
\begin{cor} \label{cor:2ndInsensitive}
In the security \second routing model, for any destination $d$, attacker $m$ source $s$ and deployment $S \subseteq V$, $s$ will  stabilize to a route of the same type as any route in $\BPR(s, \emptyset, m, d)$.
\end{cor}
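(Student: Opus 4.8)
The plan is to mirror the proof of Corollary~\ref{cor:3rdInsensitive_1}, but to track only the \LP type of a route (customer, peer, or provider) rather than both its type and its length. The two facts I would lean on are that (i)~the perceivable-route set $\PR(s,m,d)$ does not depend on $S$, since perceivability is governed solely by the export policy \Ex and the topology and never by which ASes are secure, and (ii)~in the security~\second model the \LP step sits strictly above both \SecP and \SP, so the best \LP type among $s$'s perceivable routes is decided before security or length ever enters the comparison.

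First I would invoke the correctness lemmas of Appendix~\ref{apx:alg:corr} for the security~\second algorithm to conclude that $s$ stabilizes to the route $R_{s}$ in $\BPR^{r}(s)$ at the unique iteration $r$ at which $s$ is added to $\mathcal{I}$, and that this occurs in exactly one of the phases: the customer phase (FSCR followed by FCR), the peer phase (FPeeR), or the provider phase (FSPrvR followed by FPrvR). Since each phase fixes routes of a single \LP type, the phase in which $s$ is fixed determines the type of $R_{s}$.

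Next I would argue that the phase fixing $s$ is the same for every $S$, and in particular agrees with the phase that fixes $s$ when $S=\emptyset$. The point is that $\BPR(s,S,m,d)$ always consists of routes of the best \LP type that $s$ can realize, and this best type is a property of the $S$-independent perceivable structure: securing ASes can, through \SecP, only reorder routes \emph{within} a fixed best \LP type (and, because \SP now sits below \SecP, possibly change their length), but it can never raise a route of worse \LP type above one of better \LP type. Hence if $s$ realizes a customer route when $S=\emptyset$ it realizes a customer route for every $S$, and likewise for peer and for provider routes, because within the customer phase the secure-first ordering (FSCR before FCR) merely changes \emph{which} customer route is locked, not the fact that one is; the analogous statement holds for the provider phase. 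As all routes in $\BPR(s,\emptyset,m,d)$ share a single type, $R_{s}$ carries that type for every $S$, which is exactly the claim.

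The main obstacle I anticipate is the subtlety already flagged for the security~\third case in Corollary~\ref{cor:3rdInsensitive_1}: ruling out that securing some \emph{upstream} ASes collaterally destroys $s$'s access to a best-type route and forces it onto a worse type. I would dispatch this exactly as in Appendix~\ref{apx:alg:corr}: the immediate suffix of any perceivable customer route is, by \Ex, itself a perceivable customer route, so the set of ASes that stabilize to customer routes is pinned down structurally and the union of the FSCR and FCR phases fixes precisely those ASes regardless of $S$; the identical statement holds for the peer and provider phases. Thus upstream securing can shift lengths and specific next-hops but cannot disturb the type-layering of the two-rooted tree. I would also note explicitly that, in contrast to Corollary~\ref{cor:3rdInsensitive_1}, length is \emph{not} preserved here, precisely because \SecP outranks \SP in the security~\second model.
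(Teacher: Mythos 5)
Your proposal is correct and takes essentially the same approach as the paper's proof: both reduce the claim to the correctness of the security-\second algorithm in Appendix~B and observe that, because \SecP sits below \LP, the phase in which $s$ is fixed (FSCR/FCR for customer, FPeeR for peer, FSPrvR/FPrvR for provider) is determined by the $S$-independent best \LP type of $s$'s perceivable routes, so the stabilized route has the same type as the routes in $\BPR(s,\emptyset,m,d)$ for every $S$. Your explicit remarks on the \Ex-closure of route suffixes and on length \emph{not} being preserved (unlike the security-\third corollary) simply spell out what the paper delegates to its correctness lemmas.
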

\begin{proof} This follows from the correctness of our algorithm for computing routes in the security \second model (Appendix~\ref{apx:alg:2}).
Because in the security \second model security is prioritized above route length, but below route type, all the routes in $\BPR(s)^r$ must be contained in the set of routes in $\PR(s, m, d)$ that are of the same type as routes in $\BPR(s, \emptyset, m, d)$.  Recall that $\BPR^r(s)$ is the set of best perceivable routes of $s$ during iteration $r$ of the appropriate subroutines FSCR and FCR, FPeeR, or FSPrvR and FPrvR of our algorithm, if $\BPR(s, S, m, d)$ contains customer, peer or provider routes respectively.  Also, note that by the correctness of our algorithm, $s$ must  stabilize to a route in $\BPR^r(s)$ for some iteration $r$ of exactly one of these subroutines.

Therefore, if $s$ has customer routes in $\BPR(s, \emptyset, m, d)$, it  will be ``fixed'' to a route during either the FSCR or FCR subroutines of this algorithm for any choice of $S$.  If $s$ has peer routes in $\BPR(s, \emptyset, m, d)$, it will be ``fixed'' to a route in the FPeeR  subroutine for any choice of $S$.  Finally, if $s$ has provider  routes in $\BPR(s, \emptyset, m, d)$, it will be ``fixed'' to a route in either FSPrvR or FPrvR subroutines for any choice of $S$.
\end{proof}

Corollary \ref{cor:2ndInsensitive} tells us that to determine if $s$ is immune, doomed or protectable in security \second model, it is sufficient to keep track of all the routes of the best type of $s$ (\ie. all $s$'s perceivable routes of the same type as routes in $\BPR(s, \emptyset, m, d)$), because $s$ is guaranteed to  stabilize to one of these routes. Therefore, if all such perceivable routes are legitimate (\resp attacked) , then $s$ must  stabilize to a legitimate (\resp attacked) route under any S*BGP deployment $S$, so $s$ must be immune (\resp doomed).  However, if some of these routes are legitimate and some are attacked, then whether $s$  stabilizes to a route to $m$ or $d$ depends on deployment $S$, so $s$ must be protectable.

\subsection{Computing partitions: security \first}\label{apx:sec1parts}

In this paper we assume that all source ASes are protectable in security \first model (see \eg Figure~\ref{fig:partitions}).  Technically, however, there can be doomed and immune ASes in the security \first model, in a few exceptional cases; here we argue the the number of such ASes is negligible.

\myparab{Doomed ASes.} We can characterize doomed ASes as follows.
\begin{obs} \label{obs:doomed1st}
In the security \first model, for a particular destination-attacker pair $(d,  m)$, a source AS $v_i$ is doomed if and only if every one of its perceivable routes $\PR(v_i,m,d)$ contains $m$.
\end{obs}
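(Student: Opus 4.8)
The plan is to prove the two directions separately, leaning on two structural facts. First, the set $\PR(v_i,m,d)$ is fixed by the topology, the business relationships, and the export rule \Ex; securing an AS only adds the \SecP step to its \emph{selection} and never changes \Ex, so $\PR(v_i,m,d)$ is \emph{independent} of the secure set $S$. Second, in the security \first model \SecP is the top-priority step, so a secure AS always ranks a secure route above any insecure one; since the attacker injects the bogus ``$m,d$'' via legacy BGP, every \attacked route is insecure, and hence every \emph{secure} route is automatically \legit.

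For the easy direction (doomed $\Leftarrow$ all perceivable routes contain $m$), I would note that the routes an AS actually learns (its available routes) are, for every $S$, a subset of its perceivable routes. Thus if every route in $\PR(v_i,m,d)$ contains $m$, then for every deployment $S$ the only routes $v_i$ could select also contain $m$, so $v_i$ routes through $m$ regardless of $S$, i.e.\ it is doomed. The one degenerate case, $\PR(v_i,m,d)=\emptyset$, means $v_i$ cannot reach $d$ at all; as elsewhere in the paper I would assume $d$ is reachable so that at least one perceivable route exists.

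For the main direction I would argue the contrapositive: if $v_i$ has \emph{some} \legit perceivable route $R=(v_i,v_{i-1},\dots,v_1,d)$, then there is a deployment $S$ under which $v_i$ avoids $m$, so $v_i$ is not doomed. The natural choice is to let $S$ be exactly the ASes on $R$ (including $d$ and $v_i$). Because $R$ is \legit it contains no bogus attacker link, so once all its ASes are secure, $R$ becomes a fully secure route; moreover any secure route among the ASes of $S$ must lead to $d$, never to $m$ (as $m \notin S$ and the bogus announcement is insecure). Since $v_i\in S$, it runs the \SecP step and will strictly prefer an available secure route over the insecure bogus route to $m$; hence it suffices to show $v_i$ actually \emph{obtains} an available secure route to $d$.

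The crux, and the step I expect to be the main obstacle, is exactly this last point: making the ASes of $R$ secure must leave $v_i$ with an available secure route, even though the intermediate ASes on $R$ might individually prefer \emph{other} secure routes, and I must guarantee that a secure route still propagates all the way up to $v_i$. Here I would use that perceivability of $R$ is precisely \Ex-compliance at every hop, i.e.\ $R$ is a valley-free path inside the secure sub-network. I would then invoke the convergence guarantee of \theoremref{thm:convergence} together with the correctness of the security-\first algorithm, whose secure-route phases (FSCR, FSPeeR, FSPrvR) build a two-rooted tree over the secure ASes: since a valley-free secure path from $v_i$ to $d$ exists and no secure route can reach $m$, $v_i$ stabilizes to a secure route, which is necessarily \legit. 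This yields a \legit outcome for $v_i$ under $S$, completing the contrapositive and hence the equivalence.
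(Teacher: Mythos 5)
Your proposal is correct, and it is actually \emph{more} complete than the paper's own justification. The paper disposes of the observation in two sentences: (i) if every route in $\PR(v_i,m,d)$ contains $m$, no deployment can make $v_i$ happy; (ii) if $v_i$ is not doomed, some deployment makes it happy, whence its selected route is an $m$-free perceivable route. These two sentences are contrapositives of one another --- both establish only the easy implication (all perceivable routes contain $m$ $\Rightarrow$ doomed) --- and the converse (doomed $\Rightarrow$ every perceivable route contains $m$) is left implicit, even though it is the direction the paper later uses to argue that few ASes are doomed. You correctly identify this converse as the real content and supply the missing construction: secure exactly the ASes of an $m$-free perceivable route $R$, note that $\PR(v_i,m,d)$ is independent of $S$ and that every \attacked route is insecure (so secure routes are automatically \legit and cannot lead to $m \notin S$), and show $v_i$ stabilizes to a secure route. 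Your flagged crux is the right one, and your appeal to the FSCR/FSPeeR/FSPrvR phases resolves it; I would only make the invariant explicit: \Ex-compliance of $R$ means $R$ is valley-free, export toward customers is unconditional, and at the hops of $R$ where export to a peer or provider is required the suffix of $R$ is a secure \emph{customer} route, so in the security \first model the AS there --- whatever secure route it ultimately selects --- selects one in the secure-customer class, which is again exportable. With that invariant the induction along $R$ closes, $v_i \in S$ obtains an available secure route and, by \SecP, prefers it to the insecure bogus announcement. Your handling of the degenerate case $\PR(v_i,m,d)=\emptyset$ via a reachability assumption also patches an edge case the paper silently ignores.
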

If every perceivable route from $v_i$ to $d$ contains $m$, then there is no S*BGP deployment scenario that could result in $v_i$ being happy.  On the other hand, if $v_i$ is not doomed, then there must be at least one S*BGP deployment scenario that results in $v_i$ being happy, in which case $v_i$ must select a route to $d$ that does not contain $m$.

ASes that single-homed to the attacking AS $m$ are certainly doomed, per
Observation~\ref{obs:doomed1st}. There are $11,953$ and $11,585$ single-homed stub ASes (without peers) for the regular and the IXP-augmented graphs respectively.  As an upper bound, we consider only the former number. Recall from Section~\ref{sec:metric} that our security metric is an average of happy sources, where the average is taken over all sources and all appropriate destination-attacker pairs. It follows that that for any one destination, there can be at most  $11,953$  doomed single-homed ASes when summed over all attackers and all sources.  Therefore, the fraction of doomed sources does not exceed $.001\%$ and $.01\%$ when considering all and only non-stub attackers respectively.
While Observation~\ref{obs:doomed1st} suggests there could be other doomed nodes (other than the just the single-homed stub ASes), however, the Internet graph is sufficiently well-connected to ensure that the number of such ASes is small.

\myparab{Immune ASes.} A similar characterization is possible for immune ASes.
\begin{obs} \label{obs:doomed1st}
In the security \first model, for a particular destination-attacker pair $(d,  m)$, a source AS $v_i$ is immune if every one of its perceivable routes $\PR(v_i,\emptyset,m)$ contains $d$.
\end{obs}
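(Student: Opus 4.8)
The plan is to prove the stated \emph{sufficient} condition by showing that the hypothesis makes it impossible for $v_i$ to ever \emph{learn} a route through the attacker, under any deployment $S$. The single structural fact I would lean on is that the bogus routes created by $m$'s attack propagate through the network in exactly the same way that genuine routes toward the destination $m$ would in an attack-free run.

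First I would make this correspondence precise. In the attack of Section~\ref{sec:attack:dets}, $m$ announces the one-hop path ``$m,d$'' to \emph{all} of its neighbors, behaving exactly like an origin AS announcing its own prefix. Take any candidate attacked route $R=(v_i,\dots,v_2,m,d)$ at $v_i$. The export test \Ex applied at each hop $v_j\to v_{j+1}$ depends only on the business relationship between $v_j$ and its next hop along the node sequence $(v_i,\dots,v_2,m)$, while the hop out of $m$ is unconditionally allowed (exempt for the attacker, and allowed for a genuine origin exporting to all neighbors). That node sequence is precisely the route $Q=(v_i,\dots,v_2,m)$ to destination $m$ in the attack-free scenario, whose perceivability is governed by the \Ex test at the same hops with the same relationships. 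Hence $R$ is a perceivable attacked route at $v_i$ exactly when $Q\in\PR(v_i,\emptyset,m)$ and, in addition, $R$ is loop-free, i.e. $d\notin Q$. This yields the bijection I want: the perceivable attacked routes at $v_i$ are exactly the $Q\in\PR(v_i,\emptyset,m)$ with $d\notin Q$, each extended by appending $d$.

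With the correspondence in hand I would finish in two lines. By hypothesis every $Q\in\PR(v_i,\emptyset,m)$ contains $d$, so the bijection leaves \emph{no} perceivable attacked route at $v_i$. Since perceivability is necessary for a route to be learned (a route that violates \Ex at some hop never reaches $v_i$), $v_i$ can never learn, and hence never select, a route through $m$ under any deployment $S$ or any tie-break. As $d$ keeps originating its prefix and stays reachable, every route $v_i$ selects is a legitimate route to $d$; thus $v_i$ routes to $d$ for all $S$, which is the definition of immune.

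The step I expect to be the crux is the bijection of the second paragraph: one must verify that the attacker's exemption from \Ex on its outgoing announcement lines up with a genuine origin exporting to all neighbors, and that every downstream hop sees identical business relationships, so that the two route families share the same perceivability status. I would also be careful to claim only the ``if'' direction: unlike the doomed characterization given earlier in this section, the converse is false in the security \first model. For instance, a direct customer of $d$ always prefers its insecure customer route $(v_i,d)$ over any peer or provider route toward $m$ by \LP, hence is immune, yet it may still possess a longer perceivable route to $m$ that avoids $d$; so possessing a perceivable attacked route does not preclude immunity, and only the sufficient direction is asserted.
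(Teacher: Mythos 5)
Your proof is correct, and it actually goes beyond the paper: in \appendixref{apx:sec1parts} this observation is stated with no proof at all --- only the companion \emph{doomed} characterization receives a two-sentence justification, and the immune case is introduced merely with ``a similar characterization is possible.'' The correspondence you build is exactly the right formal underpinning for what the paper treats as self-evident. In the paper's definition of perceivable routes, the \Ex checks on an \attacked route begin only at $m$'s neighbor (the condition ranges over $1<j<i$), which matches a genuine origin announcing to all neighbors with no \Ex check; hence every perceivable attacked route $(v_{i-1},\dots,v_2,m,d)$ at $v_i$ corresponds, hop for hop with identical business relationships, to a route $(v_{i-1},\dots,v_2,m)\in\PR(v_i,\emptyset,m)$, and simplicity of the attacked route forces $d\notin(v_{i-1},\dots,v_2,m)$. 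Under the hypothesis no $d$-free route to $m$ is perceivable, so no attacked route can ever propagate to $v_i$, for any deployment $S$ and any tie-break --- immunity follows, and since attacked-route availability is deployment-independent this argument is in fact model-agnostic, not specific to security \first. You are also right to claim only the ``if'' direction and to note the converse fails (the paper likewise states only a sufficient condition here, in contrast to the iff for doomed ASes); your observation that an AS with a one-hop route via $d$ is immune despite possessing perceivable routes to $m$ avoiding $d$ matches the paper's own example of immune direct providers of $d$.

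Two minor blemishes, neither fatal. First, in your converse remark you call $v_i$ a ``direct customer of $d$'' yet invoke ``its insecure customer route $(v_i,d)$''; these are inconsistent --- you want $v_i$ to be a direct \emph{provider} of $d$, so that $(v_i,d)$ is a one-hop customer route that \LP and \SP rank above any attacked route (which is necessarily insecure, including the two-hop bogus route $(v_i,m,d)$ when $m$ happens to neighbor $v_i$). Second, you implicitly assume $v_i$ selects \emph{some} route during the attack; if $v_i$ had no perceivable legitimate route either, it would be disconnected rather than ``routing through $d$.'' The paper glosses the same degenerate case, so this does not separate your argument from the intended statement.
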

As we discussed above, immune ASes tend to be single-homed stub ASes.

\section{Protocol downgrade attacks.}\label{apx:downgrades}

In Section~\ref{sec:pda} we discussed how protocol downgrades can occur in the security \second and \third model.  We now prove Theorem~\ref{thm:noPDA1st}, that shows that protocol downgrade attacks are avoided in the security \first model; that is, every AS $s$ that uses a secure route that does not contain the attacker $m$ under normal conditions, will continue to use that secure route when $m$ launches its attack.

\begin{proof}[of Theorem~\ref{thm:noPDA1st}]  The theorem follows from the correctness of the algorithm in Appendix~\ref{apx:alg:1} for computing routes when security is \first.  Suppose the set of secure routes is $S$. Consider an AS $s$ who has its secure route $R_s$ fixed during the FSCR, FSPeerR, FSPrR subroutine of the algorithm in Appendix~\ref{apx:alg:1} when the set of secure ASes is $S$ and the attacker is $m=\emptyset$ (\ie during normal conditions, when there is no attack).
If $R_s$ does not contain $m$, then $s$ will have its route fixed to exactly the same secure route $R_s$ during the FSCR, FSPeeR, FSPrvR subroutine of the algorithm in Appendix~\ref{apx:alg:1} when the set of secure ASes is $S$ and $m$ attacks. This follows because all routes that contain $m$ must be fixed \emph{after} the FSCR, FSPeeR, FSPrvR portions of the algorithm (since, by definition, all routes containing $m$ must be insecure during $m$'s attack).  An inductive argument shows that all ASes on route $R_s$ will therefore be fixed to the same route that they used in normal conditions, and the theorem follows.
\end{proof}

\subsection{Computing protocol downgrades.}

To quantify the success of protocol downgrade attacks with respect to an attacker-destination pair $(m,d)$ and a set of secure ASes $S$, we need to first establish which ASes have a secure route to the destination under normal conditions, that is, when there is no attack. To do this, we compute the S*BGP routing outcome when there is no attacker (setting $m=\emptyset$ for the set $S$) for the specific model under consideration. The algorithm records for every AS discovered in this BFS computation whether (1) all routes in its $\BPR$ set at that iteration is secure or (2) all these routes are insecure.  We then compute the S*BGP routing outcome for the pair $(m, d)$ for the set $S$ (for the specific model under consideration)). Again, the algorithm records for every AS discovered in this BFS computation whether (1) all routes in its $\BPR$ set at that iteration are secure or (2) all these routes are insecure. We conclude that a protocol-downgrade attack against an AS is successful if that AS falls in the \first category in the first of these computations and in the \second category in the second computation.  The correctness of this approach follows from the correctness of our algorithms in Appendix~\ref{apx:algos}.

\section{Monotonicity}\label{apx:mono}

In Section~\ref{sec:examples} and Appendix~\ref{apx:damages} we showed that collateral damage is possible in the security \second and \first models.  We now prove Theorem~\ref{thm:nodamage3rd} that shows that collateral damage does not occur in the security \third model; that is, for any destination $d$, attacker $m$, source $s$ and S*BGP deployments $T$ and $S \subseteq T$, if $s$  stabilizes to a legitimate route in deployment $S$, then $s$ stabilizes to a legitimate route in deployment $T$.

\begin{proof}[of Theorem~\ref{thm:nodamage3rd}] The theorem follows from the correctness of our algorithm for computing routing outcomes when security is \third (Appendix~\ref{apx:alg:3}). First,  an inductive argument shows that every AS $s$ that the algorithm ``fixes'' to a secure route in deployment $S$ is also ``fixed'' to a secure route in $T$; it follows that all such ASes stabilize to a legitimate route in both $S$ and $T$.  Next we argue that every AS $s$ that the algorithm ``fixes'' to an insecure legitimate route in $S$ is also fixed to a legitimate route in $T$. There are two cases: (a) if $s$ is fixed to a secure route in $T$, it uses a legitimate route,  (b) otherwise, an inductive argument shows that the algorithm computes the same next hop $\Nxt(s)$ for $s$ in both deployments $T$ and $S$, and since the route was legitimate in $S$, it will be legitimate in $T$ as well.
\end{proof}

\section{Simulations}\label{apx:sim}

Our simulations compute the following for each destination $d$:
\begin{enumerate}
\item The S*BGP routing outcome for each of our $3$ S*BGP routing models and for every deployment set $S$ considered in the paper (to enable computations that quantify protocol downgrade attacks );
\item The BGP routing outcome with respect to every possible pair $(m,d)$ and with $S=\emptyset$ (to compute partitions into doomed/immune/protectable ASes, and to determine which ASes where happy in the baseline scenario where $S=\emptyset$);
\item The S*BGP routing outcome for every possible $(m,d)$ in each of our $3$ S*BGP routing models and for every deployment set $S$ considered in the paper (to compute the happy ASes, to detect phenomena like collateral benefits and damages, and as part of computations that quantify protocol downgrade attacks );
\end{enumerate}
To do this, we use the algorithms in Appendix~\ref{apx:alg:3}-\ref{apx:alg:1}, where the we execute the FCR, FSCR, FPeeR, FSPeeR, FPrvR, and FSPrvR subroutines using breath-first searches.  The overall complexity of our simulations is therefore $O(|M||D|(|V|+|E|)$ for each deployment $S$. We optimize the running time of our simulations in two ways:

\myparab{Re-using information.} Instead of running multiple computations ``from scratch'' our simulations often re-use information and pass it on from one computation to the next (\eg an AS that is doomed with respect to a specific attacker-destination pair $(m,d)$ will not route to $d$ regardless of the deployment scenario $S$, \etc).

\myparab{Parallelization.} We run these computations in parallel across all destinations $d$. Our code was written in C++ and parallelization was achieved with MPI on a BlueGene and Blacklight supercomputers.
\section{Hardness Results}\label{apx:hardness_results}

\begin{figure}
\begin{center}
  \includegraphics[width=2in]{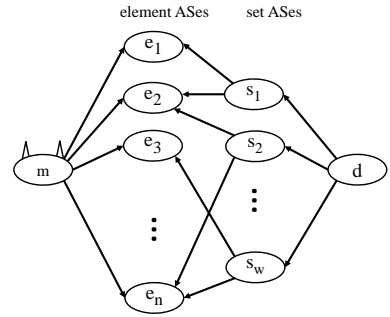}
  \caption{Reduction}\label{fig:hard}
\end{center}
\end{figure}

We prove Theorem~\ref{thm:hard}, that shows that the ``Max-k-Security'' problem is NP-hard in each of our three routing models.  Recall from Section~\ref{sec:hard}, in the ``Max-k-Security'' problem, we are given an AS graph,  $G=(V,E)$, a specific attacker-destination pair $(m,d)$, and a parameter $k>0$, find a set of ASes $S$ of size $k$ that maximizes the total number of happy ASes.

To prove Theorem~\ref{thm:hard}, we consider a slightly different problem that we will call the  ``Decisional-k-$\ell$-Security'' problem (D$k\ell$SP): Given an AS graph, a specific attacker-destination pair $(m, d)$, and parameters $k>0$ and $1 \leq \ell\leq |V|$, determine if there is a set of secure ASes $S$ of size $k$ that results in at least $\ell$ happy ASes.  Notice that this problem is in NP (since we can check the number of happy ASes in polynomial time given the algorithms discussed in Appendix~\ref{apx:algos}) and is certainly poly-time reducible to ``Max-k-Security''.  Therefore, the following theorem implies Theorem~\ref{thm:hard}:

\begin{theorem}
D$k\ell$SP is NP-Complete in each of our three routing policy models.
\end{theorem}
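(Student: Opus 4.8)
The plan is to reduce from a coverage problem; together with the membership in NP already noted above (for any $S$ the happy ASes are computable in polynomial time via the algorithms of Appendix~\ref{apx:algos}), this yields NP-completeness. A convenient source is the decision version of \textbf{Maximum Coverage}: given a ground set $U=\{u_1,\dots,u_n\}$, a family $\mathcal{C}=\{C_1,\dots,C_t\}$ of subsets of $U$, and integers $k$ and $p$, decide whether some $k$ members of $\mathcal{C}$ cover at least $p$ elements. This is NP-complete, and its ``choose $k$ objects to make at least $p$ things happy'' shape matches D$k\ell$SP almost verbatim.

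Given such an instance I would build an AS graph as follows. The destination $d$ is placed at the top and is always secured (this is forced, since by definition a secure route requires \emph{every} hop, including $d$, to be secure; I therefore use budget $k+1$, reserve $d$, and note that any beneficial $S$ must contain $d$). For each set $C_j$ I introduce a \emph{set-gadget} $\hat C_j$, attached to $d$ so that once $\hat C_j$ is secured the one-hop path $\hat C_j$--$d$ is entirely secure; and for each element $u_i$ an \emph{element} node $\hat u_i$ made a customer of exactly those gadgets $\hat C_j$ with $u_i\in C_j$. The attacker $m$ runs the ``$m,d$'' attack of Section~\ref{sec:attack:dets}. The crux of the construction is to place each gadget $\hat C_j$ in the \emph{protectable tie regime} of Section~\ref{sec:doomedImm}: its best legitimate route to $d$ and its best bogus route to $m$ are made to agree in \LP type \emph{and} in \SP length, so that only the \SecP step can separate them. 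This is exactly the regime in which a node is protectable in the security~\third model, and equal \LP alone already makes it protectable when security is \second and \first; hence the same gadget behaves identically across all three models.

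With this regime in place I would argue the following. A secured gadget $\hat C_j$ acquires a fully secure route $\hat C_j$--$d$, which its \SecP step prefers over the equally-ranked insecure bogus route in every one of the three models, so $\hat C_j$ routes to $d$; an unsecured gadget sees only the tied insecure routes and, under the adversarial (lower-bound) tiebreak convention of Section~\ref{sec:metric}, routes to $m$. The element nodes are \emph{insecure} read-outs: since an insecure AS runs only \LP/\SP/\TB, each $\hat u_i$'s decision is identical across the three models and is engineered (by making the route through a $d$-routing provider strictly shorter than any route to $m$) so that $\hat u_i$ is happy precisely when at least one of its covering gadgets routes to $d$, i.e.\ precisely when at least one covering set is secured. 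Thus, up to a fixed additive constant $c$ absorbing the always-happy bookkeeping nodes, the number of happy ASes equals $c$ plus the number of secured gadgets plus the number of covered elements; setting $\ell=c+k+p$ makes a ``yes'' for Maximum Coverage equivalent to a ``yes'' for D$k\ell$SP, in each model.

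The main obstacle is making the reduction hold \emph{simultaneously} for security~\first, \second, and \third while keeping the count exact, and this is precisely why the contested decisions are confined to the equal-\LP-equal-\SP tie regime: there the three models agree (an available secure route always wins), and the only remaining freedom is the intradomain \TB, which I pin down with the lower-bound convention the metric already uses. Two further points require care. First, I must ensure that no budget of $k$ ASes spent on \emph{non-gadget} nodes can protect more elements than gadget-selections do; I would enforce this with inexpensive padding and by making element nodes reachable from $d$ only through their covering gadgets, so that securing anything else is strictly dominated. Second, I must check that the collateral-damage phenomenon of Section~\ref{sec:examples} (possible when security is \first or \second) introduces no spurious \emph{loss} of happy nodes; confining all genuinely contested routing choices to the clean tie regime, where secure routes are never longer or worse-ranked than the insecure alternative, rules this out. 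Once these are verified the equivalence with Maximum Coverage is immediate and identical across the three models, completing the proof.
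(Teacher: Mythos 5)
Your overall strategy---reduce from a coverage problem, force $d$ into the secured set, and confine all contested routing decisions to a tie regime where the three models agree---is the right shape, and it is close in spirit to the paper's reduction from Set Cover. But your central gadget is internally inconsistent, and the inconsistency is fatal. You attach each set-gadget $\hat C_j$ directly to $d$, so that its legitimate route is the one-hop path $\hat C_j$--$d$; simultaneously you require that this legitimate route and the best bogus route ``agree in \LP type \emph{and} in \SP length.'' That is impossible: under the fixed attack of Section~\ref{sec:attack:dets} the attacker announces the path ``$m,d$,'' so every neighbor of $m$ perceives the bogus route as having length at least two hops, while your gadget's legitimate route has length one. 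If both routes are customer routes, the \LP step ties and the \SP step strictly favors the one-hop legitimate route in \emph{every} model (an insecure or route-less gadget never reaches \TB), so each gadget is \emph{immune}, not protectable; then all elements are happy for every choice of $S$, the happy count is independent of the budget, and the reduction cannot distinguish yes- from no-instances. There is no way to rescue this with \LP asymmetry either: in the security \third model a node is protectable only when its best routes to $m$ and $d$ tie in both \LP and \SP, and a one-hop legitimate route can never be part of such a tie. Patching the gadget by routing $\hat C_j$ to $d$ through an intermediate node (making the legitimate route two hops) forces you to secure that intermediate too, and once you do the budget accounting you essentially land on the paper's construction---where the protectable nodes are the \emph{element} ASes, each with two tied two-hop customer routes (one via a set AS to $d$, one via $m$), and the secured set must contain $d$, all $n$ elements, and $\gamma$ set ASes, giving $k=n+\gamma+1$ and $\ell=n+w+1$.

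A secondary gap: you resolve the ties by invoking ``the adversarial (lower-bound) tiebreak convention of Section~\ref{sec:metric}.'' That convention is a device for bounding the metric when \TB is unknown; it is not part of the D$k\ell$SP instance, which counts happy ASes under the instance's actual routing policies. The paper avoids this by \emph{specifying} \TB within the constructed instance (every element AS tiebreaks in favor of the route through $m$), which makes the happy count well-defined without appeal to any bounding convention. You should do the same. With those two repairs---tied two-hop routes at the nodes whose security you are purchasing, and \TB fixed in the instance---your coverage-style accounting ($\ell$ as a constant plus secured gadgets plus covered elements) goes through and matches the paper's argument in all three models.
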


\begin{proof}
We present a poly-time reduction from the Set Cover Decisional Problem (SCDP).  In SCDP, we are given a set $N$ with $n$ elements, a family $F$ of $w$ subsets of $N$ and an integer $\gamma\leq w$, and we must decide if there exist $\gamma$ subsets in the family $F$ that can cover all the elements in $N$.

Our reduction is shown in Figure~\ref{fig:hard}. For each element $e_i \in N$ in the SCDP instance, we create an AS $e_i$ in our D$k\ell$SP instance and connect it to the attacker via a provider-to-customer edge.
For each subset $s_j \in F$, we create an AS $s_j$ in our our D$k\ell$SP instance and connect it to the destination $d$ via a provider-to-customer edge.
We connect AS $e_i$ to AS $s_j$ via a provider-to-customer edge if $e_i \in s_j$ in the SCDP problem.  Moreover, we require that every $e_i$'s has a tiebreak criteria \TB that prefers the route through $m$ over any route through any $s_j$.
%
%
Notice that the perceivable routes at every $e_i$ are of the same length and type; namely, two-hop customer routes.
Finally, we let $\ell=n+w+1$, and let $k=n+\gamma+1$.


Suppose that our SCDP instance has a $\gamma$-cover.  We argue that this implies that our corresponding D$k\ell$SP should be able to choose a set $S$ of $k$ secure ASes that ensure that at least $\ell$ ASes are happy.  The following set $S$ of secure ASes suffice: $S=\{d, e_1,...,e_n\} \cup \{s_j | s_j \text{ is in the $\gamma$ cover}\}$.  Notice that $S$ is of size $k=n+\gamma+1$, and results in exactly $\ell=n+w+1$ happy ASes. (This follows because $d$ is happy by definition, all the set ASes $s_1,...,s_w$ are happy regardless of the choice of $S$, and all the element ASes $e_1,...,e_n$ choose legitimate routes to the destination because they have secure routes to $d$ by construction.)

On the other hand, suppose we are able to secure exactly $k$ ASes while ensuring that $\ell$ ASes are happy. First, note that all the set ASes $s_1,...,s_w$ and the destination AS and are immune; they are happy regardless of which ASes are secure. Next, note that if any of the $n$ element ASes $e_1, ... ,e_n$  are insecure, then by construction it will choose a route to the attacker and be unhappy, and we will have less than $\ell$ happy nodes.  Similarly, if the destination $d$ is insecure, by construction all of the element ASes will choose an insecure route to the attacker.  Thus, if we secure all the element ASes and the destination, we have $k-1-n=\gamma$ remaining ASes to secure; by construction, these must be distributed amongst the set ASes, and thus we will have a $\gamma$-cover by construction.

Finally, note that this result holds in all three secure routing models; the reduction is agonistic to how ASes rank security in their route preference decisions, since the perceivable routes at every element AS $e_i$ have the same length and type.
\end{proof}

To extend this result to multiple destinations $D$ and attackers $M$, we can show the hardness of the following variant of the ``Max-k-Security'' problem: given $G(V,E)$, sets $M,D \subseteq V$ and an integer $k$, the objective is to maximize the average number of happy ASes across all $(m,d)$ pairs in $M\times D$.  The argument is the same as the above, except that now we create multiple copies of the $m$ and $d$ nodes (and their adjacent edges) in Figure~\ref{fig:hard}, and let $M$ be the copies of the $m$ nodes and $D$ be the copies of the $d$ nodes.

\newpage
\section{The IXP-Augmented Graph}\label{apx:ixp}

We repeated our experiments on the IXP-augmented graph described in Section~\ref{sec:policies} to obtain the following results.

\subsection{Plots for Section~\ref{sec:partitions}.}

\begin{figure}[h]
\begin{center}
    \subfigure[]{
    \includegraphics[width=2in]{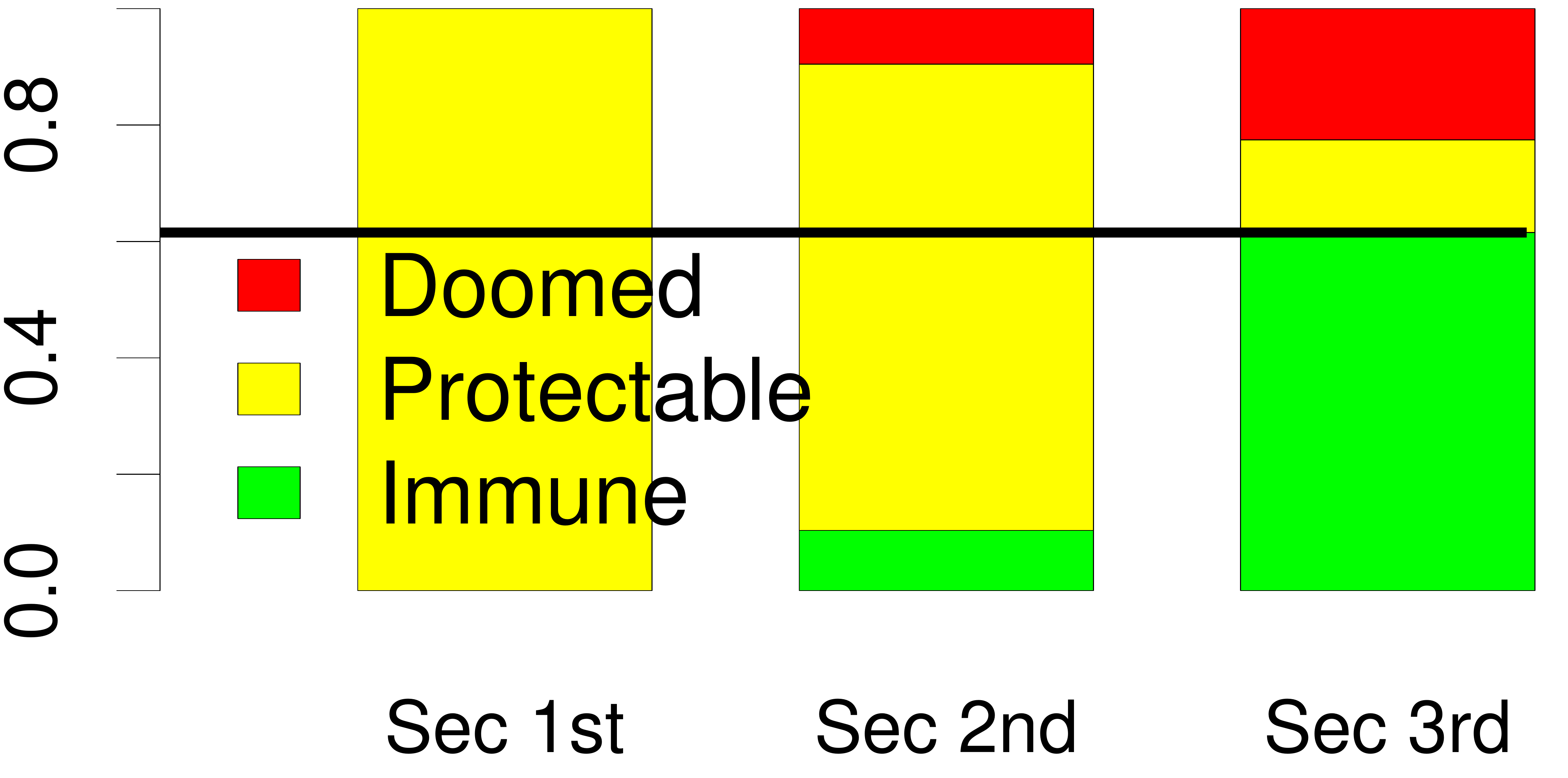}
    \label{fig:partitions:ixp}
    }
    \subfigure[]{
    \includegraphics[width=2.7in]{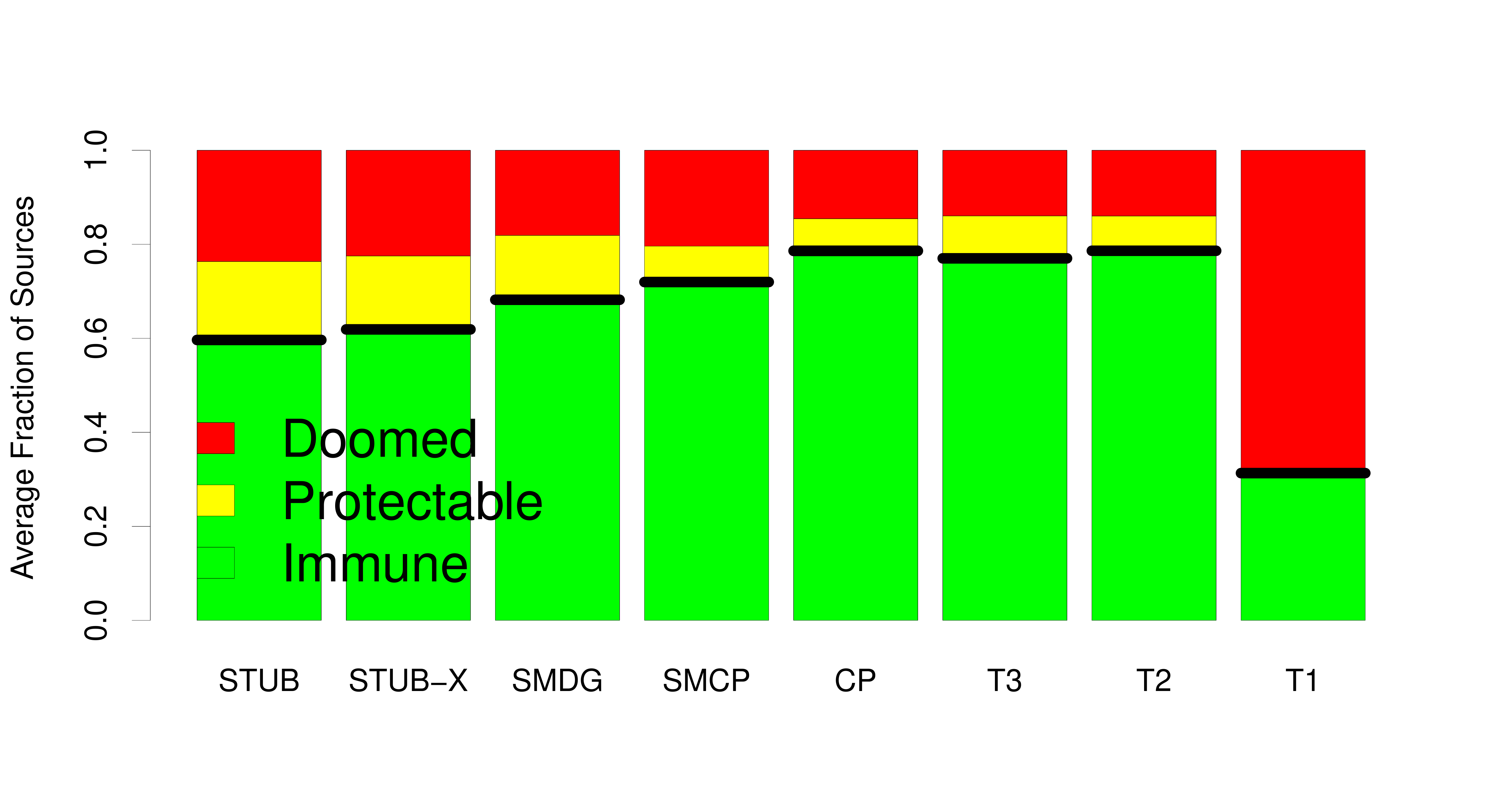}
    \label{fig:partitions:dest:sl:ixp}
    }
    \subfigure[]{
    \includegraphics[width=2.7in]{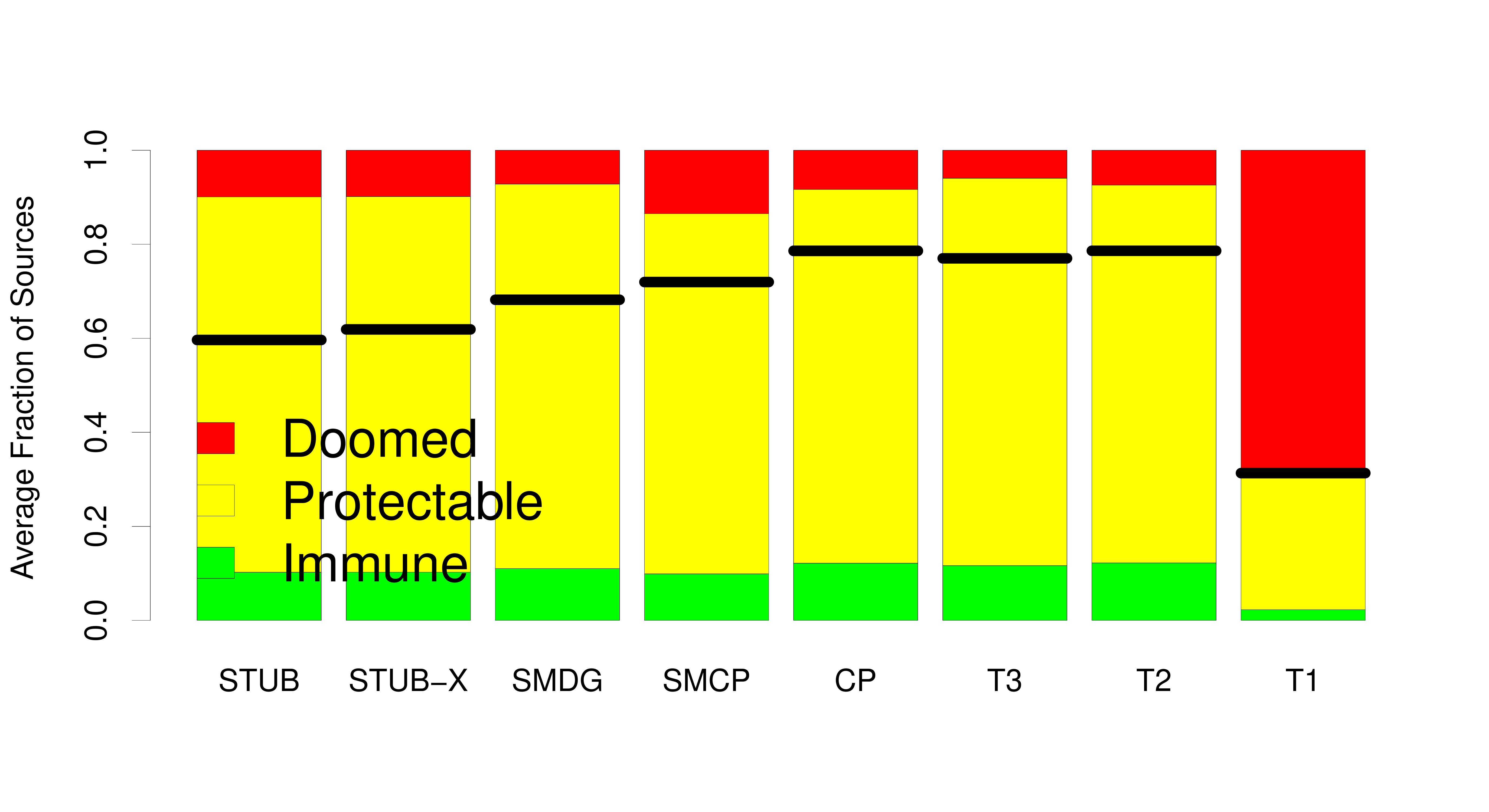}
    \label{fig:partitions:dest:ss:ixp}
    }
    \subfigure[]{
    \includegraphics[width=2.7in]{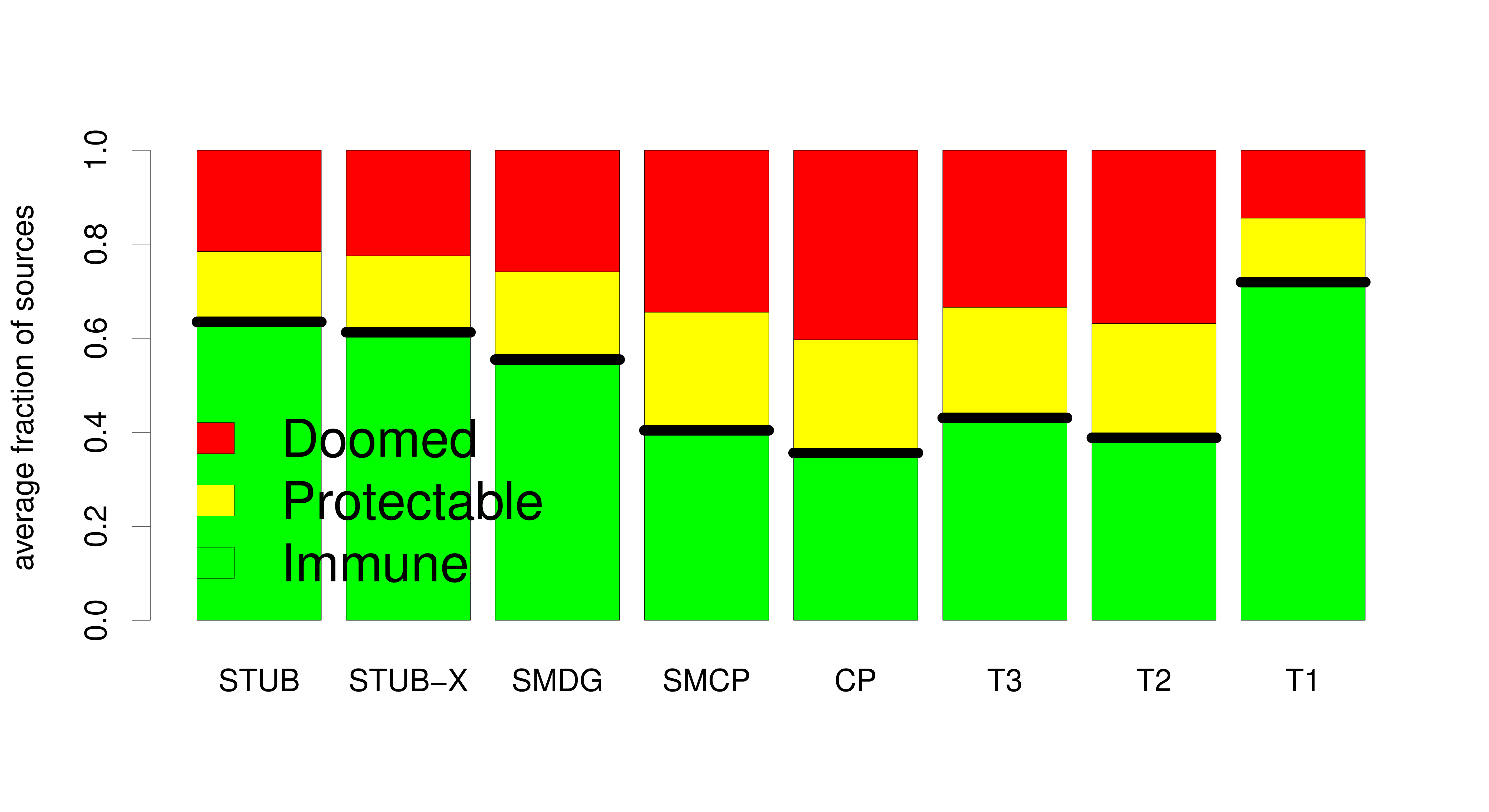}
    \label{fig:partitions:attacker:sl:ixp}
    }
    \caption{Plots for Section~\ref{sec:partitions}, IXP-augmented graph. \subref{fig:partitions:ixp} Partitions.
    \subref{fig:partitions:dest:sl:ixp} Partitions by destination tier. Sec~\third.
    \subref{fig:partitions:dest:ss:ixp} Partitions by destination tier. Sec~\second.
    \subref{fig:partitions:attacker:sl:ixp} Partitions by attacker tier. Sec~\third.
    }
    \end{center}
\end{figure}

\newpage
\subsection{Plots for Section~\ref{sec:results:metric}.}

\begin{figure}[h]
\begin{center}
    \subfigure[]{
    \includegraphics[width=1.2in]{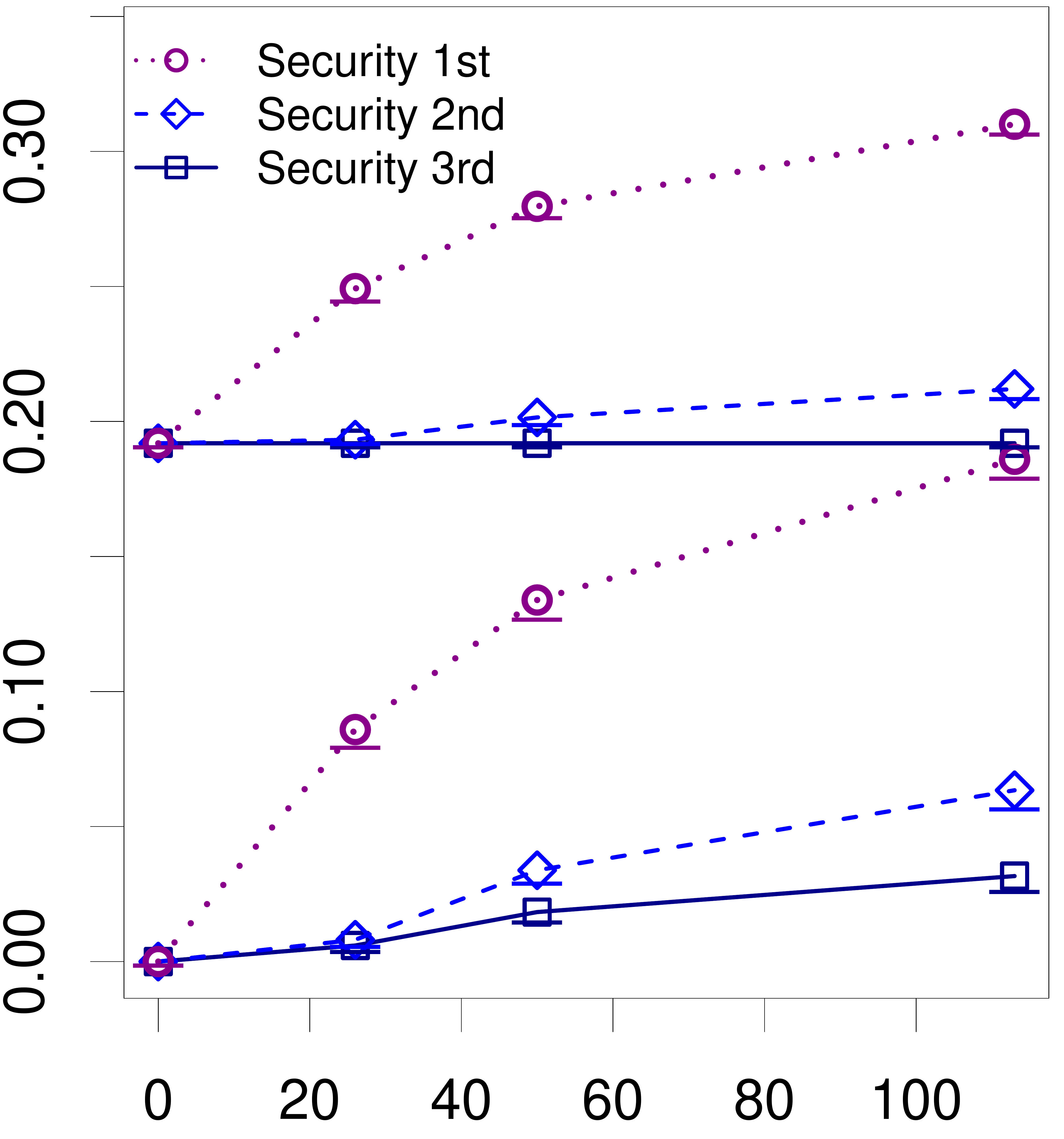}
    \label{fig:metric:large:IXP}
    }
    \subfigure[]{
      \includegraphics[width=1.2in]{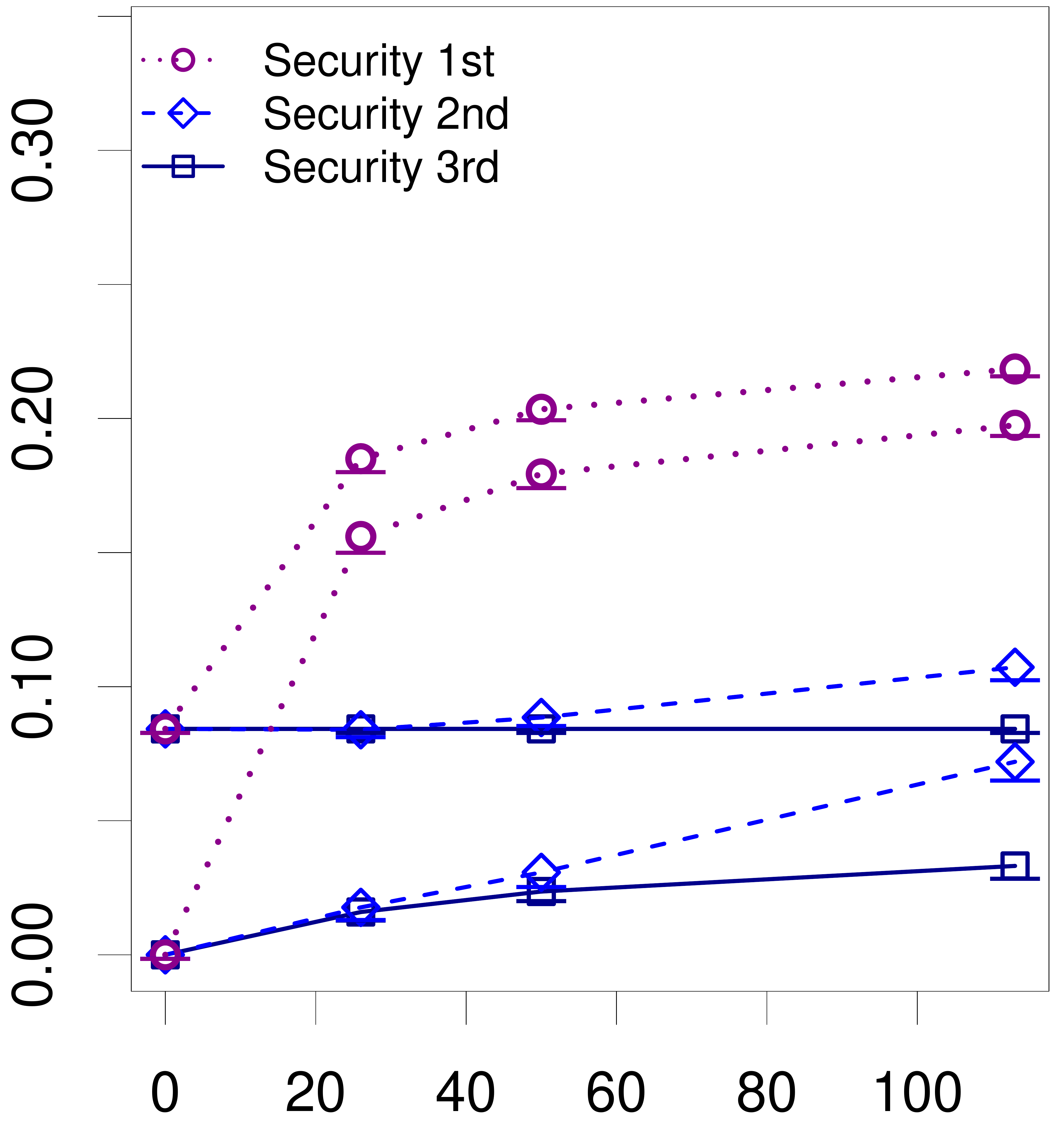} \label{fig:metric:bigdeps:CP:ixp}
    }
     \subfigure[]{
      \includegraphics[width=1.2in]{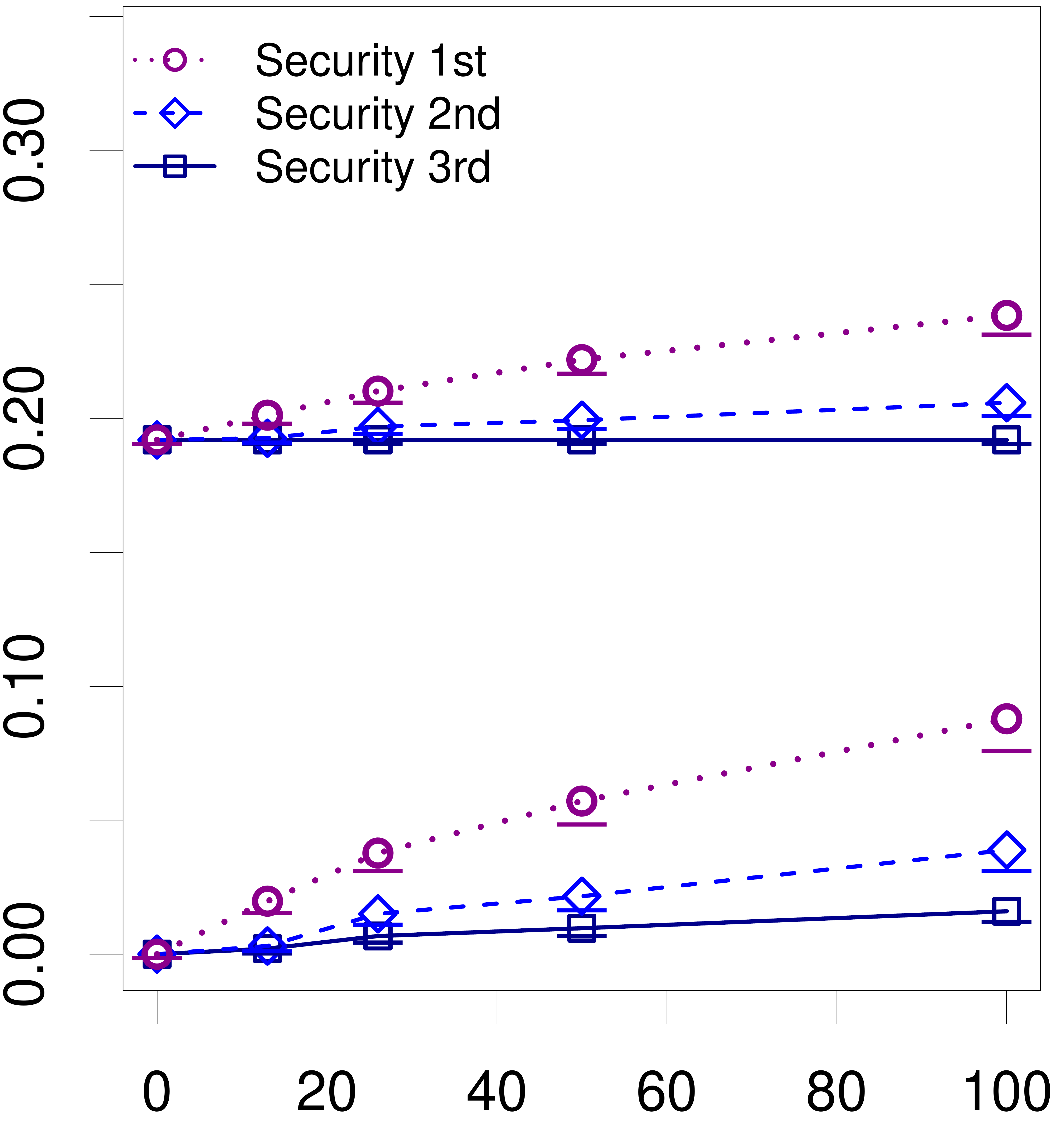}  \label{fig:metric:bigdeps:T2:ixp}    
    }
      \vspace{-3mm}
    \caption{Plots for Section~\ref{sec:results:metric}, IXP-augmented graph. For each plot, the $x$-axis is the number of non-stub, non-CP ASes in $S$ and the ``error bars'' are explained in Section~\ref{sec:simplex}.
    \subref{fig:metric:large:IXP} Tier 1+2 rollout: For each step $S$ in rollout, upper and lower bounds on $H_{M',V}(S) - H_{M',V}(\emptyset)$. 
    \subref{fig:metric:bigdeps:CP:ixp} Tier 1+2+CP rollout: $H_{M',CP}(S) - H_{M',C}(\emptyset)$ for each step in the rollout. 
    \subref{fig:metric:bigdeps:T2:ixp} Tier 2 rollout: $H_{M',D}(S) - H_{M',D}(\emptyset)$ for each step in the T2 rollout.
    }
    \vspace{-5mm}
    \end{center}
\end{figure}

\begin{figure}[h]
    \includegraphics[width=2.5in]{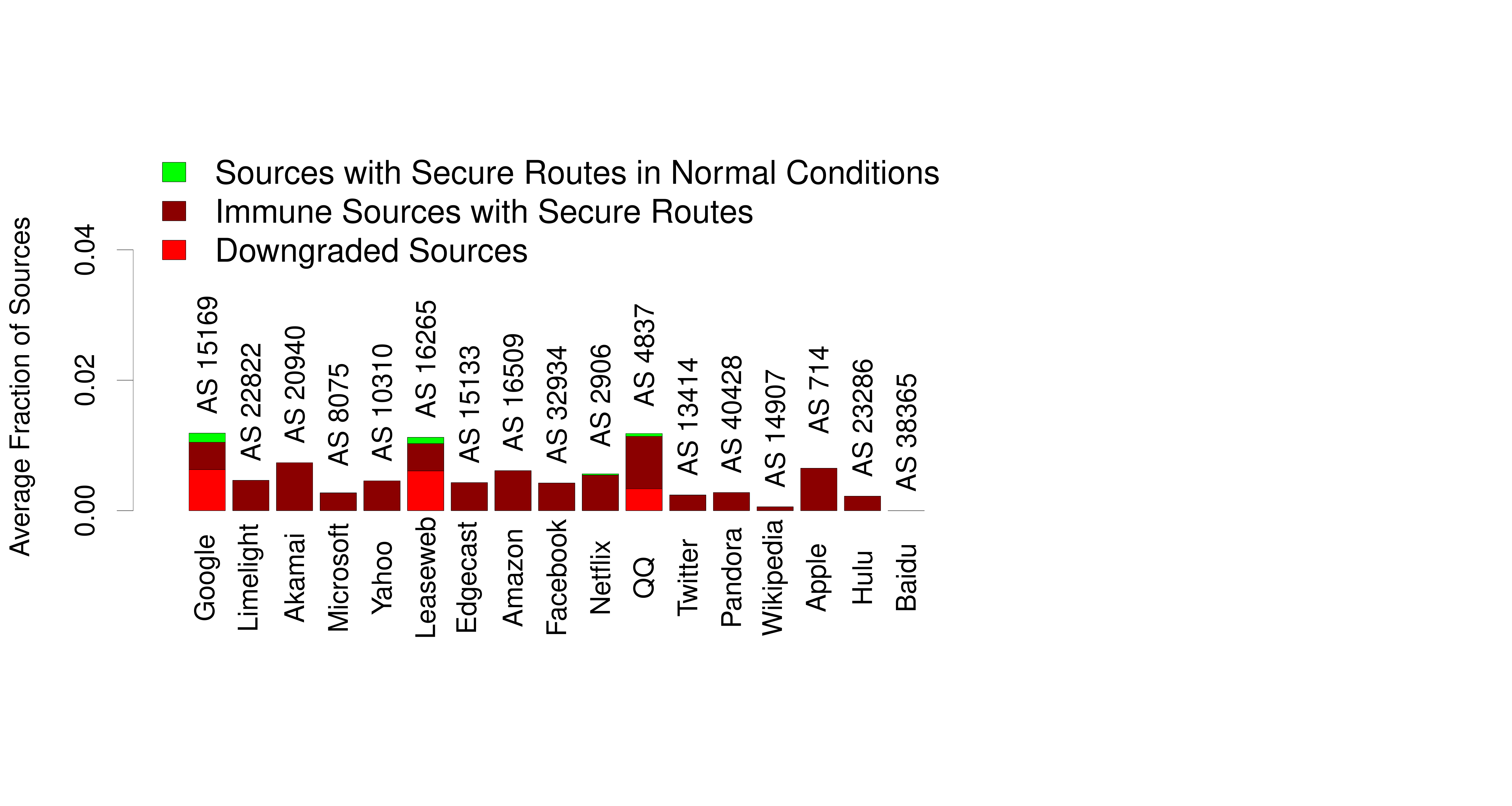}
    \vspace{-3mm}\caption{Plot for Section~\ref{sec:results:metric}, IXP-augmented graph. What happens to secure routes to each CP destination during attack. $S$ is the Tier 1s, the CPs, and all their stubs and security is~\third.}
    \vspace{-3mm}\label{fig:t1_cp_sec_r_breakdown:ixp}
\end{figure}	

\newpage
\begin{figure}[h]
\begin{center}
    \subfigure[]{
    \includegraphics[width=.35\textwidth]{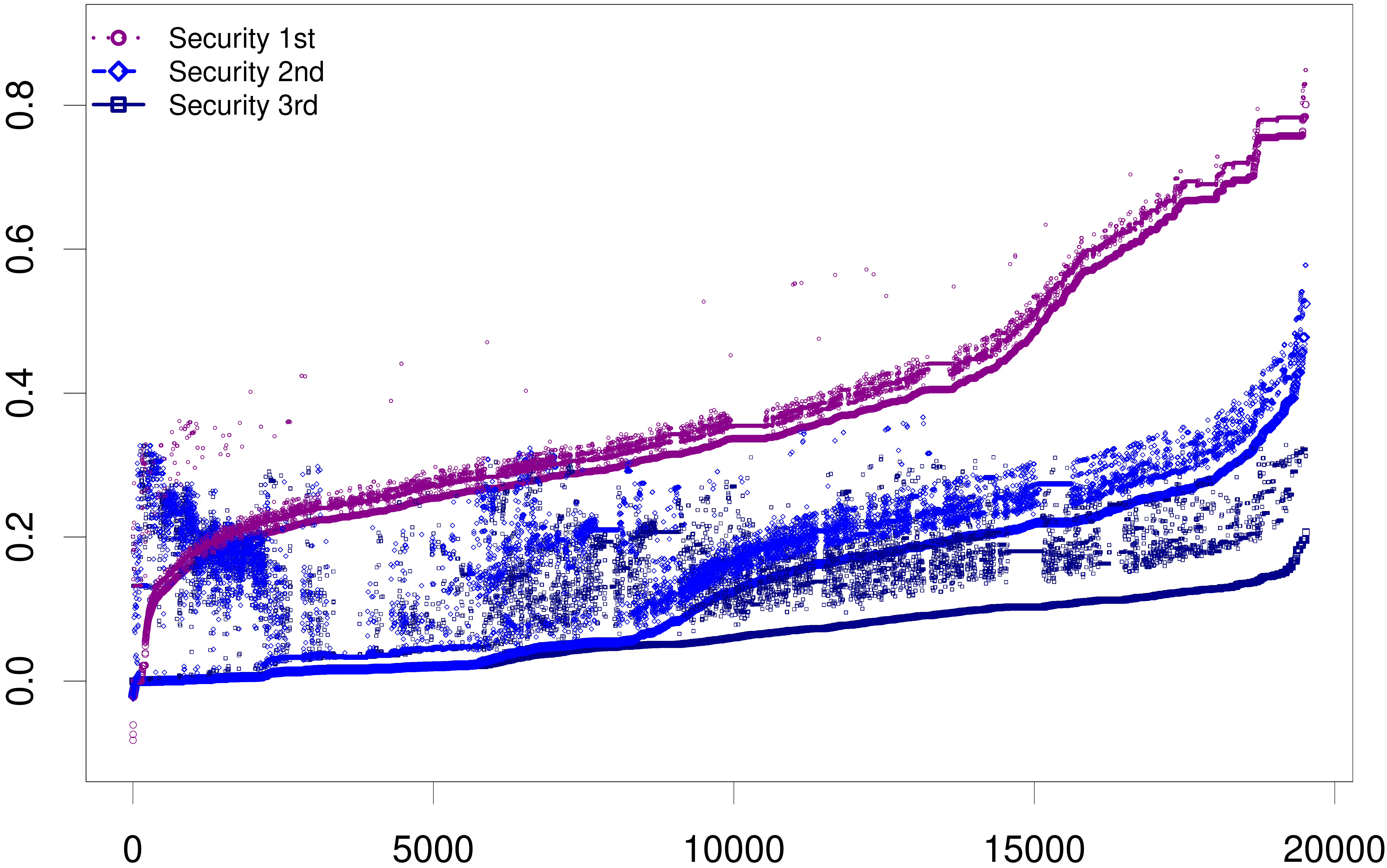}
    \label{fig:seq:ixp}
    }
    \subfigure[]{
    \includegraphics[width=.35\textwidth]{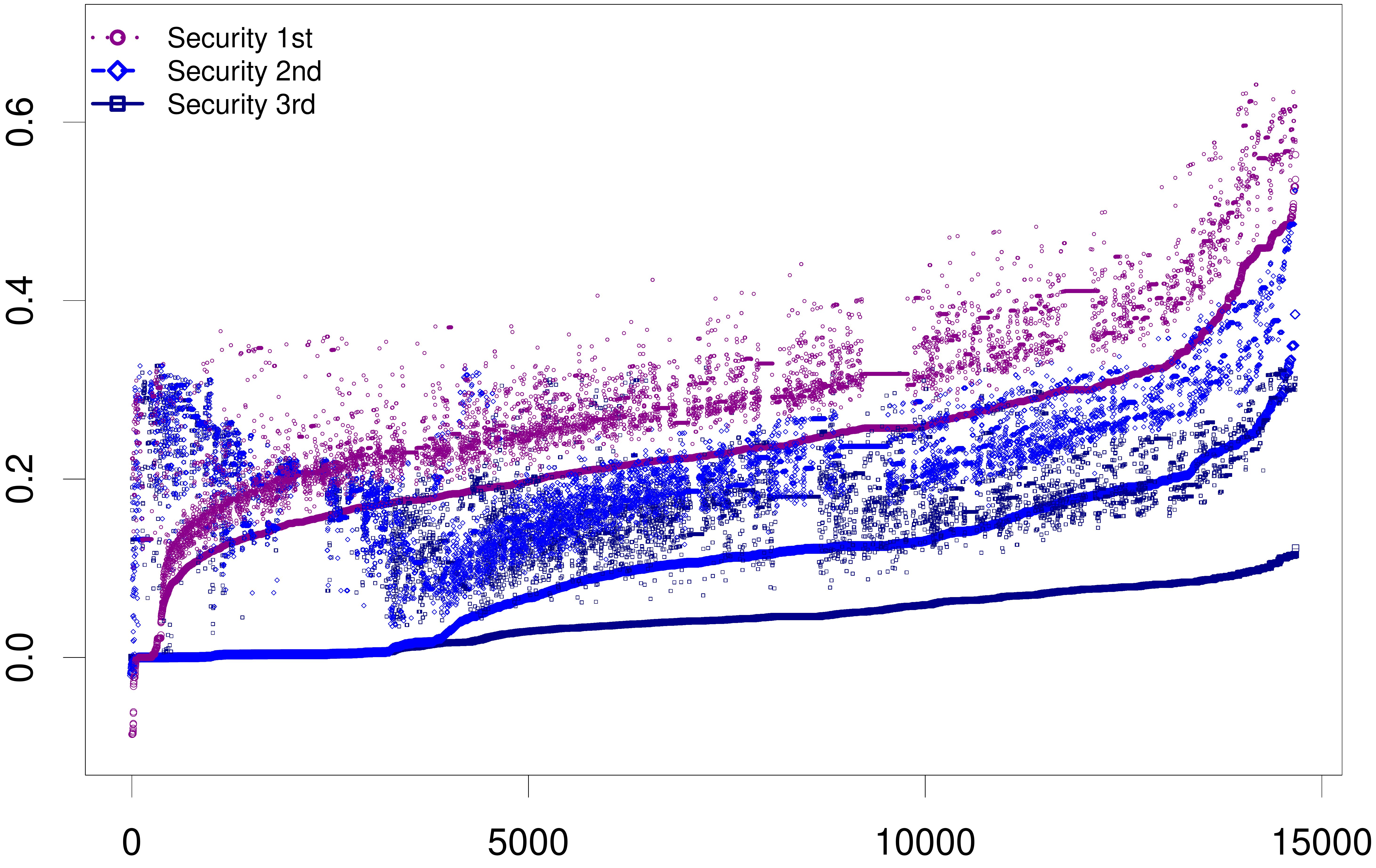}
    \label{fig:seq:T2:ixp}
    }
    \subfigure[]{
    \includegraphics[width=.35\textwidth]{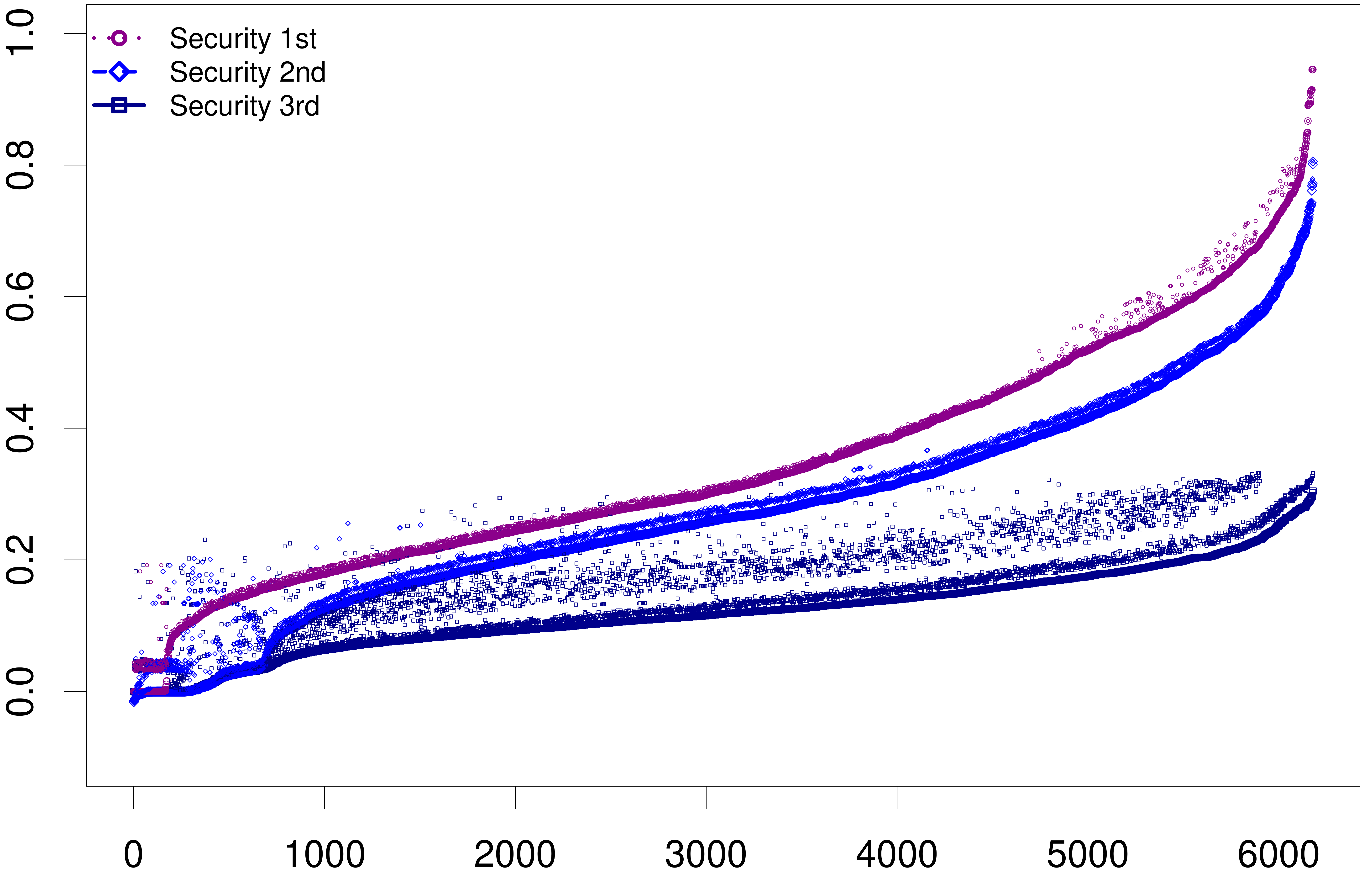}
    \label{fig:seq:nonstub:ixp}
    }
 \vspace{-2mm}
\caption{Plot2 for Section~\ref{sec:results:metric}, IXP-augmented graph. Non-decreasing sequence of $H_{M',d}(S)-H_{M',d}(\emptyset)$ $\forall$ $d \in S$.
 \subref{fig:seq:ixp}  $S$ is all T1s, T2s, and their stubs.
 \subref{fig:seq:T2:ixp}  $S$ is all T2s and their stubs.
  \subref{fig:seq:nonstub:ixp} $S$ is all non stubs.
}
 \vspace{-3mm}
\end{center}
\end{figure}

\newpage
\subsection{Plots for Section~\ref{sec:wrapup}.}

\begin{figure}[h]
\begin{center}
    \includegraphics[width=2in]{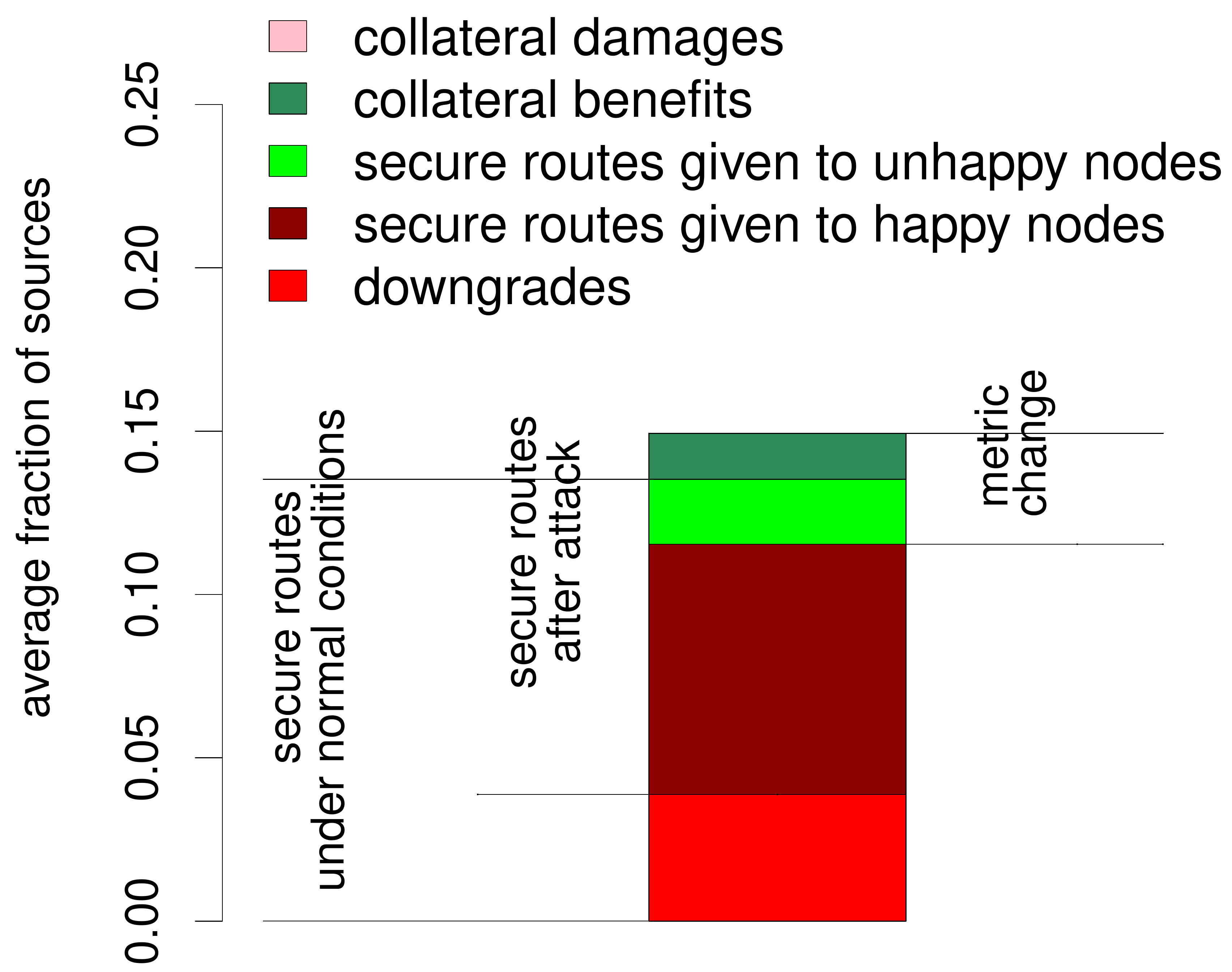}\\
    \includegraphics[width=2in]{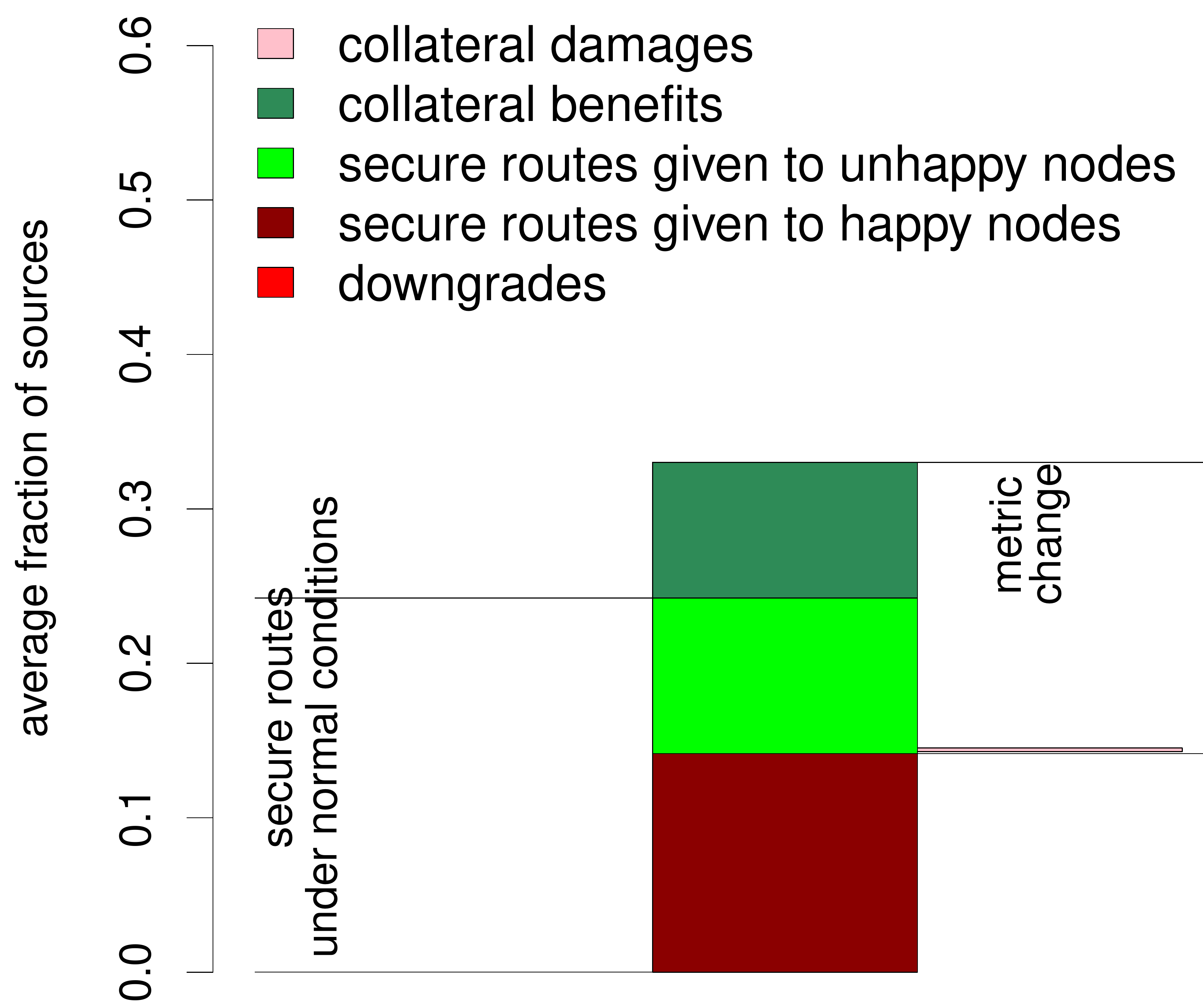}
    \vspace{-3mm}\caption{Plots for Section~\ref{sec:wrapup}, IXP-augmented graph. Changes in the metric explained. Sec~\third (top) and Sec~\first (bottom).}
    \vspace{-5mm}\label{fig:rootcause:ixp}\end{center}
\end{figure}

\newpage\newpage
\section{Sensitivity to Routing policy }\label{apx:robust:policies}\label{apx:stubborn_partitions}

Thus far, all our analysis has worked within the model of local preference (\LP) presented in Section~\ref{sec:insec_policies}. While the survey of \cite{surveyEmail} found that 80\% of network operators do prefer customer routes over peer and provider routes, there are some exceptions to this rule.  Therefore, in this Appendix we investigate alternate models of local preference, and consider how they impact the results we presented in Section~\ref{sec:partitions}; we are currently in the process of extending this sensitivity analysis to the results in Section~\ref{sec:results:metric}-\ref{sec:wrapup}.

\subsection{An alternate model of local preference.}

All our results thus far have used the following model of local preference:

\myparab{Local pref (\LP):} Prefer customer routes over peer routes. Prefer peer routes over provider routes.

\smallskip\noindent
However, \cite{surveyEmail} also found some instances where ASes, especially content providers, prefer shorter \emph{peer} routes over longer customer routes.  For this reason, we now investigate the following model of local preference:

\myparab{Local pref (\LP{k}):} Paths are ranked as follows:
\begin{itemize*}
\item Customer routes of length 1.
\item Peer routes of length 1.
\item ...
\item Customer routes of length $k$.
\item Peer routes of length $k$.
\item Customer paths of length $>k$.
\item Peer paths of length $>k$.
\item Provider paths.
\end{itemize*}
Following the \LP{k} step, we have the \SP and \TB steps as in Section~\ref{sec:insec_policies}. As before, the security \first model ranks \SecP above \LP{k}, the security \second model ranks \SecP between \LP{k} and \SP, and the security \third model ranks \SecP between \SP and \TB.

\myparab{Remark. } We will study this policy variant for various values of $k$; note that letting $k\rightarrow\infty$ is equivalent a routing policy where ASes equally prefer customer and provider routes, as follows:
\begin{itemize*}
\item Prefer peer and customer routes over provider routes.
\item Prefer shorter routes over longer routes.
\item Break ties in favor of customer routes.
\item Use intradomain criteria (\eg geographic location, device ID) to break ties among remaining routes.
\end{itemize*}

\subsection{Results with \LP{2} policy variant.}

We start with an analysis of the \LP{2} policy variant; we are in process of extending these results to other \LP{k} variants.  Here, a peer route of length less than or equal to 2 hops is preferred over a longer customer route.

\myparab{Partitions. } In Figure~\ref{fig:2stubborn:parititions} we show the partitions for the \LP{2} policy variants, for the UCLA graph and for the IXP augmented graph (\cf Figure~\ref{fig:partitions} and Section~\ref{sec:upperLower}).  The thick solid horizontal line shows the fraction of happy source ASes in the baseline scenario (where no AS is secure). As in Section~\ref{sec:upperLower}, we find that with security \third only limited improved improvements in the metric $H_{V,V}(S)$ are possible, relative to the baseline scenario $H_{V,V}(\emptyset)$; $82-71=11\%$ for the UCLA AS graph, and $88-72=13\%$ for the IXP augmented graph, both of which are slightly less than what we saw for our original \LP model.  In the security \second model, we again see better improvements than security \third, but not quite as much as we saw with our original \LP model;  $92-71=21\%$ for the UCLA AS graph, and  $94-72=22\%$ for the IXP augmented graph.  Interestingly, however, we do see one difference between the UCLA AS graph and the IXP augmented graph in this model; namely, we see more immune ASes when security is \second for the IXP augmented graph (41\% \vs 55\%). We discuss the observation in more detail shortly.

\begin{figure}[t]
\begin{center}
\subfigure[]{
\includegraphics [width = 2.7in]{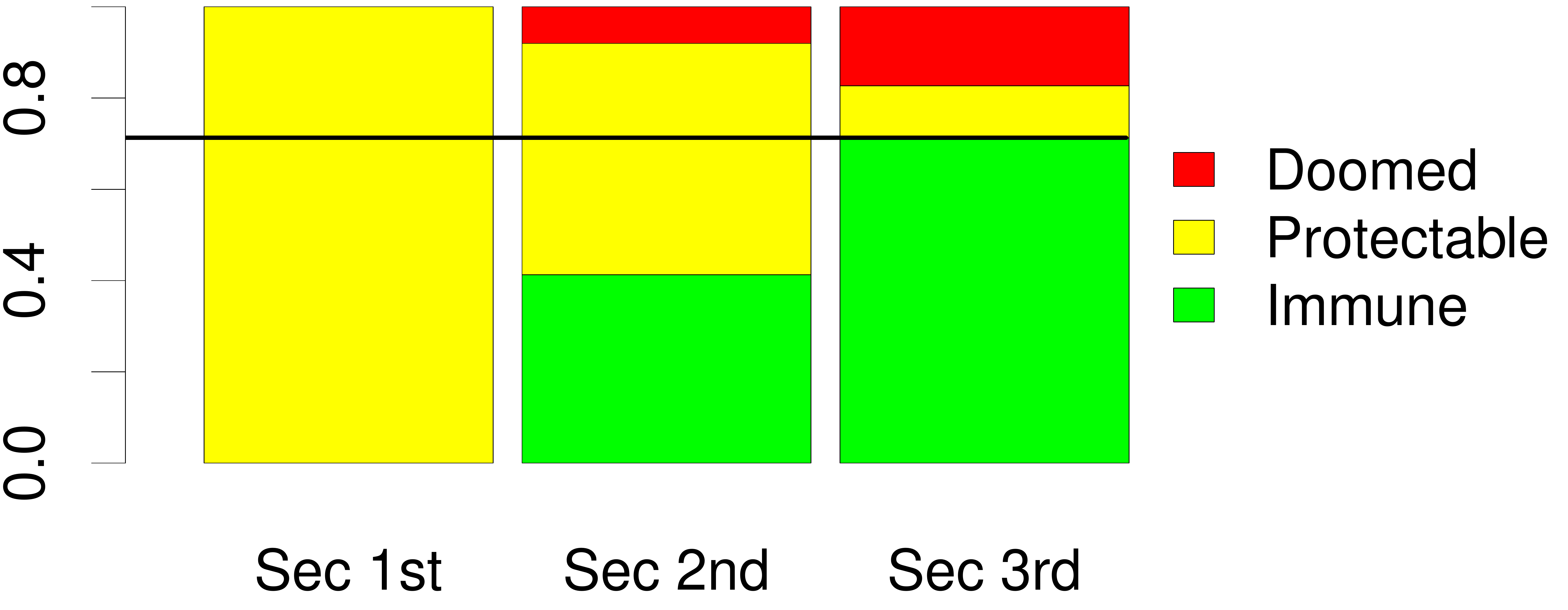} \label{fig:no_ixp_get_parts_stubborn_2}
}
\subfigure[]{
\includegraphics [width = 2.7in]{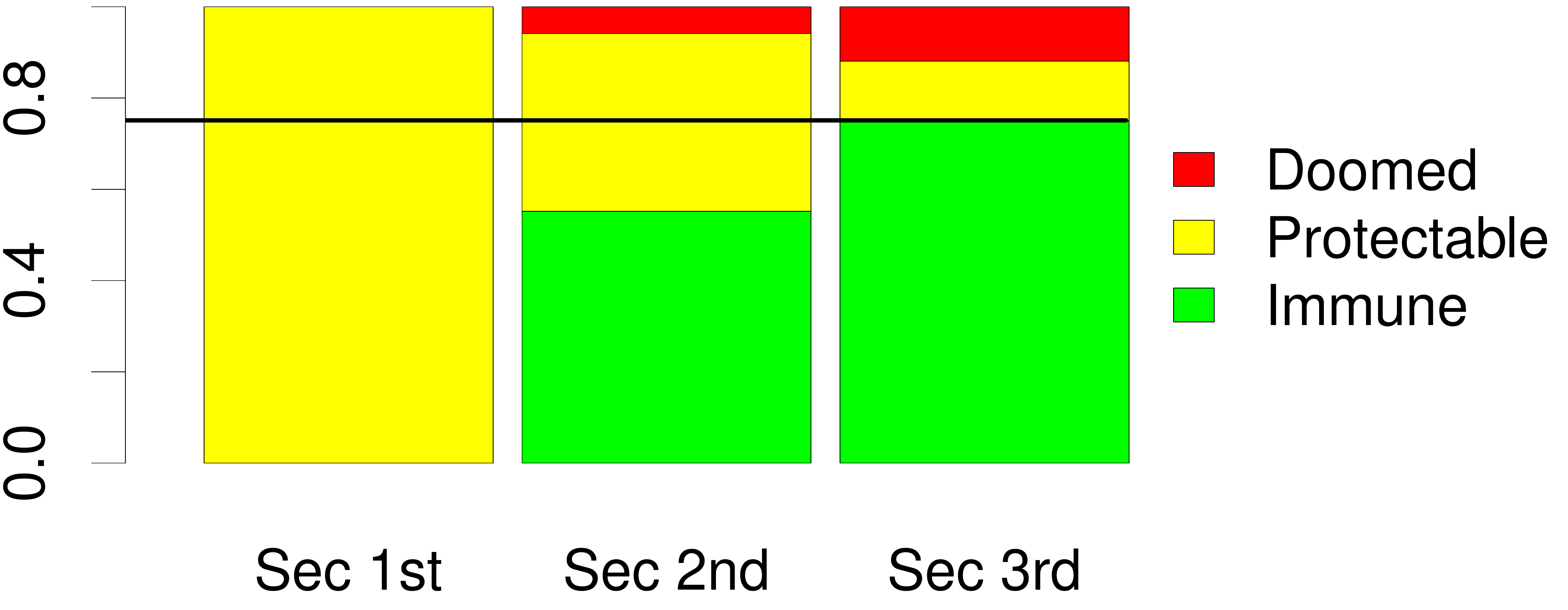} \label{fig:ixp_get_parts_stubborn_2}
}
\caption{Partitions for the \LP{2} policy variant, \subref{fig:no_ixp_get_parts_stubborn_2} UCLA graph \subref{fig:ixp_get_parts_stubborn_2}  IXP-augmented graph.}\label{fig:2stubborn:parititions}
\vspace{-5mm}
\end{center}
\end{figure}

\myparab{Partitions by destination tier.} In Figure~\ref{fig:2stubborn:parititions:dest} we show the partitions broken down by destination tier (see Table~\ref{tab:tiers}) when security is \second and \third for the \LP{2} policy variants, for the UCLA graph and for the IXP augmented graph (\cf Figure~\ref{fig:partitions:dest:sl}, Figure~\ref{fig:partitions:dest:ss} and Section~\ref{sec:partitions:robust}). The thick solid horizontal line shows the fraction of happy source ASes in the baseline scenario (where no AS is secure) for each destination tier. While in Section~\ref{sec:partitions:robust} we found that most destination tiers have roughly the same number of protectable ASes here we see slightly different trends.

\mypara{1. } Most of the protectable nodes are at stub and SMDG (low-degree non-stub ASes) destinations. The higher-degree AS destinations, \ie Tier 2s, Tier 2s, and CPs, have very few protectable ASes but many more immune ASes as compared to the results we obtained for our original \LP model in Figure~\ref{fig:partitions:dest:sl}.  This is even more apparent for the IXP augmented graph in the \LP{2} model.

\begin{figure}[h]
\begin{center}
\subfigure[]{
\includegraphics [scale = .06]{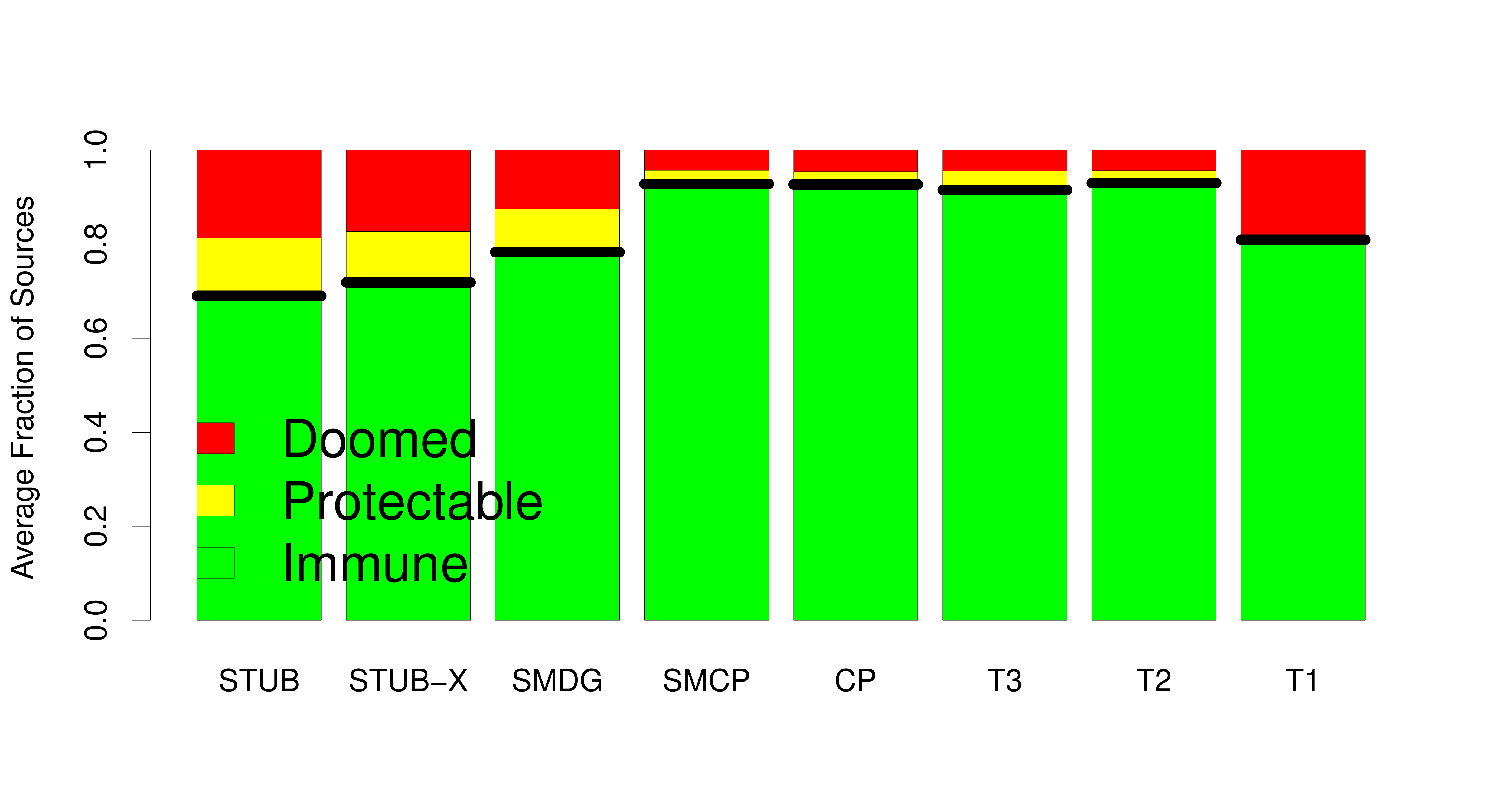} \label{fig:no_ixp_per_dest_sl_stubborn_2}
}
\subfigure[]{
\includegraphics [scale = .06]{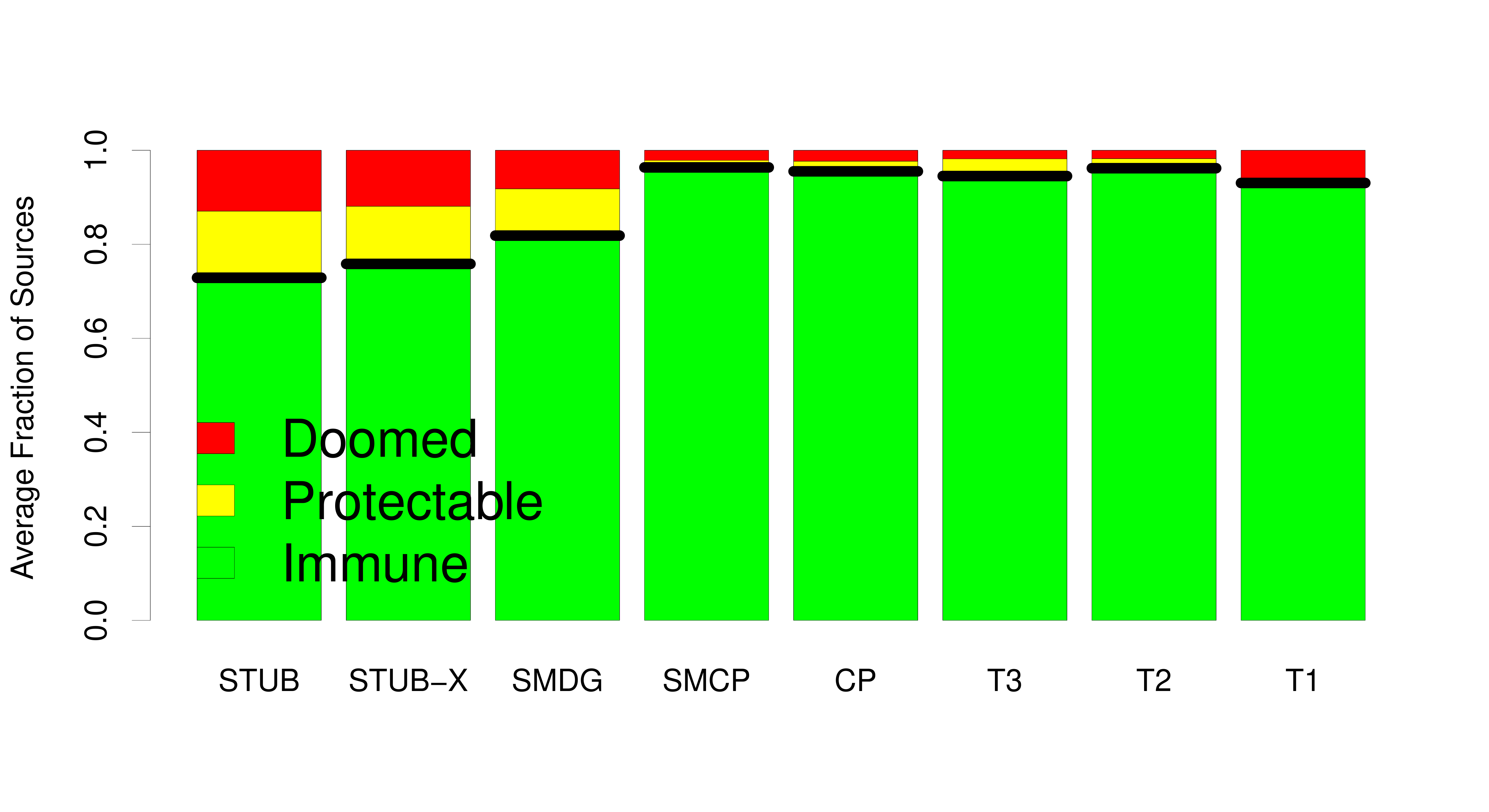} \label{fig:ixp_per_dest_sl_stubborn_2}
}
\subfigure[]{
\includegraphics [scale = .06]{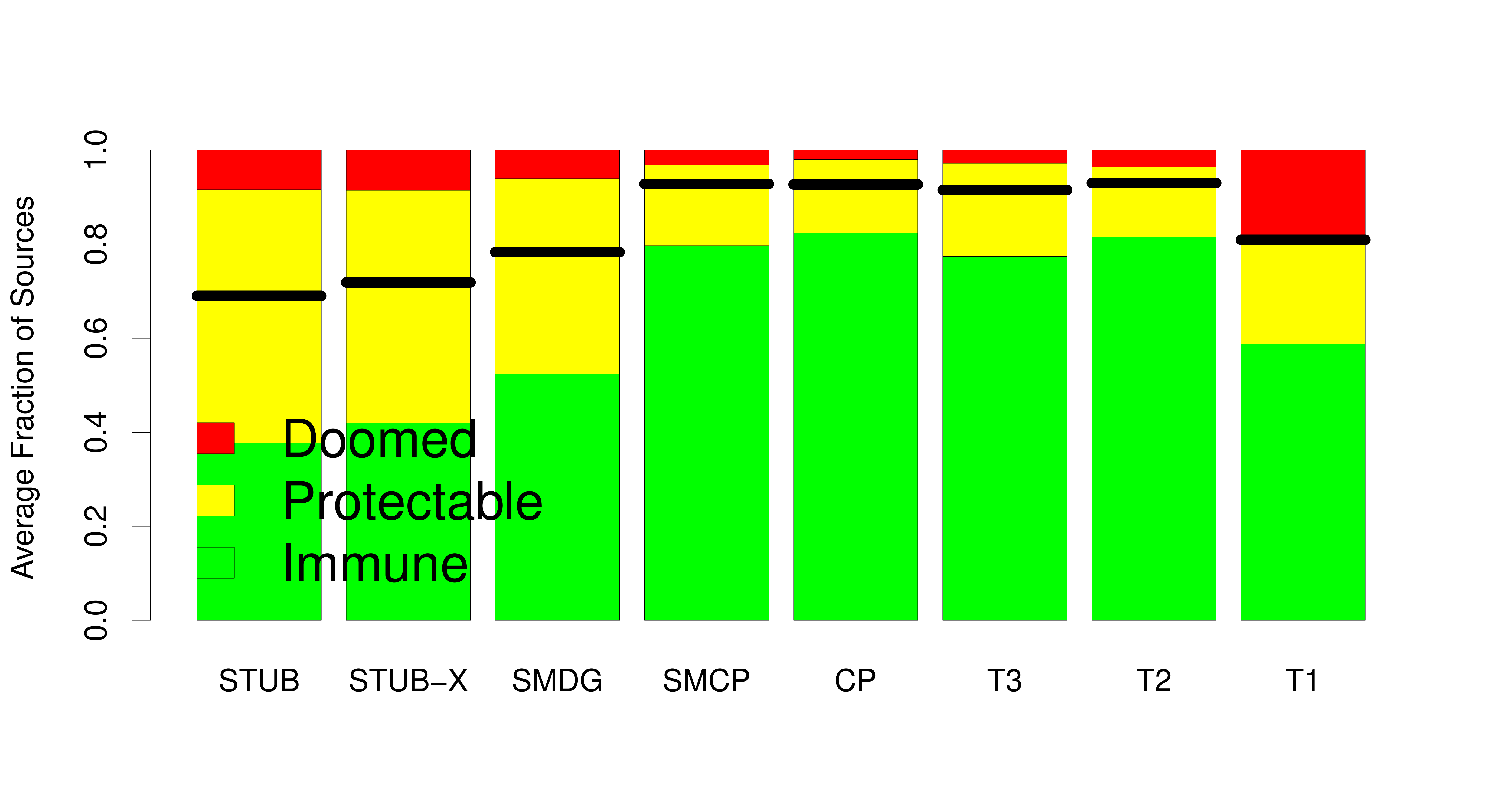} \label{fig:no_ixp_per_dest_ss_stubborn_2}
}
\subfigure[]{
\includegraphics [scale = .06]{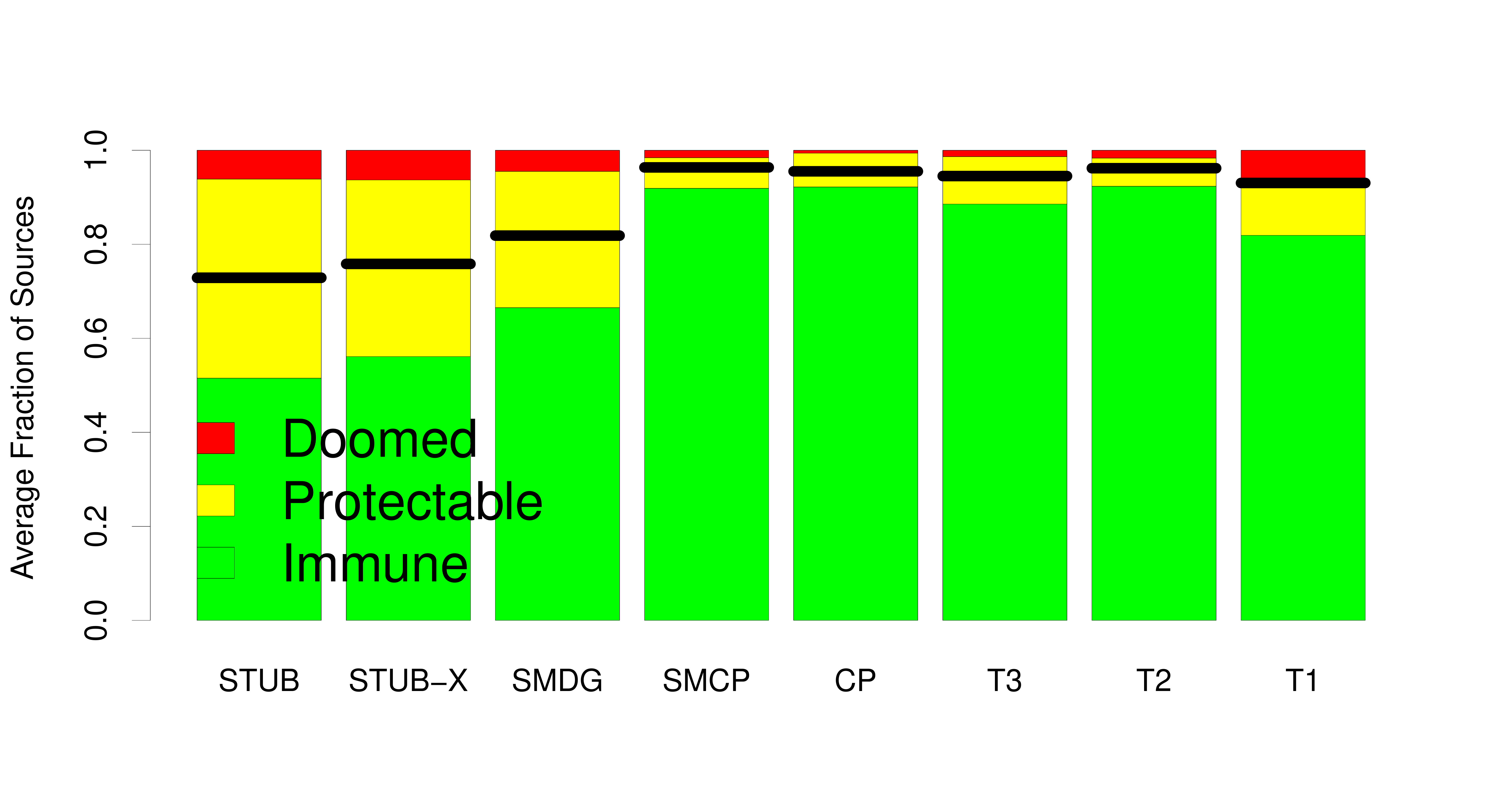} \label{fig:ixp_per_dest_ss_stubborn_2}
}
\caption{Partitions by destination tier for the \LP{2} policy variant.
\subref{fig:no_ixp_per_dest_sl_stubborn_2} UCLA graph, security \third.
\subref{fig:ixp_per_dest_sl_stubborn_2}  IXP-augmented graph, security \third.
\subref{fig:no_ixp_per_dest_ss_stubborn_2} UCLA graph, security \second. \subref{fig:ixp_per_dest_ss_stubborn_2}  IXP-augmented graph, security \second.
}\label{fig:2stubborn:parititions:dest}
\end{center}
\end{figure}

Why is it that high-degree destinations do \emph{not} require protection from S*BGP in the \LP{2} model?  Consider a source AS $s$ that has a long ($>2$ hop) customer route and short ($\leq 2$ hop) peer route to the destination $d$.  In \LP{2}, $s$ will chose the short peer route, so an attacker $m$ that wishes to attract traffic from $s$ must be exactly one hop away from $s$ (so that he can announce the bogus two-hop path ``$m,d$'' directly to $s$, that $s$ will prefer if $m$ is his customer, or if $m$ is a peer that is preferred according to his tiebreak rule). When $m$ is not one hop away from $s$, $s$ is immune. Since $m$ is unlikely to be exactly one-hop away from every source AS that prefers a short peer route in \LP{2} over a long customer route that it would have used in our original routing policy model, we see more immune nodes on average in \LP{2}.  This effect is stronger on the IXP-augmented graph because it contains more peering edges, and therefore more short peering routes.

\mypara{2. } While in Section~\ref{sec:T1suck} we found that most ASes that wish to reach Tier 1 destinations are doomed,  this is no longer the case in the \LP{2} model; while the Tier 1 destinations still do not have quite as many immune ASes as the Tier 2s do, the vast majority of source ASes that wish to reach Tier 1 destinations are immune when security is \third.

What is the reason for this?  Consider the security \second model. Many of the protocol downgrades we saw with the original \LP model resulted from a source AS $s$ preferring (possibly-long) bogus \emph{customer} path to the attacker $m$, over (possibly-short) peer or provider routes to the legitimate destination (\eg Figure~\ref{fig:pda2nd}).  However, in the \LP{2} policy variant, $s$ will only prefer a bogus customer path only if $s$ has no shorter ($\leq 2$ hop) \emph{peer or customer} route to the legitimate destinations; when $s$ has such route, we consider $s$ to be \emph{immune} (\cf Section~\ref{sec:doomedImm}).  For example, while AS~174 in Figure~\ref{fig:pda2nd} was doomed in our original \LP model when security is \second, with the \LP{2} variant and security \second AS~147 is now immune, because it has a one-hop peer route to the legitimate Tier 1 destination!

Our results indicate that this situation is common. Comparing Figure~\ref{fig:2stubborn:parititions:dest} with Figure~\ref{fig:partitions:dest:sl}-\ref{fig:partitions:dest:ss}, suggests that during attacks on Tier 1, 2, and CP destinations, there are many ASes that have short ($\leq2$ hop) peer routes to the legitimate destination $d$, and are therefore choosing those routes instead of long bogus customer routes to the attacker $m$.  Moreover, in the IXP-augmented graph, that are many more ($\approx 4X$) peering edges than in the UCLA graph, which accounts for the increased number of immune nodes we saw for the security \second model in Figure~\ref{fig:2stubborn:parititions}.

While this is good news for the Tier 1s, we point out that in the \LP{2} model this is little need for S*BGP to protect the Tier 1, 2, 3 and CP destinations, since most source ASes that wish to reach these destinations (\ie $>80\%$) are happy in the baseline scenario already!

\fi

\end{document}